\newcommand{\IN}{\mathbb{N}}
\newcommand{\IR}{\mathbb{R}}
\DeclareMathOperator*{\argmin}{argmin}
\newcommand{\fracpart}{\mathsf{frac}}
\newcommand{\automaton}{\mathcal{A}}
\newcommand{\game}{\mathcal{G}}
\newcommand{\player}{\mathcal{P}}
\newcommand{\plays}{\mathsf{Plays}}
\newcommand{\hist}{\mathsf{Hist}}
\newcommand{\last}{\mathsf{last}}
\newcommand{\outcome}{\mathsf{Outcome}}
\newcommand{\dpastates}{Q}
\newcommand{\dpatransitions}{\mathsf{up}}
\newcommand{\fmmealy}{\mathcal{M}}
\newcommand{\fmstates}{\mathfrak{M}}
\newcommand{\nextmove}{\alpha_\mathsf{mov}}
\newcommand{\update}{\alpha_\mathsf{up}}
\newcommand{\reset}{\mathsf{reset}}
\newcommand{\delayfunc}{\mathsf{delay}}
\newcommand{\action}{\mathsf{action}}
\newcommand{\regions}{\mathsf{Reg}}
\newcommand{\tick}{\mathsf{tick}}
\newcommand{\globalclock}{\gamma}
\newcommand{\clockequiv}{\mathrel{\equiv_\automaton}}
\newcommand{\wc}{\mathsf{WC}}
\newcommand{\true}{\mathsf{true}}
\newcommand{\false}{\mathsf{false}}
\newcommand{\bigo}{\mathcal{O}}
\newcommand{\numdimensions}{\mathsf{K}}
\newcommand{\maxpriority}{\mathsf{D}}
\newcommand{\init}{\mathsf{init}}
\newcommand{\jointdestination}{\mathsf{JD}}
\newcommand{\delay}{\delta}
\title{Timed Games with Bounded Window Parity Objectives}
\author{James C.~A.~Main}{F.R.S.-FNRS \& UMONS -- Université de Mons, Belgium}{}{}{F.R.S.-FNRS Research Fellow.}
\author{Mickael Randour}{F.R.S.-FNRS \& UMONS -- Université de Mons, Belgium}{}{}{F.R.S.-FNRS Research Associate, member of the TRAIL Institute.}
\author{Jeremy Sproston}{Universit\`{a} degli Studi di Torino, Italy}{}{}{}
\authorrunning{J.~C.~A.~Main, M.~Randour and J.~Sproston} 
\keywords{window objectives, timed automata, timed games, parity games} 
\begin{document}

\maketitle

\begin{abstract}
  The window mechanism, introduced by Chatterjee et al.~\cite{Chatterjee0RR15}
  for mean-payoff and total-payoff objectives in two-player turn-based games
  on graphs, refines long-term objectives with time bounds. This mechanism
  has proven useful in a variety of
  settings~\cite{DBLP:journals/corr/BruyereHR16,DBLP:journals/lmcs/BrihayeDOR20}, and most
  recently in timed systems~\cite{MRS21}.

  In the timed setting, the so-called fixed timed window parity objectives
  have been studied. A fixed timed window parity objective is defined with
  respect to some time bound
  and requires that, at all times, we witness a time frame, i.e., a
  window, of size less than the fixed bound in which the smallest priority
  is even. In this work, we focus on the bounded timed window parity objective.
  Such an objective is satisfied
  if there exists some bound for which the fixed objective is satisfied.
  The satisfaction of bounded objectives is robust to modeling choices
  such as constants appearing in constraints, unlike fixed objectives, for which
  the choice of constants may affect the satisfaction for a given bound.

  We show that verification of bounded timed window objectives in timed automata
  can be performed in polynomial space, and that timed games with these
  objectives can be solved in exponential time, even for multi-objective
  extensions. This matches the complexity classes of the fixed case.
  We also provide a comparison of the different variants of window parity
  objectives.
\end{abstract}

\section{Introduction}\label{section:introduction}
\subparagraph*{Real-time systems.} \textit{Timed automata}~\cite{AlurD94} 
are a means of modeling systems in which the passage of time is
critical.
A timed automaton is a finite automaton extended with a set of
\textit{real-valued} variables called \textit{clocks}. All clocks of a timed
automaton increase at the same rate and measure the elapse of time in
a continuous fashion. Clocks constrain transitions in
timed automata and can be reset on these transitions.

Timed automata provide a formal setting for the verification of real-time
systems~\cite{AlurD94,DBLP:books/daglib/0020348}. When analyzing timed automata,
we usually exclude some unrealistic behaviors.
More precisely, we ignore
\textit{time-convergent paths}, i.e., infinite paths in which the total
elapsed time is bounded.
Even though timed automata induce uncountable transition systems, many
properties can be checked using the \textit{region abstraction}, a finite
quotient of the transition system.

Timed automata can also be used to design correct-by-construction
controllers for real-time systems. To this end, we model the interaction of
the system and its uncontrollable environment as a timed automaton
game~\cite{MalerPS95},
or more simply a \textit{timed game}. A timed game is a two-player game played
on a timed automaton by the system and its environment for an infinite number
of rounds. At each round, both players propose a real-valued delay and an
action, and the play progresses following the fastest move.

The notion of winning in a timed game must take time-convergence in account;
following~\cite{AlfaroFHMS03}, we declare as winning the plays that are either
\textit{time-divergent and satisfy the objective of the player}, or that
are \textit{time-convergent and the player is not responsible for convergence}.

\subparagraph*{Parity conditions.} Parity conditions are a canonical way of
specifying $\omega$-regular conditions, such as safety and liveness. A parity
objective is defined from a priority function, which assigns a
non-negative integer to each location of a timed automaton. The
\textit{parity objective} requires that the smallest priority witnessed
infinitely often is even.

\subparagraph*{The window mechanism.} A parity objective requires that
for all odd priorities seen infinitely often, there is some smaller even
priority seen infinitely often. However, the parity objective does not enforce
timing constraints; the parity objective can be satisfied despite there
being arbitrarily large delays between odd priorities and smaller even
priorities. Such behaviors may be undesirable, e.g., if odd priorities model
requests in a system and even priorities model responses.

The window mechanism was introduced by Chatterjee et al.~for mean-payoff games
in graphs~\cite{Chatterjee0RR15} and later applied to parity games in
graphs~\cite{DBLP:journals/corr/BruyereHR16} and mean-payoff and parity
objectives in Markov decision processes~\cite{DBLP:journals/lmcs/BrihayeDOR20}.
It is a means of reinforcing the parity objective
with \textit{timing constraints}.
A \textit{direct fixed timed window parity objective}
for some fixed time bound requires that at all times, we witness a
\textit{good window}, i.e., a time frame of size less than the fixed bound
in which the smallest priority is even. In other words, this objective requires
that the parity objective be locally satisfied at all times, where the notion
of locality is fixed in the definition.
This window parity objective and a prefix-independent variant
requiring good windows from some point on were studied in~\cite{MRS21}.

The main focus of this article is another variant of timed window parity
objectives called \textit{direct bounded timed window parity objectives}, which
extend the bounded window parity objectives of~\cite{DBLP:journals/corr/BruyereHR16}. This
objective is satisfied if and only if there exists some time bound for which
the direct fixed objective is satisfied. While this objective also requires that
the parity objective be locally satisfied at all times, the notion of
locality is \textit{not fixed a priori}. In particular, unlike the fixed
objective, its satisfaction is robust to modeling choices such as the choice
of constants constraining transitions, and depends only on the high-level
behavior of
the system being modeled. In addition to this direct objective, we also
consider a prefix-independent variant, the
\textit{bounded timed window parity objective}, which requires that some suffix
satisfies a direct bounded objective.

\subparagraph*{Contributions.} We study conjunctions of (respectively direct)
bounded timed window parity objectives in the setting of timed
automata and of timed games. We show that
checking that all time-divergent paths of a timed automaton satisfy a
conjunction of (respectively direct) bounded timed window parity objectives
can be done in \textsf{PSPACE} (Theorem~\ref{theorem:verification:complexity}).
We also show
that if all time-divergent paths of a timed automaton satisfy a (respectively
direct) bounded timed window parity objective, then there exists a bound
for which the corresponding fixed objective is satisfied
(Corollaries~\ref{corollary:direct:uniformity}
and~\ref{corollary:indirect:uniformity}).

In timed games, we show that in the direct case, the set of winning states can
be computed in \textsf{EXPTIME} (Theorem~\ref{theorem:games:direct:complexity})
by means of a timed game with an $\omega$-regular request-response
objective~\cite{DBLP:conf/wia/WallmeierHT03,DBLP:conf/lata/ChatterjeeHH11}. We show that, assuming a global
clock that cannot be reset, finite-memory strategies suffice to win, and
if a winning strategy for a direct bounded objective exists, there exists
a finite-memory winning strategy that is also winning for a direct fixed
objective (Theorem~\ref{theorem:games:direct:winning}).
In the prefix-independent case, we provide a fixed-point algorithm
to compute the set of winning states that runs in \textsf{EXPTIME}
(Theorem~\ref{theorem:games:indirect:complexity}). We
infer from the correctness proof that, assuming a global clock, finite-memory
strategies suffice for winning and if a winning strategy exists, then there
exists a finite-memory winning strategy that is also winning for some fixed
objective (Theorem~\ref{theorem:games:indirect:winning}).

We complement all membership results above with lower bounds and establish
\textsf{PSPACE}-completeness for timed automata-related problems and
\textsf{EXPTIME}-completeness for timed games-related problems
(Theorem~\ref{theorem:complexity:completeness}). 

\subparagraph*{Comparison.} Window objectives strengthen classical objectives
with timing constraints; they provide
\textit{conservative approximations} of these objectives
(e.g.,~\cite{Chatterjee0RR15,DBLP:journals/corr/BruyereHR16,DBLP:journals/lmcs/BrihayeDOR20}).
The complexity of window objectives, comparatively to that of the related
classical objective, depends on whether one considers a single-objective or
multi-objective setting.
In turn-based games on graphs, window objectives provide
\textit{polynomial-time} alternatives to the classical
objectives~\cite{Chatterjee0RR15,DBLP:journals/corr/BruyereHR16} in the
single-objective setting, despite, e.g., turn-based parity games on graphs
not being known to be solvable in polynomial time (parity games were recently
shown
to be
solvable in quasi-polynomial time~\cite{CaludeJKL017}). On the other hand, in the
multi-objective setting, the complexity is higher than that of the classical
objectives; for instance, solving a turn-based game with a conjunction of
fixed (respectively bounded) window parity objectives is
\textsf{EXPTIME}-complete~\cite{DBLP:journals/corr/BruyereHR16}, whereas solving
games with conjunctions of parity objectives is \textsf{co-NP}
complete~\cite{DBLP:conf/fossacs/ChatterjeeHP07}. In the timed setting, we
establish that solving timed games with conjunctions of bounded timed window
parity objectives is \textsf{EXPTIME}-complete, i.e., dense time comes for
free, similarly to the fixed case in timed games~\cite{MRS21}.

Timed games with classical parity objectives can be solved in exponential
time~\cite{AlfaroFHMS03,ChatterjeeHP11}, i.e., the complexity class of solving
timed games with window parity objectives matches that of solving timed games
with classical parity objectives. Timed games with parity objectives can
be solved by means of a reduction to an untimed parity game played on a graph
polynomial in the size of the region abstraction and the number of
priorities~\cite{ChatterjeeHP11}. However, most algorithms for games on graphs
with parity objectives suffer from a blow-up in complexity due to the number of
priorities. Timed window parity objectives provide an alternative to parity
objectives that bypasses this blow-up; in particular, we show in this paper that
\textit{timed games with a single bounded timed window objective can be solved
  in time polynomial in the size of the region abstraction and the number of
  priorities}.

In timed games, we show that winning for a (respectively direct) bounded timed
window parity objective is equivalent to winning for a (respectively direct)
fixed timed window parity objective with some sufficiently large bound that
depends on the number of priorities, number of objectives and the size of the
region abstraction. Despite the fact that this bound can be directly
computed (Theorems~\ref{theorem:games:direct:winning}
and~\ref{theorem:games:indirect:winning}), solving timed games with
(respectively direct) fixed timed window parity objectives for a certain bound
takes time that is polynomial in the size of the region abstraction, the
number of priorities and the \textit{fixed bound}. This bound may be large;
the algorithms we provide
for timed games with (respectively direct) bounded timed window parity
objectives avoid this additional contribution to the complexity.

\subparagraph*{Related work.} The window mechanism has seen numerous extensions
in addition to the previously mentioned works, e.g.,~\cite{DBLP:conf/csl/BaierKKW14,DBLP:conf/rp/Baier15,DBLP:conf/concur/BrazdilFKN16,DBLP:conf/concur/BruyereHR16,DBLP:journals/acta/HunterPR18,DBLP:conf/fsttcs/0001PR18,DBLP:conf/fsttcs/BordaisGR19}. Window parity objectives, especially bounded variants, are
closely related to the notion of finitary $\omega$-regular games,
e.g.,~\cite{DBLP:journals/tocl/ChatterjeeHH09}, and the semantics of
\textsc{prompt-ltl}~\cite{DBLP:journals/fmsd/KupfermanPV09}.
The window mechanism can be used to ensure a certain form of (local) guarantee over paths; different techniques have been considered in stochastic models~\cite{DBLP:journals/jcss/BrazdilCFK17,DBLP:journals/iandc/BruyereFRR17,DBLP:conf/icalp/BerthonRR17}.
Timed automata have numerous extensions, e.g., hybrid systems
(e.g.,~\cite{DBLP:journals/corr/abs-2001-04347} and references therein) and
probabilistic timed automata (e.g.,~\cite{DBLP:journals/fmsd/NormanPS13});
the window mechanism could prove useful in these richer settings.
Finally, we recall that game models provide a framework for the synthesis of
correct-by-construction controllers~\cite{rECCS}.

\subparagraph*{Outline.} Section~\ref{section:prelim} presents all preliminary
notions. Window objectives, relations between them and a useful property
of bounded window objectives are presented in
Section~\ref{section:objectives}. The verification of bounded window objectives
in timed automata is studied in Section~\ref{section:verification}.
Section~\ref{section:games} presents algorithms for timed games with bounded
window objectives. Lower bounds for completeness of the verification and
realizability problems for bounded window objectives are provided in
Section~\ref{section:completeness}. Finally, in
Section~\ref{section:comparison}, we compare the untimed and timed settings,
and the fixed and bounded objectives.
Appendix~\ref{appendix:strategies} expands
upon the preliminaries and discusses winning strategies in timed games with
$\omega$-regular objectives.

\section{Preliminaries}\label{section:prelim}

\subparagraph*{Notation.} We denote the set of non-negative real numbers by
$\IR_{\geq 0}$, and the set of non-negative integers by $\IN$.
Given some non-negative real number $x$, we write
$\lfloor x\rfloor$ for the integral part of $x$ and
$\fracpart(x) = x - \lfloor x\rfloor$ for its fractional part.
Given two sets $A$ and $B$, we let $2^A$ denote the power set of $A$ and
$A^B$ denote the section of functions $B\to A$.

\subparagraph*{Timed automata.}
A clock variable, or \textit{clock}, is a real-valued variable.
Let $C$ be a set of clocks. A \emph{clock constraint} over
$C$ is a conjunction of formulae of the form $x \sim c$ with
$x\in C$, $c\in \IN$, and
$\sim\in \{\leq,\geq, >, <\}$.  We write $x=c$ as shorthand for the clock
constraint $x\geq c\land x\leq c$. Let $\Phi(C)$ denote the set
of clock constraints over $C$.

We refer to functions $v\in \IR_{\geq 0}^C$ as
\emph{clock valuations} over $C$. A clock valuation $v$ over a set
$C$ of clocks satisfies a clock constraint of the form  $x\sim c$ if
$v(x) \sim c$ and $v$ satisfies
a conjunction $g\land h$ of two clock constraints $g$ and $h$
if it satisfies both
$g$ and $h$. Given a clock constraint $g$ and clock valuation $v$, we
write $v\models g$ if $v$ satisfies $g$.

For a clock valuation $v$ and $\delay \geq 0$, we let
$v + \delay$ be the valuation defined by $(v + \delay)(x) = v(x) + \delay $ for
all $x\in C$. For any valuation $v$ and $D\subseteq C$,
we define $\mathsf{reset}_D(v)$ to be the valuation agreeing with
$v$ for clocks in $C\setminus D$ and that assigns $0$ to clocks in $D$.
We denote by $\mathbf{0}^C$ the zero valuation, assigning 0 to all clocks in
$C$.

A \emph{timed automaton} (TA) is a tuple $(L, \ell_\init, C, \Sigma, I, E)$ where
$L$ is a finite set of \emph{locations}, $\ell_\init\in L$ is an initial
location, $C$ a finite set of \emph{clocks} containing a special
clock $\globalclock$ which keeps track of the total time elapsed,
$\Sigma$ a finite set of actions,
$I\colon L\to \Phi(C)$ an \emph{invariant} assignment function and
$E\subseteq L\times \Phi(C)\times \Sigma\times 2^{C\setminus\{\globalclock\}}\times L$
a finite edge relation.
We only consider deterministic timed automata, i.e., we assume that
in any location $\ell$, there are no two different outgoing edges
$(\ell, g_1, a, D_1, \ell_1)$ and $(\ell, g_2, a, D_2, \ell_2)$
sharing the same action
such that the conjunction $g_1\land g_2$ is satisfiable. For an edge
$(\ell, g, a, D, \ell')$, the clock constraint $g$ is called the
\textit{guard} of the edge.

A TA $\automaton = (L, \ell_\init, C, \Sigma, I, E)$
gives rise to an uncountable transition system
$\mathcal{T}(\automaton) = (S, s_\init, M, \to)$
with the state space $S = L\times \IR_{\geq 0}^C$,
the initial state
$s_\init= (\ell_\init, \mathbf{0}^C)$, set of transition system
actions $M = \IR_{\geq 0}\times
(\Sigma\cup\{\bot\})$ and the
transition relation $\to\,\subseteq S\times M\times S$ defined as follows:
for any action $a\in\Sigma$ and
delay $\delay\geq 0$, we have that $((\ell, v), (\delay, a), (\ell', v'))\in\,\to$ if
and only if there is some edge $(\ell, g, a, D, \ell')\in E$
such that $v + \delay\models g$, $v' = \mathsf{reset}_{D}(v + \delay)$,
$v + \delay\models I(\ell)$ and $v'\models I(\ell')$; for any delay
$\delay\geq 0$, $((\ell, v)(\delay, \bot),(\ell, v + \delay))\in\,\to$
if $v + \delay\models I(\ell)$.
Let us note that the satisfaction set of clock constraints is convex: it
is described by a conjunction of inequalities. Whenever
$v\models I(\ell)$, the above condition $v + \delay\models I(\ell)$
(the invariant holds after the delay)
is equivalent to requiring $v + \delay'\models I(\ell)$ for all
$0\leq \delay'\leq \delay$
(the invariant holds at each intermediate time step).

A \textit{move} is any pair in $\IR_{\geq 0}\times (\Sigma\cup\{\bot\})$
(i.e., an action in the transition system).
For any move $m=(\delay, a)$ and states $s$, $s'\in S$, we write
$s\xrightarrow{m}s'$ or $s\xrightarrow{\delay, a}s'$ as shorthand for
$(s,m,s')\in\,\to$.
Moves of the form $(\delay, \bot)$ are called \textit{delay moves}.
For any move $m=(\delay, a)$, we let $\delayfunc(m) = \delta$.
We say a move $m$ is
enabled in a state $s$ if there is some $s'$ such that $s\xrightarrow{m}s'$.
There is at most one successor per move in a state, as we do not allow two
guards on edges labeled by the same action to be simultaneously satisfied.

A \textit{path} in a TA $\automaton$ is a finite or infinite
sequence
$s_0m_0s_1\ldots\in S(MS)^*\cup (SM)^\omega$ such that for all $j\in\IN$, $s_j$
is a state of $\mathcal{T}(\automaton)$ and
$s_{j}\xrightarrow{m_{j}}s_{j+1}$ is a
transition in $\mathcal{T}(\automaton)$. A path is \textit{initial}
if $s_0=s_{\init}$. For clarity, we write
$s_0\xrightarrow{m_0}s_1\xrightarrow{m_1}\cdots$ instead
of $s_0m_0s_1\ldots$.

A state $s$ is said to be \textit{reachable from a state} $s'$
if there exists a path from $s'$ to $s$. Similarly, a set of states
$T\subseteq S$ is said to be reachable from some state $s'$ if there is
a path from $s'$ to a state in $T$. We say that a state is \textit{reachable} if
it is reachable from the initial state.

An infinite path
$\pi = (\ell_0, v_0)\xrightarrow{m_0}(\ell_1, v_1)\xrightarrow{m_1}\ldots$
is \textit{time-divergent} if the sequence $(v_j(\globalclock))_{j\in\IN}$
is not bounded from above. A path that is not time-divergent is called
\textit{time-convergent}; time-convergent paths are traditionally
ignored in analysis
of timed automata \cite{DBLP:journals/iandc/AlurCD93,AlurD94} as they model unrealistic behavior.
This includes ignoring
\emph{Zeno paths}, which are time-convergent paths along which infinitely
many actions appear.

\subparagraph*{Regions.} The transition system induced by a TA is infinite.
Qualitative properties of TAs can nonetheless be analyzed using the
region abstraction~\cite{AlurD94}, a quotient of the transition
system by an equivalence relation of finite index.
Fix a TA $\automaton= (L, \ell_\init, C, \Sigma, I, E)$.
For each clock $x\in C$, let $c_x$ denote the largest constant to which
$x$ is compared to in guards and invariants of $\automaton$.

We define an equivalence relation over clock valuations of $C$: we say that
two clock valuations $v$ and $v'$ over $C$
are \textit{clock-equivalent} for $\automaton$, denoted by
$v\clockequiv v'$,
if the following properties are satisfied:
(i) for all clocks $x\in C$, $v(x) > c_x$ if and only if $v'(x) > c_x$;
(ii) for all clocks $x\in\{z\in C\mid v(z)\leq c_z\}$,
$\lfloor v(x) \rfloor = \lfloor v'(x) \rfloor$;
(iii) for all clocks $x, y\in\{z\in C\mid v(z)\leq c_z\}\cup\{\globalclock\}$,
$v(x)\in\IN$ if and only if $v'(x)\in\IN$, and
$\fracpart(v(x))\leq \fracpart(v(y))$ if and only if
$\fracpart(v'(x))\leq \fracpart(v'(y))$. When $\automaton$ is clear from the
context, we say that two valuations are clock-equivalent rather than
clock-equivalent for $\automaton$.

An equivalence class for this
relation is referred to as a \textit{clock region}. We denote the
equivalence class for $\clockequiv$ of a clock valuation $v$ as $[v]$.
We let $\regions$ denote the set of all clock regions.
The number of clock regions is finite, and exponential in the number of clocks
and the encoding of the constants $c_x$, $x\in C$. More precisely, we have the bound
$|\regions|\leq |C| ! \cdot 2^{|C|}\cdot \prod_{x\in C} (2c_x + 1)$.

We extend the equivalence defined above to states as well.
We say that two states $s= (\ell, v)$ and $s' = (\ell', v')$ are
\textit{state-equivalent}, denoted $s\clockequiv s'$, whenever
$\ell = \ell'$ and $v\clockequiv v'$. An equivalence class for this relation
is referred to as a \textit{state region}. Given some state $s\in S$, we write
$[s]$ for its equivalence class. We identify the set of state regions with
the set $L\times \regions$ and sometimes denote state regions
as pairs $(\ell, R)\in L\times\regions$ in the sequel.

The satisfaction of clock constraints that appear in $\automaton$ is uniform
inside of a clock region.
For a clock region $R \in \regions$ and a clock constraint $g$ of
$\mathcal{A}$, we write $R \models g$ to denote that
$v \models g$ for some $v \in R$.
This does not hold for arbitrary
clock constraints, e.g., it does not hold for clock constraints involving
constants larger than those in $\automaton$.
The reset operator also respects regions, in the sense that for
any clock valuation $v$ and $D\subseteq C$, $[\reset_D(v)] = \{\reset_D(v')\mid v'\in [v]\}$.
Let $R$, $R'$ be two clock regions. We say that $R'$ is a
\textit{successor region} of $R$ if for all valuations $v\in R$, there exists
some delay $\delay_v\geq 0$ such that $v + \delay_v\in R'$.

We now have all of the notions required to define the region abstraction
of $\mathcal{T}(\automaton)$. The \textit{region abstraction} of
$\mathcal{T}(\automaton)$ is the finite transition system
$(L \times \regions, [s_\init], \{\tau\}, \to')$ where $L\times \regions$ is
the state space, elements of which are state regions, with the state region
of the initial state as its initial state, a unique move $\tau$
(we abstract the actions of the TA away), and a transition relation $\to'$
defined as follows. For any two state regions
$(\ell, R)$, $(\ell', R')\in L\times\regions$,
$(\ell, R)\xrightarrow{\tau}' (\ell', R')$ holds if and only if one of the
two following conditions hold: (i) there exists some edge
$(\ell, g, a, D, \ell')\in E$ and some successor region $R_{\textsf{succ}}$ of
$R$ such that $R_{\textsf{succ}}\models g\land I(\ell)$,
$R' = \{\reset_D(v) \mid v\in R_{\textsf{succ}}\}$ and $R'\models I(\ell')$, or
(ii) $\ell = \ell'$ and $R'$ is a successor region of $R$ such that
$R'\models I(\ell)$. These conditions respectively abstract transitions that
use edges and delay transitions in $\mathcal{T}(\automaton)$.

Any (finite or infinite) path $s_0\xrightarrow{m_0} s_1\xrightarrow{m_1}\ldots$
of $\mathcal{T}(\automaton)$ induces a path
$[s_0]\xrightarrow{\tau}'[s_1]\xrightarrow{\tau}'\ldots$ in the region
abstraction. Conversely, for any path
$(\ell_0, R_0)\xrightarrow{\tau}'(\ell_1, R_1)\xrightarrow{\tau}'\ldots$
and any $v_0\in R_0$, one can find a path of
$\mathcal{T}(\automaton)$
$(\ell_0, v_0)\xrightarrow{m_0}(\ell_1, v_1)\xrightarrow{m_1}\ldots$
such that $v_n\in R_n$ for all $n\in\IN$. Due to this relation between paths,
qualitative properties that depend only on locations or regions can be
verified using the region abstraction.

\subparagraph*{Priorities.} A \textit{priority function} is a function
$p\colon L\to \{0, \ldots, \maxpriority - 1\}$ with $\maxpriority\leq |L| + 1$.
We use priority functions
to express parity objectives. A
\textit{$\numdimensions$-dimensional priority function}
is a function $p\colon L\to \{0, \ldots, \maxpriority-1\}^\numdimensions$
which assigns vectors of priorities to locations.
Given a $\numdimensions$-dimensional priority function $p$ and
a dimension $k\in \{1, \ldots, \numdimensions\}$, we write $p_k$ for the
priority function given by $p$ on dimension $k$.

\subparagraph*{Timed games.}
We consider two player games played on TAs. We refer to the players
as player 1 ($\player_1$) for the system and player 2 ($\player_2$) for
the environment.
We use the notion of timed automaton games of \cite{AlfaroFHMS03}.

A \textit{timed} (automaton) \textit{game} (TG) is a tuple
$\game = (\automaton,\Sigma_1, \Sigma_2)$ where
$\automaton= (L, \ell_\init, C, \Sigma, I, E)$ is a TA and
$(\Sigma_1, \Sigma_2)$ is a partition of $\Sigma$.
We refer to actions in $\Sigma_i$ as $\player_i$ actions for $i\in\{1, 2\}$.

Recall that a move is a pair $(\delay, a)\in\IR_{\geq 0}\times (\Sigma\cup\{\bot\})$.
Let $S$ denote the set of states of $\mathcal{T}(\automaton)$.
In each state $s= (\ell, v)\in S$, the moves available
to $\player_1$ are the elements of the set $M_1(s)$ where
\[M_1(s) = \big\{ (\delay,a) \in \mathbb{R}_{\geq 0} \times (\Sigma_1 \cup \{ \bot \}) \mid \exists s', s \xrightarrow{\delay,a} s'  \big\}\]
contains moves with $\player_1$ actions and delay moves that are
enabled in $s$. The set $M_2(s)$ is defined
analogously with $\player_2$ actions. We write $M_1$ and $M_2$ for
the set of all moves of $\player_1$ and $\player_2$ respectively.

At each state $s$ along a play, both players simultaneously
select a move $m^{(1)}\in M_1(s)$ and $m^{(2)}\in M_2(s)$.
Intuitively, the fastest player gets to act and in case of
a tie, the move is chosen non-deterministically. This is
formalized by the
\emph{joint destination function} $\jointdestination: S\times M_1\times M_2\to 2^S$,
defined by
\[\jointdestination(s, m^{(1)}, m^{(2)}) = \begin{cases}
    \{s' \in S\mid s \xrightarrow{m^{(1)}} s'\} &
      \text{if } \delayfunc(m^{(1)}) < \delayfunc(m^{(2)}) \\
    \{s' \in S\mid s \xrightarrow{m^{(2)}} s'\} & \text{if } \delayfunc(m^{(1)}) > \delayfunc(m^{(2)}) \\
    \{s' \in S\mid s \xrightarrow{m^{(i)}} s', i=1, 2\} & \text{if } \delayfunc(m^{(1)}) = \delayfunc(m^{(2)}).
  \end{cases}\]
For $m^{(1)}=(\delay^{(1)}, a^{(1)})\in M_1$ and
$m^{(2)}=(\delay^{(2)}, a^{(2)})\in M_2$, we write
$\delayfunc(m^{(1)}, m^{(2)})=\min\{\delay^{(1)}, \delay^{(2)}\}$ to denote the delay occurring when
$\player_1$ and $\player_2$ play $m^{(1)}$ and $m^{(2)}$ respectively.

A play is defined similarly to an infinite path:  a \textit{play} is
an infinite sequence of the form
$s_0(m_0^{(1)}, m_0^{(2)})s_1(m_1^{(1)}, m_1^{(2)})\ldots\in
(S(M_1\times M_2))^\omega $ where
for all indices $j\in\IN$, $m_j^{(i)}\in M_i(s_j)$ for $i\in\{1, 2\}$ and
$s_{j+1}\in \jointdestination(s_{j+1}, m_{j+1}^{(1)}, m_{j+1}^{(2)})$. A
\textit{history} is a finite prefix of a play ending in a state.
A play or history $s_0(m_0^{(1)}, m_0^{(2)})s_1\ldots$  is \textit{initial} if
$s_0=s_{\init}$. For any history
$h = s_0(m_0^{(1)}, m_0^{(2)})\ldots (m_{n-1}^{(1)}, m_{n-1}^{(2)})s_n$, we
set $\last(h) = s_n$. For a play $\pi = s_0(m_0^{(1)}, m_0^{(2)})s_1\ldots$,
we write $\pi_{|n} = s_0(m_0^{(1)}, m_0^{(2)})\ldots (m_{n-1}^{(1)}, m_{n-1}^{(2)})s_n$.
Plays of $\game$ follow paths of $\automaton$. For a play, there may be several such paths:
if at some point of the play both players use a move with the same delay and
successor state, either move can label the transition in a matching path.

Similarly to paths, a play
$\pi=(\ell_0, v_0)(m_0^{(1)}, m_0^{(2)})\cdots$
is \textit{time-divergent} if and only if $(v_j(\globalclock))_{j\in\IN}$
is not bounded from above. Otherwise, we say a play is \textit{time-convergent}.
We define the following sets:
$\plays(\game)$ for the set of plays of $\game$;
$\hist(\game)$ for the set of histories of $\game$;
$\plays_\infty(\game)$ for the set of time-divergent plays of $\game$.
We also write $\plays(\game, s)$ to denote plays starting in state $s$ of
$\mathcal{T}(\automaton)$.

We will deal with objectives that require properties in the limit.
For this purpose, we introduce a notation for suffixes of plays.
For any $n\in\IN$ and any play
$\pi = s_0(m_0^{(1)}, m_0^{(2)})s_1\ldots\in \plays(\game)$,
we let $\pi_{n\to} = s_n(m_n^{(1)}, m_n^{(2)})s_{n+1} \ldots$ denote
the suffix of $\pi$ starting at index $n$.

\subparagraph*{Strategies.} A \textit{strategy} for $\player_i$ is a function describing
which move a player should use based on a history.
Formally, a strategy for $\player_i$ is a function
$\sigma_i\colon \hist(\game)\to M_i$ such that for all
$\pi\in\hist(\game)$, $\sigma_i(\pi)\in M_i(\last(\pi))$. This last
condition requires that each move given by a strategy
be enabled in the last state of a play.

A play or history $s_0(m_0^{(1)}, m_0^{(2)})s_1\ldots$ is said to be consistent
with a $\player_i$-strategy $\sigma_i$ if for all indices $j$,
$m_j^{(i)} = \sigma_i(\pi_{|j})$. Given a $\player_i$ strategy $\sigma_i$,
we define $\outcome_i(\sigma_i)$ (resp.~$\outcome_i(\sigma_i, s)$)
to be the set of plays (resp.~set of plays starting in state $s$)
consistent with $\sigma_i$.

In general, strategies can exploit full knowledge of the past, and need not
admit some finite representation. In the sequel, we focus on a subclass of
finite-memory strategies. A strategy is a finite-memory strategy if it can be
encoded by a finite Mealy machine, i.e., a deterministic automaton with outputs.
A \textit{Mealy machine} (for a strategy of $\player_1$) is a tuple
$\fmmealy  = (\fmstates, \mathfrak{m}_\init, \update, \nextmove)$
where $\fmstates$ is a finite set of states, $\mathfrak{m}_\init\in\fmstates$
is an initial state, $\update\colon\fmstates\times S\to\fmstates$ is the memory
update function and $\nextmove\colon\fmstates\times S\to M_1$ is the
next-move function.

Let $\fmmealy  = (\fmstates, \mathfrak{m}_\init, \update, \nextmove)$ be a Mealy
machine. We
define the strategy induced by $\fmmealy$ as follows. Let $\varepsilon$
denote the empty word. We first define the iterated update function
$\update^*\colon S^*\to \fmstates$ inductively as
$\update^*(\varepsilon) = \mathfrak{m}_\init$ and for any
$s_0\ldots s_n\in S^*$, we let
$\update^*(s_0\ldots s_n) = \update(\update^*(s_0\ldots s_{n-1}), s_n)$.
The strategy $\sigma$ induced by $\fmmealy$ is defined by
$\sigma(h) = \nextmove(\update^*(s_0\ldots s_{n-1}), s_n)$ for any
history
$h = s_0(m_0^{(1)}, m_0^{(2)})\ldots (m_{n-1}^{(1)}, m_{n-1}^{(2)})s_n\in
\hist(\game)$.
A strategy $\sigma$ is said to be \textit{finite-memory} if it is induced by
some Mealy machine.

We will exploit a subclass of finite-memory strategies that are well-behaved
with respect to regions. We say that some
strategy $\sigma$ is a finite-memory region strategy if it can be encoded
by a Mealy machine the updates of which depend only on the current region
(rather than the state itself) and such that, in a given memory state, the
moves proposed in two state-equivalent game states traverse the same state
regions during the proposed delay and then move to the same region.
Formally, a strategy is a \textit{finite-memory region strategy} it is
induced by some Mealy machine 
$\fmmealy  = (\fmstates, \mathfrak{m}_\init, \update, \nextmove)$
where for any memory states $\mathfrak{m}\in\fmstates$ and any two
state-equivalent states $s= (\ell, v)$, $s'=(\ell', v')\in S$,
$\update(\mathfrak{m}, s) = \update(\mathfrak{m}, s')$ and the moves
$(\delay, a) = \nextmove(\mathfrak{m}, s)$ and
$(\delay', a') = \nextmove(\mathfrak{m}, s')$ are such that $a = a'$,
$[v + \delay] = [v' + \delay']$ and
$\{[v + \delay_{\mathsf{mid}}]\mid 0\leq \delay_{\mathsf{mid}}\leq \delay\} =
\{[v' + \delay_{\mathsf{mid}}]\mid 0\leq \delay_{\mathsf{mid}}\leq \delay'\}$.
We view the update function of Mealy machines
inducing finite-memory region  strategies as functions
$\update\colon \fmstates\times L\times\regions\to\fmstates$.

\subparagraph*{Objectives.} An objective represents the
property we desire on paths of a TA or a goal of a player in a TG.
Formally, we define an \textit{objective} as a set
$\Psi\subseteq S^\omega$ of infinite sequences of states.
An objective $\Psi$ is a \textit{region objective} if given two
sequences of states $s_0s_1\ldots$, $s_0's_1'\ldots\in S^\omega$ such that
for all $j\in\IN$, $s_j\clockequiv s_j'$, we have $s_0s_1\ldots\in\Psi$
if and only if $s_0's_1'\ldots\in\Psi$. Intuitively, the satisfaction of
a region objective depends only on the witnessed sequence of state regions.

An $\omega$-regular region objective is a region objective
recognized by some deterministic parity automaton.
A (total) \textit{deterministic parity automaton} (DPA) is a tuple
$H = (\dpastates, q_\init, A, \dpatransitions, p)$, where $\dpastates$ is
a finite set of states, $q_\init\in \dpastates$ is the initial state, $A$
is a finite alphabet, 
$\dpatransitions\colon \dpastates\times A\to \dpastates$ is a
total transition function and $p: Q\to \{0, \ldots, \maxpriority-1\}$ is a priority
function over the states of the DPA.

Let $w= a_0a_1\ldots\in A^\omega$ be an infinite word. The execution of $H$
over $w$ is the infinite sequence of states $q_0q_1\ldots\in Q^\omega$ that
starts in the initial state of $H$, i.e., $q_0=q_\init$ and such that for all
$n\in\IN$, $q_{n+1}=\dpatransitions(q_n, a_n)$, i.e., each step of
the execution is performed by reading a letter of the input word.
An infinite word $w\in A^\omega$ is accepted by $H$ if the smallest priority
appearing infinitely often along the execution $q_0q_1\ldots\in Q^\omega$ over
$w$ is even, i.e. if $(\liminf_{n\to\infty}p(q_i))\bmod 2=0$. We denote by
$\mathcal{L}(H)$ the set of words accepted by $H$.

We use DPAs to encode $\omega$-regular objectives over state regions.
A DPA $H = (\dpastates, q_\init, L\times \regions, \dpatransitions, p)$
formally encodes the objective
$\{s_0s_1\ldots\in S^\omega\mid [s_0][s_1]\ldots\in\mathcal{L}(H)\}$.

In the sequel, we use the following $\omega$-regular region objectives in
addition to the window objectives studied in this work.
The window objectives we consider later on are derived from the
parity objective.
The \textit{parity} objective for a one-dimensional priority function
$p\colon L\to\{0, \ldots, \maxpriority-1\}$ requires that
the smallest priority seen infinitely often is even.
Formally, we define $\mathsf{Parity}(p) = \{
(\ell_0, v_0) (\ell_1, v_1)\ldots\in S^\omega
\mid (\liminf_{n\to\infty}p(\ell_n))\bmod 2 =0\}$.
For hardness arguments, we rely on safety
objectives. A \textit{safety} objective, defined with respect to a
set of locations $F\subseteq L$, requires
that no location in $F$ be visited. Formally, the safety objective for $F$
is defined as $\mathsf{Safe}(F) =\{
(\ell_0, v_0) (\ell_1, v_1)\ldots\in S^\omega \mid
\forall\, n,\, \ell_n\notin F\}$.

For the sake of brevity, given some path $\pi = s_0m_0s_1\ldots$ of a TA
or a play of a TG $\pi =s_0(m_0^{(1)}, m_0^{(2)})s_1\ldots$
and an objective $\Psi\subseteq S^\omega$ , we write $\pi\in \Psi$ to mean
that the sequence of states $s_0s_1\ldots$ underlying $\pi$ is in $\Psi$, and
say that $\pi$ satisfies the objective $\Psi$. 

\subparagraph*{Winning conditions.}
In games, we distinguish objectives and \emph{winning conditions}.
We adopt the definition of \cite{AlfaroFHMS03}. Let $\Psi$ be an objective.
It is desirable to have victory be achieved in a physically meaningful way:
for example, it is unrealistic to have a safety objective be achieved by
stopping time. This motivates a restriction to time-divergent plays.
However, this requires $\player_1$ to force the divergence of plays, which
is not reasonable, as $\player_2$ can stall using delays with zero time
units. Thus we also declare winning time-convergent plays where
$\player_1$ is \emph{blameless}.
Let $\textsf{Blameless}_1$ denote the set of $\player_1$-blameless
plays, which we define in the following way.

Let $\pi = s_0(m_0^{(1)}, m_0^{(2)})s_1\ldots$ be a play or a history. We say
$\player_1$ is \textit{not responsible} (or not to be blamed) for
the transition at step $k$ in $\pi$ if either $\delay_k^{(2)} < \delay_k^{(1)}$
($\player_2$ is faster) or $\delay_k^{(1)} = \delay_k^{(2)}$ and
$s_k\xrightarrow{\delay_k^{(1)}, a_k^{(1)}}s_{k+1}$
does not hold in $\mathcal{T}(\automaton)$
($\player_2$'s move was selected and did not have the same
target state as $\player_1$'s) where $m_k^{(i)} = (\delay_k^{(i)}, a_k^{(i)})$ for
$i\in\{1, 2\}$. 
The set $\textsf{Blameless}_1$ is formally defined as the set of infinite
plays $\pi$ such that there is some $j$ such that
for all $k\geq j$, $\player_1$ is not responsible for the transition at step
$k$ in $\pi$.

Given an objective
$\Psi$, we set the winning condition $\wc_1(\Psi)$ for $\player_1$ to be
the set of plays
\[\wc_1(\Psi) = (\plays(\game, \Psi)\cap\plays_\infty(\game))
  \cup (\textsf{Blameless}_1\setminus \plays_\infty(\game)),\]
where $\plays(\game, \Psi) = \{\pi\in\plays(\game)\mid\pi\in\Psi\}$.
Winning conditions for $\player_2$ are defined by exchanging the
roles of the players in the former definition.

We consider that the two players are adversaries and have opposite objectives,
$\Psi$ and $S^\omega\setminus \Psi$.
Let us note that
$\mathsf{WC}_1(\Psi)\cup \mathsf{WC}_2(S^\omega\setminus\Psi)\neq \plays(\game)$.
While this union subsumes all time-divergent plays and time-convergent plays
that are blameless for one player, it omits the time-convergent plays that
are blameless for neither player.

A \emph{winning strategy} for $\player_i$ for an objective $\Psi$ from
a state $s_0$ is a strategy $\sigma_i$ such that
$\outcome_i(\sigma_i, s_0)\subseteq \mathsf{WC}_i(\Psi)$.
We say that a state is winning for $\player_1$ for an objective $\Psi$ if
$\player_1$ has a winning strategy from this state.
\subparagraph*{Winning for $\omega$-regular region objectives.} Let us consider
a DPA $H = (\dpastates, q_\init, L\times \regions, \dpatransitions, p)$
with $p\colon \dpastates\to\{0, \ldots, \maxpriority - 1\}$
specifying an $\omega$-regular region objective in the TG $\game$. Let
$\maxpriority' = \maxpriority$ if $\maxpriority$ is odd and
$\maxpriority' = \maxpriority - 1$ otherwise.
The set of winning states for the objective $\mathcal{L}(H)$ is a union of state
regions and is computable in exponential
time~\cite{DBLP:conf/concur/AlfaroHM01,AlfaroFHMS03}.

\begin{restatable}{theorem}{theoremOmegaRegularComplexity}\label{theorem:omegaregular:complexity}
  The set of winning states of $\player_1$ in the TG $\game$ for
  the objective given by $H$ is a union of state regions and
  is computable in time
 $\bigo((4\cdot |L|\cdot|\regions|\cdot |Q|\cdot \maxpriority)^{\maxpriority'+2}).$
\end{restatable}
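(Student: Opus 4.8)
The plan is to reduce solving the timed game $\game$ for the winning condition $\wc_1(\Psi)$, where $\Psi$ denotes the objective specified by $H$ (that is, $\{s_0s_1\ldots\in S^\omega\mid [s_0][s_1]\ldots\in\mathcal{L}(H)\}$), to solving a finite turn-based parity game, following the region-based construction of~\cite{DBLP:conf/concur/AlfaroHM01,AlfaroFHMS03}, and then to account precisely for the number of vertices and priorities of that game before invoking a standard parity-game solver. Because every component of the construction operates at the level of state regions, the set of winning vertices projects to a union of state regions, which yields the first part of the claim.

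The starting point is the synchronous product of the region abstraction $(L\times\regions, [s_\init], \{\tau\}, \to')$ with the DPA $H$, whose transition function $\dpatransitions$ advances the automaton component on each region step: a product vertex $(\ell, R, q)$ inherits the priority $p(q)$, and the priorities seen along a play mirror the run of $H$ on the induced sequence of state regions. This product has $|L|\cdot|\regions|\cdot|\dpastates|$ vertices. The remaining work is to encode $\wc_1(\Psi)$, which declares winning exactly the time-divergent plays in $\Psi$ together with the time-convergent plays that are $\player_1$-blameless, equivalently $(\textit{time-divergent}\wedge\Psi)\vee(\textit{time-convergent}\wedge\textsf{Blameless}_1)$. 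I would augment each vertex with two bits (the factor $4$): a tick bit recording whether $\globalclock$ has just crossed an integer boundary, so that time-divergence is precisely the $\buchi$ condition ``ticks occur infinitely often'', and a blame bit recording whether $\player_1$ is responsible for the current transition, so that $\textsf{Blameless}_1$ is precisely the $\cobuchi$ condition ``$\player_1$ is responsible only finitely often''. To collapse this Boolean combination of a parity, a $\buchi$, and a $\cobuchi$ condition into a single parity condition, I would track, between consecutive ticks, the least DPA priority witnessed and emit it at each tick: on a time-divergent play there are infinitely many ticks, so the liminf of the emitted priorities equals that of the underlying priorities and the parity of $\Psi$ is preserved, while on a time-convergent play the tick-emitted sequence is finite and the blame bit governs the outcome. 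This accumulator ranges over the $\maxpriority$ values $\{0, \ldots, \maxpriority-1\}$ (the factor $\maxpriority$), and folding the tick and blame bits into the priority assignment while normalizing the priority range so that its largest value is odd---the source of the dependence on $\maxpriority'$ rather than $\maxpriority$---yields a parity condition with $\maxpriority'+2$ distinct priorities.

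Finally, the concurrent timed move selection---both players propose a move, the faster one acts, ties are resolved nondeterministically---is encoded as a turn-based interaction at the region level by the gadget of~\cite{AlfaroFHMS03}, which increases the vertex count only by the constant factor already absorbed above; the resulting turn-based parity game thus has $N = 4\cdot|L|\cdot|\regions|\cdot|\dpastates|\cdot\maxpriority$ vertices and $\maxpriority'+2$ priorities. By correctness of the reduction, $\player_1$ wins from $(\ell, R, q)$ in the parity game if and only if $\player_1$ wins $\game$ for $\wc_1(\Psi)$ from every state in $(\ell, R)$, so the winning set is a union of state regions. Solving this parity game with the recursive algorithm of McNaughton and Zielonka, whose running time on $N$ vertices with $d$ priorities is $\bigo(N^{d})$, and substituting $d = \maxpriority'+2$ gives the stated bound $\bigo((4\cdot|L|\cdot|\regions|\cdot|\dpastates|\cdot\maxpriority)^{\maxpriority'+2})$. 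I expect the main obstacle to be the faithful encoding of the composite winning condition: verifying that the tick and blame bits exactly capture time-divergence and $\player_1$-blamelessness, and that the least-priority accumulator converts the parity-with-$\buchi$-and-$\cobuchi$ condition into a parity condition with precisely $\maxpriority'+2$ priorities. Once this accounting is in place, the complexity bound is immediate from the parity-game algorithm.
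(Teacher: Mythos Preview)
Your proposal is essentially correct and mirrors the paper's construction closely: both expand the state space with a tick bit (to capture time-divergence as a B\"uchi condition on~$\globalclock$ crossing integer bounds), a blame bit (to capture $\textsf{Blameless}_1$ as a co-B\"uchi condition), and a least-priority accumulator between ticks (to fold everything into a single parity condition with $\maxpriority'+2$ priorities over a state space of size $4\cdot|L|\cdot|\regions|\cdot|Q|\cdot\maxpriority$). The one genuine difference lies in the solving step: the paper does \emph{not} reduce to a finite turn-based parity game, but instead keeps the infinite concurrent product $\widehat{\game}\times\widehat{H}$ and invokes the $\mu$-calculus machinery of~\cite{DBLP:conf/concur/AlfaroHM01,AlfaroFHMS03}, observing that the winning-set formula has alternation depth $\maxpriority'+2$ and that every set arising in its fixed-point evaluation is a union of state regions, so the computation can be carried out on the finite region abstraction; your route via a turn-based gadget plus Zielonka is equally valid (the paper itself notes, in the proof of Lemma~\ref{lemma:strategies:blame}, that the sure-winning semantics makes the concurrent game equivalent to a turn-based one where $\player_2$ sees $\player_1$'s move) and yields the same bound, though your remark that the gadget's constant factor is ``already absorbed'' in the explicit~$4$ is imprecise---the~$4$ accounts only for the two Boolean bits, and any further blow-up from the turn-based encoding is absorbed only asymptotically in the $\bigo$.
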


Furthermore, in TGs with $\omega$-regular objectives, finite-memory region
strategies suffice for winning. We can even obtain winning
finite-memory region strategies for which all delays
are bounded by some constant. Intuitively, if one replaces moves
of $\player_1$ of a winning strategy by delay moves with durations that are
bounded by some constant, one still has a winning strategy.
The broad justification is any outcome of the modified finite-memory region
strategy
shares its sequence of states with an outcome of the original strategy obtained
by having $\player_2$ interrupt the moves of $\player_1$ that have a large delay.
We therefore have the following, which is elaborated upon in
Appendix~\ref{appendix:strategies}.

\begin{restatable}{theorem}{theoremFMStrategies}\label{theorem:fm:strategies}
  There exists a finite-memory region strategy with
  $2\cdot |Q|\cdot \maxpriority$ states proposing delays of at most
  $1$ that is winning for the objective specified by $H$ from any state that is winning
  for $\player_1$.
\end{restatable}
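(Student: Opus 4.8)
The plan is to start from the winning strategy produced by solving the underlying $\omega$-regular game and then to \emph{cap} its delays, preserving winning. By the reduction underlying Theorem~\ref{theorem:omegaregular:complexity} (see~\cite{DBLP:conf/concur/AlfaroHM01,AlfaroFHMS03}), the TG $\game$ with objective $\mathcal{L}(H)$ induces a finite turn-based parity game on the product of the region abstraction with $H$ together with the divergence-tracking gadget that reduces the divergence-refined condition to a plain parity condition. Parity games are positionally determined, so $\player_1$ has a positional winning strategy on this product from every winning vertex. Translating it back to $\game$ yields a finite-memory region strategy $\sigma$: the region component of a product vertex is read off the current game state, while the remaining components -- the $H$-state and the flags of the divergence gadget -- constitute the memory, which accounts for the $2\cdot|Q|\cdot\maxpriority$ states; and $\sigma$ is winning from every state that is winning for $\player_1$.

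First I would cap $\sigma$ into a strategy $\sigma'$ that proposes only delays of at most $1$. Performing the cap at the level of regions keeps $\sigma'$ a region strategy with the \emph{same} Mealy machine, hence the same number of memory states: whenever $\sigma$ proposes a move realizable within delay at most $1$ from the current region, $\sigma'$ repeats it; otherwise $\sigma'$ proposes a delay move of length at most $1$ that advances one step along the region path $\sigma$ intends to traverse, or, in a region where all clocks exceed their maximal constants, a stalling delay of $1$. Since the memory update of a region strategy depends only on the traversed state regions, $\sigma'$ reuses the update function of $\sigma$ unchanged.

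The heart of the argument is to show that every outcome $\pi'$ of $\sigma'$ from a winning state lies in $\wc_1(\mathcal{L}(H))$. I would match $\pi'$ with an outcome $\pi$ of $\sigma$ sharing the same sequence of states: at each capped round, $\pi'$ realizes either a move on which $\player_2$ is strictly faster -- which the matching $\player_2$ reproduces verbatim -- or a delay of $1$ on which $\player_1$ is (co-)responsible, in which case I let the matching $\player_2$ interrupt $\sigma$'s long move by a delay move of the same length, so that the same successor is reached but $\player_1$ becomes non-responsible. As $\mathcal{L}(H)$ is a region objective and time-divergence depends only on the values of $\globalclock$, both properties agree on $\pi$ and $\pi'$; hence if $\pi'$ is time-divergent it satisfies $\mathcal{L}(H)$ because $\pi$ does, so $\pi' \in \wc_1(\mathcal{L}(H))$.

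The delicate case, and the step I expect to be the main obstacle, is a time-convergent $\pi'$, where membership in $\wc_1(\mathcal{L}(H))$ requires $\player_1$ to be blameless. The blame structures of $\pi$ and $\pi'$ differ precisely at the capped rounds where $\player_1$ is responsible in $\pi'$ but not in $\pi$; the key observation is that each such round lets a full time unit elapse, so a time-convergent play contains only finitely many of them. On the remaining tail, every round either copies $\sigma$'s move -- inheriting the blame of $\pi$, which is blameless since $\pi$ is a convergent winning outcome -- or is a capped round on which $\player_1$ is non-responsible. Thus $\player_1$ is eventually never responsible in $\pi'$, so $\pi'$ is blameless and $\pi' \in \wc_1(\mathcal{L}(H))$. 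Keeping the cap region-respecting, and this finiteness argument reconciling the two blame structures, are the two points I expect to demand the most care.
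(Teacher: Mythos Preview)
Your high-level plan --- pull back a positional strategy from the product game, then cap delays and match each outcome of the capped strategy against an outcome of the original one where $\player_2$ interrupts --- is the right shape, and the matching argument in the time-divergent case is fine. But the proof as written has two genuine gaps.

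\textbf{The translation step is where the work is.} You assert that the positional winning strategy on the product of the region abstraction with $H$ and the divergence gadget ``translates back'' to a finite-memory \emph{region} strategy in $\game$ with $2\cdot|Q|\cdot\maxpriority$ memory states. This is precisely the content of the theorem, and it is not automatic. The product state carries a tick bit (did $\globalclock$ just cross an integer?) and a blame bit (was $\player_1$ responsible for the last transition?); neither is a function of the sequence of state regions, so a Mealy machine whose update reads only state regions cannot maintain them. The paper spends two lemmas on exactly this: one shows the blame bit can be dropped entirely (via a successor-sharing argument on the product parity game), and the other replaces the tick bit by a region-observable ``was $\globalclock$ integral at the previous step?'' bit, which only works after a \emph{specific} delay cap, namely $\delayfunc(\sigma(\widehat s))\le 1-\fracpart(v(\globalclock))$, strict when $v(\globalclock)\in\IN$. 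That cap is what makes every tick coincide with a $\neg\mathsf{int}\to\mathsf{int}$ transition of the region of $\globalclock$, and is what buys the factor $2$ rather than $4$ in the memory count. Your proposal treats this translation as a black box and then caps afterwards; the paper caps \emph{in order to} make the translation possible.

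\textbf{Your cap is not region-respecting, and the alternative breaks the convergence argument.} The rule ``cap iff $\sigma$'s proposed delay exceeds $1$'' is not invariant under state equivalence: a region strategy may propose, in two equivalent states, moves with the same action, the same traversed regions, and the same target region but with numerical delays on opposite sides of $1$ (region equivalence constrains only the region path, not the delay value). So one state caps and the other does not, and $\sigma'$ fails to be a region strategy. Your alternative rule, ``advance one step along the region path $\sigma$ intends to traverse,'' is region-respecting, but a single region step can take arbitrarily small time, so a time-convergent $\pi'$ can contain infinitely many capped rounds on which $\player_1$ is responsible, and your blamelessness argument collapses. The paper's cap at $1-\fracpart(v(\globalclock))$ threads this needle: it is region-determined (it depends only on the fractional part of $\globalclock$, hence on the region), and the convergence case is handled by a separate argument showing at most one extra blamed step can occur after ticks stop.
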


\subparagraph*{Decision problems.}
We consider two different problems for an objective $\Psi$.
The first is the \emph{verification problem} for $\Psi$, which asks
given a TA whether all \emph{time-divergent initial} paths
satisfy the objective.
Second is the \emph{realizability problem} for $\Psi$, which
asks whether in a TG, $\player_1$
has a winning strategy from the initial state.

\section{Bounded window objectives}\label{section:objectives}

The main focus of this paper is a variant of timed window parity objectives
called \textit{bounded timed window parity objectives}. These are defined from
the \textit{fixed timed window parity objectives} studied in~\cite{MRS21}, where these
are referred to as timed window parity objectives. The definitions of the
different variants of timed window parity objectives are provided in
Section~\ref{section:objectives:definitions}. Section~\ref{section:objectives:relationships}
presents the relationships between the different variants and the original
parity objective.
Finally, 
Section~\ref{section:objectives:technical} introduces a technical result
used to simplify paths and plays witnessing the violation of a bounded window
objective.

For this entire section, we fix a TG $\game = (\automaton, \Sigma_1, \Sigma_2)$
where $\automaton = (L, \ell_\init, C, \Sigma_1\cup\Sigma_2, I, E)$ and
a one-dimensional priority function $p\colon L\to \{0, \ldots, \maxpriority-1\}$.

\subsection{Objective definitions}\label{section:objectives:definitions}

\subparagraph{Fixed objectives.} Fixed window objectives depend on a fixed time
bound $\lambda\in\IN$.
The first building block for the definition of window objectives is
the notion of good window.  A good window for the bound $\lambda$ is
intuitively a time interval of length strictly less than $\lambda$ for which the
smallest priority of the locations visited in the interval is even.
We define the
\textit{timed good window (parity) objective} as the set of sequences of states
that have a good window at their start. Formally, we define
$\mathsf{TGW}(p, \lambda)= \big\{(\ell_0, v_0)(\ell_1, v_1) \ldots
\in S^\omega\mid 
\exists \, n\in\IN,\, 
\min_{0\leq j\leq n}p(\ell_j)\bmod 2 = 0
\text{ and }  v_n(\globalclock)-v_0(\globalclock)<\lambda\big\}$.

The \textit{direct fixed timed window (parity) objective} for the bound
$\lambda$, denoted by $\mathsf{DFTW}(p, \lambda)$, requires
that the timed good window objective is satisfied by all suffixes of a sequence.
Formally, we define
$\mathsf{DFTW}(p, \lambda)$ as the set
$\{s_0s_1\ldots\in S^\omega\mid\forall\, n\in\IN,\,
s_ns_{n+1}\ldots\in\mathsf{TGW}(p, \lambda)\}.$

Unlike the parity objective, $\mathsf{DFTW}(p, \lambda)$ is not
prefix-independent. Therefore, a prefix-independent variant
of the direct fixed timed window objective, the
\textit{fixed timed window (parity) objective} $\mathsf{FTW}(p, \lambda)$, was
also studied in~\cite{MRS21}. Formally, we define
$\mathsf{FTW}(p, \lambda) = \{s_0s_1\ldots\in S^\omega\mid\exists\, n\in\IN,\,
s_ns_{n+1}\ldots\in\mathsf{DFTW}(p, \lambda)\}$.

\subparagraph{Bounded objectives.} A sequence of states satisfies the
(respectively direct) bounded timed window objective if there exists a time
bound $\lambda$ for which the sequence satisfies
the (respectively direct) fixed timed window objective.
Unlike the fixed case, this bound depends on the sequence of states, and
need not be uniform, e.g., among all sequences of states induced by
time-divergent
paths of a TA or among all sequences of states induced by time-divergent
outcomes of a strategy in a TG.

We formally define the
\textit{(respectively direct) bounded timed window (parity) objective}
$\mathsf{BTW}(p)$ (respectively $\mathsf{DBTW}(p)$) as the set
$\mathsf{BTW}(p)= \{s_0s_1\ldots\in S^\omega\mid
\exists\, \lambda\in\IN, s_0s_1\ldots\in\mathsf{FTW}(p, \lambda)\}$
(respectively $\mathsf{DBTW}(p)=\{s_0s_1\ldots\in S^\omega\mid
\exists\, \lambda\in\IN, s_0s_1\ldots\in\mathsf{DFTW}(p, \lambda)\}$).
The objective $\mathsf{BTW}(p)$ is a prefix-independent variant of
$\mathsf{DBTW}(p)$.

In the sequel, to distinguish the prefix-independent variants from
direct objectives, we may refer to the fixed timed window or bounded timed
window objectives as \textit{indirect objectives}.

\subparagraph{Multi-objective extensions.} In addition to the direct and indirect
bounded objectives, we will also study some of their multi-objective
extensions. More precisely, we assume for these definitions that $p$ is
a multi-dimensional priority function, i.e.,
$p\colon L\to \{0, \ldots, \maxpriority-1\}^\numdimensions$,
and define a multi-dimensional objective as the conjunction of the objectives
derived from the component functions $p_1$, \ldots, $p_\numdimensions$.

Multi-dimensional extensions are referred to as generalized objectives.
Formally, in the fixed case, for a bound $\lambda\in\IN$,
we define the \textit{generalized  direct fixed timed window objective} as
$\mathsf{GDFTW}(p, \lambda) = \bigcap_{1\leq k\leq \numdimensions}\mathsf{DFTW}(p_k, \lambda)$ and the \textit{generalized fixed timed window objective} as
$\mathsf{GFTW}(p, \lambda) = \bigcap_{1\leq k\leq \numdimensions}\mathsf{FTW}(p_k, \lambda)$.
In the bounded case, we define
the \textit{generalized direct bounded timed window objective} as
$\mathsf{GDBTW}(p) = \bigcap_{1\leq k\leq \numdimensions}\mathsf{DBTW}(p_k)$ and
the \textit{generalized bounded timed window objective} as
$\mathsf{GBTW}(p) = \bigcap_{1\leq k\leq \numdimensions}\mathsf{BTW}(p_k)$.

\subsection{Relationships between objectives}\label{section:objectives:relationships}

We discuss the relationships between the different timed window parity
objectives and the parity objective in this section. We discuss both inclusions and
differences between the different objectives.

The inclusions are induced by the fact that a direct objective is more
restrictive than its prefix-independent counterpart, and similarly, by the fact
that a fixed objective is more restrictive than its bounded counterpart. Parity
objectives, on the other hand, are less restrictive than any of the timed
window objectives, as they require no time-related aspect to hold.

\begin{lemma}\label{lemma:inclusions}
  The following inclusions hold for any $\lambda\in\IN$:
  \begin{itemize}
  \item $\mathsf{DFTW}(p, \lambda)\subseteq \mathsf{FTW}(p,\lambda)\subseteq \mathsf{BTW}(p) \subseteq \mathsf{Parity}(p)$ and
  \item $\mathsf{DFTW}(p, \lambda)\subseteq \mathsf{DBTW}(p)\subseteq \mathsf{BTW}(p) \subseteq \mathsf{Parity}(p)$.
  \end{itemize}
\end{lemma}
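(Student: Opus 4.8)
The plan is to prove each inclusion in the chain separately, working from the most restrictive objective outward, since the definitions are layered so that each objective is defined as a relaxation of the previous one. I would establish the following four inclusions, from which both lines of the lemma follow immediately: (a) $\mathsf{DFTW}(p,\lambda)\subseteq \mathsf{FTW}(p,\lambda)$, (b) $\mathsf{FTW}(p,\lambda)\subseteq\mathsf{BTW}(p)$, (c) $\mathsf{DFTW}(p,\lambda)\subseteq\mathsf{DBTW}(p)$ and $\mathsf{DBTW}(p)\subseteq\mathsf{BTW}(p)$, and finally (d) $\mathsf{BTW}(p)\subseteq\mathsf{Parity}(p)$. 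All but the last are essentially unwinding of definitions.

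\textbf{The definitional inclusions.} For (a), observe that $\mathsf{FTW}(p,\lambda)$ takes $n=0$ as a witness: if $s_0s_1\ldots\in\mathsf{DFTW}(p,\lambda)$ then the suffix starting at index $0$ (namely the sequence itself) lies in $\mathsf{DFTW}(p,\lambda)$, so the existential in the definition of $\mathsf{FTW}(p,\lambda)$ is satisfied. For (b), if $s_0s_1\ldots\in\mathsf{FTW}(p,\lambda)$, then by definition it satisfies $\mathsf{FTW}(p,\lambda')$ for the specific $\lambda'=\lambda$, which is exactly a witness for the existential over bounds in the definition of $\mathsf{BTW}(p)$. The two inclusions in (c) are proved identically: $\mathsf{DFTW}(p,\lambda)\subseteq\mathsf{DBTW}(p)$ by taking $\lambda$ itself as the bound witness, and $\mathsf{DBTW}(p)\subseteq\mathsf{BTW}(p)$ follows because for each fixed $\lambda$ we have $\mathsf{DFTW}(p,\lambda)\subseteq\mathsf{FTW}(p,\lambda)$ by (a), so a bound witnessing $\mathsf{DBTW}$ also witnesses $\mathsf{BTW}$.

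\textbf{The parity inclusion.} The only step with mathematical content is (d), $\mathsf{BTW}(p)\subseteq\mathsf{Parity}(p)$. I expect this to be the main obstacle. Fix a sequence $\pi=(\ell_0,v_0)(\ell_1,v_1)\ldots\in\mathsf{BTW}(p)$; then there is a bound $\lambda$ and an index $n_0$ such that every suffix starting at index $\geq n_0$ lies in $\mathsf{TGW}(p,\lambda)$, i.e.\ from every position $j\geq n_0$ one sees, within a time frame of size strictly less than $\lambda$, a window whose minimal priority is even. The goal is to show the smallest priority appearing infinitely often is even. The argument is to iterate the good-window property: starting from $n_0$, each good window ends at some later index $j_1$ where a minimal-even priority has been witnessed, and since the sequence under consideration comes from a time-divergent behavior, consecutive good windows advance in global time by a bounded amount, so one obtains infinitely many windows, each contributing an even minimal priority. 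The key point is that if the smallest priority $q$ seen infinitely often were odd, then from some index onward every priority is $\geq q$ and $q$ itself recurs; one then picks a position where $q$ is about to be the minimum over a long stretch and derives that no good window starting there can have an even minimum priority, contradicting the $\mathsf{TGW}$ property. Care is needed regarding time-divergence: a good window is bounded in global-clock time but may in principle span only finitely many indices, so I would argue in terms of indices witnessed within the bounded time frame and use that the even minimal priority in each window is at least as small as every priority in that window.

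\textbf{Potential subtlety.} The delicate part of (d) is reconciling the time-metric bound (windows of global-clock duration $<\lambda$) with the index-based parity condition ($\liminf$ over positions). I would handle this by noting that the $\mathsf{TGW}$ condition guarantees, from each position, a window closing within bounded time whose minimal priority is even, and then chaining these windows to produce an infinite subsequence of positions at which an even priority no larger than all intervening priorities is seen; comparing these witnessed even priorities against any odd candidate for the $\liminf$ yields the contradiction. This is the one place where I would be careful not to conflate ``small in time'' with ``small in index,'' but the essential combinatorics is the standard window-to-parity argument adapted to the timed setting.
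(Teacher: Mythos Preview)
Your proposal is correct and follows essentially the same approach as the paper: the easy inclusions are dispatched by unwinding definitions, and only $\mathsf{BTW}(p)\subseteq\mathsf{Parity}(p)$ receives a real argument, via the observation that from each position of the $\mathsf{DFTW}$ suffix the good-window property forces any odd priority to be followed by a smaller even one. Your worry about reconciling the time bound with the index-based $\liminf$ is unnecessary, since the definition of $\mathsf{TGW}(p,\lambda)$ already supplies an \emph{index} $n$ at which the running minimum is even (the time constraint $v_n(\gamma)-v_0(\gamma)<\lambda$ plays no role in the parity argument), and no time-divergence assumption is needed or available here.
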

\begin{proof}
  We only argue that $\mathsf{BTW}(p) \subseteq \mathsf{Parity}(p)$, as all other
  inclusions are straightforward. Let $\pi = s_0s_1\ldots\in\mathsf{BTW}(p)$. It follows
  that $\pi$ has some suffix $\pi'=s_ns_{n+1}\ldots$ such that
  $\pi'\in\mathsf{DFTW}(p, \lambda)$ for some $\lambda\in\IN$.
  Every suffix of $\pi'$ satisfies $\mathsf{TGW}(p, \lambda)$. This implies
  that any odd priority in $\pi'$ is followed by a smaller even priority. It follows
  that the smallest priority appearing infinitely often in $\pi$ is even, as there
  are finitely many priorities.
\end{proof}

It can be shown that in some TAs, these inclusions may be strict. In other words,
the relations presented in Lemma~\ref{lemma:inclusions} are the most general
relationships for timed window parity objectives. We use the TA used to
show that fixed objectives are a strict refinement of parity objectives
in~\cite{MRS21} to exemplify this.

\begin{lemma}\label{lemma:strictness:inclusions}
  There exists a TA in which all time-divergent paths satisfy the parity
  objective, all of the inclusions of Lemma~\ref{lemma:inclusions}
  are strict and in which $\mathsf{FTW}(p,\lambda)\nsubseteq\mathsf{DBTW}(p)$
  and $\mathsf{DBTW}(p)\nsubseteq\mathsf{FTW}(p,\lambda)$ hold for any value
  of $\lambda$.
\end{lemma}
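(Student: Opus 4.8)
The plan is to reuse the TA $\automaton$ from \cite{MRS21} that witnesses that fixed window objectives strictly refine parity, and to check that it also separates all the remaining objectives. Concretely, $\automaton$ has three locations $\ell_0,\ell_1,\ell_2$ with priorities $p(\ell_0)=0$, $p(\ell_1)=1$ and $p(\ell_2)=2$, one auxiliary clock $x$ besides $\globalclock$, edges $\ell_0\to\ell_1$ (resetting $x$), $\ell_1\to\ell_2$ and $\ell_2\to\ell_0$ all guarded by $\true$, invariant $x\leq 1$ on $\ell_1$ and invariant $\true$ on $\ell_0$ and $\ell_2$, with $\ell_0$ as initial location. The two features I will rely on are that (i) $\ell_1$ is reachable only through $\ell_0$ and each visit to $\ell_1$ lasts at most one time unit, so $\ell_1$ can never be held forever, and (ii) arbitrarily long delays are available in the even-priority locations $\ell_0$ and $\ell_2$.

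First I would establish that every time-divergent path satisfies $\mathsf{Parity}(p)$. Since $1$ is the only odd priority, a violation would require $\ell_1$ to be visited infinitely often while $\ell_0$ is visited only finitely often. By feature (i), each fresh visit to $\ell_1$ is preceded by a visit to $\ell_0$, so infinitely many visits to $\ell_1$ force infinitely many visits to $\ell_0$, a contradiction. Hence any time-divergent path either visits $\ell_0$ infinitely often, giving $\liminf$ priority $0$, or eventually stays in $\ell_2$, giving $\liminf$ priority $2$; either way the $\liminf$ is even.

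The heart of the argument is to exhibit, for each $\lambda\geq 1$, time-divergent initial paths realizing every separation, using the fact that a window opened at $\ell_1$ stays open while the play sits in $\ell_2$ (as $\min(1,2)=1$ is odd) and closes only on reaching $\ell_0$, so its size equals the time between entering $\ell_1$ and reaching $\ell_0$. Three families suffice. The path $\pi_1$ that goes $\ell_0\to\ell_1\to\ell_2$ and then delays in $\ell_2$ forever has priority trace $0,1,2,2,\dots$; the index at $\ell_1$ admits no good window (its running minimum stays $1$), so $\pi_1\notin\mathsf{DFTW}(p,\lambda)$ and $\pi_1\notin\mathsf{DBTW}(p)$, while its suffix from $\ell_2$ lies in $\mathsf{DFTW}(p,\lambda)$, whence $\pi_1\in\mathsf{FTW}(p,\lambda)\subseteq\mathsf{BTW}(p)$; this one path witnesses strictness of $\mathsf{DFTW}(p,\lambda)\subseteq\mathsf{FTW}(p,\lambda)$ and of $\mathsf{DBTW}(p)\subseteq\mathsf{BTW}(p)$, and also $\mathsf{FTW}(p,\lambda)\nsubseteq\mathsf{DBTW}(p)$. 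For the second family, the path $\pi_2^\lambda$ that repeats forever the cycle entering $\ell_1$, delaying $\lambda$ in $\ell_2$, and reaching $\ell_0$ has all windows of size at most $\lambda$, so $\pi_2^\lambda\in\mathsf{DFTW}(p,\lambda+1)\subseteq\mathsf{DBTW}(p)$, yet it has infinitely many windows of size exactly $\lambda$, which are not good for the bound $\lambda$; thus no suffix lies in $\mathsf{DFTW}(p,\lambda)$ and $\pi_2^\lambda\notin\mathsf{FTW}(p,\lambda)$, witnessing strictness of $\mathsf{FTW}(p,\lambda)\subseteq\mathsf{BTW}(p)$ and of $\mathsf{DFTW}(p,\lambda)\subseteq\mathsf{DBTW}(p)$, and also $\mathsf{DBTW}(p)\nsubseteq\mathsf{FTW}(p,\lambda)$. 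Finally, the path $\pi_3$ that runs the same cycle with the $n$-th excursion delaying $n$ in $\ell_2$ produces windows of unbounded size while still visiting $\ell_0$ infinitely often, so $\pi_3\in\mathsf{Parity}(p)\setminus\mathsf{BTW}(p)$, witnessing strictness of $\mathsf{BTW}(p)\subseteq\mathsf{Parity}(p)$. All three are time-divergent paths of $\automaton$ and, by the previous paragraph, satisfy parity.

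The main obstacle I anticipate is conceptual rather than computational: reconciling the requirement that \emph{every} time-divergent path satisfy parity with the need for arbitrarily large, and even non-closing, windows. Holding an odd priority forever would break parity, yet closing windows quickly would preclude large windows. The construction resolves this tension through the priority-$2$ location: because $2$ exceeds the odd priority $1$, time spent in $\ell_2$ keeps a window opened at $\ell_1$ open without dropping the running minimum to an even value, so windows may be stretched (families $\pi_2^\lambda$ and $\pi_3$) or never closed at all (family $\pi_1$); and since $\ell_2$ itself carries the even priority $2$, staying there forever yields an even $\liminf$ and is harmless, while confining the sole odd priority to the forced-brief, $\ell_0$-gated location $\ell_1$ secures parity globally. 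Once this decoupling is in place, the remaining work is the routine membership checks sketched above.
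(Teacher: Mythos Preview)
Your proof is correct and follows essentially the same approach as the paper: a three-location cycle with priorities $\{0,1,2\}$, together with three families of paths (one eventually parked in an even location, one cycling with fixed delay $\lambda$, one cycling with growing delays) that witness all the required separations. The only difference is cosmetic: the paper's automaton places the odd priority on the \emph{initial} location $\ell_0$ (with $p(\ell_0)=1$, $p(\ell_1)=2$, $p(\ell_2)=0$, bounded invariants on $\ell_0,\ell_2$, and the unbounded-delay location at $\ell_1$), whereas you put the odd priority on $\ell_1$ and the unbounded-delay location at $\ell_2$; since both arrangements realise the same key mechanism (a priority-$2$ location that stretches windows opened by the priority-$1$ location without affecting parity), the arguments are interchangeable, though your attribution of this specific automaton to \cite{MRS21} should be double-checked.
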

\begin{proof}
  Consider the TA $\mathcal{B}$ depicted in Figure~\ref{figure:paritynotwindow}
  and let $p_\mathcal{B}$ denote its priority function.
  It is easy to see that all time-divergent paths satisfy the parity objective:
  if the TA remains in location $\ell_1$ after some point, letting time diverge,
  the only priority seen infinitely often is $2$; if location $\ell_2$ is
  visited infinitely often, the smallest priority seen infinitely often is $0$.
  
  We will only consider sequences of states induced by initial paths of $\mathcal{B}$
  in the following arguments. We denote states by triples $(\ell, v, v')$ where
  $\ell\in\{\ell_0, \ell_1, \ell_2\}$ and $v$ and $v'$ respectively refer to the
  valuation of $x$ and of $\globalclock$.  Let $\lambda\in\IN$.

  \begin{figure}[ht]
    \centering
    \scalebox{0.8}{
      \begin{tikzpicture}[shorten <= 1pt, node distance=4cm, initial text=,
        every node/.style={transform shape},
        every state/.style={minimum size=1.5cm}]
        \node[state, initial, align=center] (l0) {$\ell_0$ \\ $x\leq 1$};
        \node[align=center, below of=l0, node distance=1.1cm] {$1$};
        \node[state, align=center, right of=l0] (l1) {$\ell_1$ \\ $\true$};
        \node[align=center, below of=l1, node distance=1.1cm] {$2$};
        \node[state, align=center, right of=l1] (l2) {$\ell_2$ \\ $x\leq 1$};
        \node[align=center, below of=l2, node distance=1.1cm] {$0$};
        \path[->] (l0) edge node[align=center, below] {$(\true, a, \varnothing)$} (l1);
        \path[->] (l1) edge node[align=center, below] {$(\true, a, \{x\})$} (l2);
        \path[->] (l2) edge[bend right] node[align=center, above] {$(\true, a, \{x\})$} (l0);
      \end{tikzpicture}}
        \caption{Timed automaton $\mathcal{B}$. Edges are labeled
    with triples guard-action-resets. Priorities are beneath
    locations. The incoming arrow with no origin indicates the initial location.} \label{figure:paritynotwindow}
  \end{figure}
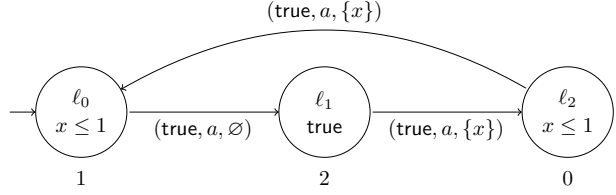

  Let us first show that
  $\mathsf{DFTW}(p, \lambda)\subsetneq \mathsf{FTW}(p,\lambda)$,
  $\mathsf{DBTW}(p_\mathcal{B})\subsetneq \mathsf{BTW}(p_\mathcal{B})$
  and $\mathsf{FTW}(p,\lambda)\nsubseteq\mathsf{DBTW}(p_\mathcal{B})$ hold.
  Due to the inclusions of Lemma~\ref{lemma:inclusions}, it suffices to
  provide some sequence of states in
  $\mathsf{FTW}(p_\mathcal{B}, \lambda)\setminus\mathsf{DBTW}(p_\mathcal{B})$
  to obtain these relations.
  For example, consider the sequence
  of states $(\ell_0, 0, 0)(\ell_1, 1, 1)(\ell_1, 2, 2)\ldots$ obtained by
  using action $a$ in $\ell_0$ after $1$ time unit, and then letting time
  diverge by means of delay moves.
  The suffix $(\ell_1, 1, 1)(\ell_1, 2, 2)\ldots$ of this
  sequence satisfies $\mathsf{DFTW}(p_\mathcal{B}, \lambda)$: the only priority
  that appears in this suffix is even. Therefore the sequence
  $(\ell_0,0,0)(\ell_1,1,1)(\ell_1,2,2)\ldots$ must satisfy
  $\mathsf{FTW}(p_\mathcal{B}, \lambda)$. However, this sequence does not satisfy
  $\mathsf{DBTW}(p_\mathcal{B})$ nor
  $\mathsf{DFTW}(p_\mathcal{B}, \lambda)$; no even priority smaller than $1$
  is ever seen,
  therefore there cannot be any good window at the start of the play.
 
  Let us now prove that $\mathsf{DFTW}(p_\mathcal{B}, \lambda)\subsetneq \mathsf{DBTW}(p_\mathcal{B})$,
  $\mathsf{FTW}(p_\mathcal{B}, \lambda)\subsetneq \mathsf{BTW}(p_\mathcal{B})$
  and $\mathsf{DBTW}(p_\mathcal{B})\nsubseteq\mathsf{FTW}(p_\mathcal{B}, \lambda)$ hold. It suffices to show
  that
  $\mathsf{DBTW}(p_\mathcal{B})\setminus\mathsf{FTW}(p_\mathcal{B}, \lambda)\neq\emptyset$. We provide a sequence satisfying
  $\mathsf{DFTW}(p_\mathcal{B}, \lambda + 1)\subseteq
  \mathsf{DBTW}(p_\mathcal{B})$ but not $\mathsf{FTW}(p_\mathcal{B}, \lambda)$.
  For instance, consider the sequence of states
  $(\ell_0, 0, 0)(\ell_1, 0, 0)(\ell_2, 0, \lambda)
  (\ell_0, 0, \lambda)(\ell_1, 0, \lambda)(\ell_2, 0, 2\lambda)\ldots$ obtained
  by repeatedly using action $a$ with a delay of $0$ in $\ell_0$,
  then action $a$ with a delay
  of $\lambda$ in $\ell_1$ and finally action $a$ in $\ell_2$ with a delay
  of $0$.
  The timed good window objective $\mathsf{TGW}(p, \lambda+1)$ is satisfied by every
  suffix of this sequence; there is a delay of $\lambda$ between an occurrence of the
  priority $1$ and the smaller even priority $0$. Therefore, the objective
  $\mathsf{DBTW}(p_\mathcal{B})$ is satisfied. However, the fixed objective
  $\mathsf{FTW}(p_\mathcal{B}, \lambda)$ is not satisfied; any suffix of this sequence starting
  in location $\ell_0$ does not satisfy the timed good window objective
  $\mathsf{TGW}(p_\mathcal{B}, \lambda)$ due to the delay spent in location $\ell_1$.
  The relations
  $\mathsf{DFTW}(p_\mathcal{B}, \lambda)\subsetneq \mathsf{DBTW}(p_\mathcal{B})$
  and
  $\mathsf{FTW}(p_\mathcal{B}, \lambda)\subsetneq \mathsf{BTW}(p_\mathcal{B})$
  follow from this example and inclusions of Lemma~\ref{lemma:inclusions}.

  It remains to show that $\mathsf{BTW}(p_\mathcal{B}) \subsetneq \mathsf{Parity}(p_\mathcal{B})$ holds.
  Initialize $n$ to $0$. We consider the sequence of states induced by the path
  obtained by sequentially using the
  moves $(0, a)$ in location $\ell_0$, $(n, a)$ in location $\ell_1$ and
  $(0, a)$ in location $\ell_2$, increasing $n$ and then repeating the procedure.
  This sequence of states satisfies the parity objective; the smallest priority
  seen infinitely often is $0$. However, it does not satisfy $\mathsf{BTW}(p)$.
  At each step of the construction of the path, a delay of $n$ takes place between
  priority $1$ in $\ell_0$ and priority $0$ in $\ell_2$. No matter the chosen suffix
  of the sequence of states and the chosen bound $\lambda$, the objective
  $\mathsf{DFTW}(p_\mathcal{B}, \lambda)$ cannot be satisfied, therefore the objective
  $\mathsf{BTW}(p_\mathcal{B})$ is not satisfied.
\end{proof}

\begin{remark}
  In the location $\ell_1$ of the previous TA, the invariant $\true$ allows
  us to wait for an arbitrary amount of time in $\ell_1$. However, this aspect
  of the TA is not crucial to illustrate that the inclusions of
  Lemma~\ref{lemma:inclusions} are strict.
  
  It is possible to obtain an example TA in which the claims of
  Lemma~\ref{lemma:strictness:inclusions} hold and such that invariants prevent
  time from diverging without infinitely often traversing edges. This can
  be accomplished by a straightforward adaptation of the TA $\mathcal{B}$
  of Figure~\ref{figure:paritynotwindow}; it suffices to change the invariant
  of $\ell_1$ to $x\leq 1$ and add an edge from $\ell_1$ to itself that
  resets $x$. Such an alteration does not change the behavior of the TA.
\end{remark}

The proof of Lemma~\ref{lemma:strictness:inclusions} illustrates that
window parity objectives, in general, are a strict strengthening of parity
objectives. Furthermore, it shows that, in general, there is no uniform
bound $\lambda$ such that all paths satisfying a direct or indirect bounded
timed window objective satisfy the corresponding fixed objective for $\lambda$.
However, we show in Section~\ref{section:verification} that if all
time-divergent paths of a TA satisfy a (respectively direct) bounded timed
window objective, then one can find a bound $\lambda$ such that all paths
satisfy the (respectively direct) fixed timed window
objective for $\lambda$ (Corollaries~\ref{corollary:direct:uniformity}
and~\ref{corollary:indirect:uniformity}).
Similarly, in TGs, $\player_1$ has a winning strategy for a bounded objective
if and only if they have a winning strategy for some corresponding fixed
objective (Theorems~\ref{theorem:games:direct:winning}
and~\ref{theorem:games:indirect:winning}).
In the sequel, we do not consider algorithms based on reductions to the
direct case; the bounds used to reduce bounded objectives to fixed objective
may be large and induce an otherwise avoidable computational cost.
This justifies alternative approaches.

\subsection{Simplifying paths violating window objectives}\label{section:objectives:technical}

In this section, we provide a technical result used for the verification and
realizability of bounded timed window objectives. First, we introduce some
terminology. We say a path $\pi=s_0\xrightarrow{m_0} s_1\ldots$ of $\automaton$
(respectively, a play $\pi=s_0 (m_0^{(1)}, m_0^{(2)})s_1\ldots$ of $\game$)
eventually follows the cycle
$(\ell_0, R_0)\xrightarrow{\tau}\ldots \xrightarrow{\tau} (\ell_n, R_n)$
of the region abstraction if there exists $i\in\IN$
such that for all $j\in \{i, i + 1, \ldots, i + n - 1\}$
and all $k\in\IN$, $[s_{j + n\cdot k}] = (\ell_j, R_j)$.
We say that a cycle of the region abstraction is \textit{time-divergent} if
all paths of the TA (or, equivalently, all plays of the TG) that eventually
follow this cycle are time-divergent.

The main result of this section allows us to extract
time-divergent cycles in the region abstraction from a path or play
violating a timed good window objective for a sufficiently large bound.
This result can then be applied to any path or play that violates the
direct or indirect bounded timed window objective; it follows from the
definition that,
for any bound $\lambda$, there is a suffix violating the timed good window
objective for $\lambda$.

For the sake of generality, we abstract whether we consider paths or plays:
we state the upcoming result in terms of sequences of states.
In practice, only the actions are abstracted away; delays
between states are encoded by the global clock $\globalclock$.
We say that for any sequence of states $s_0s_1\ldots\in S^\omega$, the
delays are bounded by $B\in\IN$ if
$v_{n+1}(\globalclock) - v_n(\globalclock)\leq B$ for all $n\in\IN$.
We also extend this
terminology to paths and plays via their induced sequence of states.

The rough idea of the following lemma is as follows:
assuming that delays are bounded
along a sequence of states, if the timed good window objective is violated for
some large enough
bound $\lambda\in\IN$, it is possible to find within $\lambda$ time units from the
start of the sequence a time-divergent cycle in the region abstraction.

In the context of TGs, we will seek to apply the result to construct an
outcome of a given finite-memory region strategy violating a window objective.
A \textit{deterministic finite automaton} (DFA) over state regions is a tuple
$(\fmstates, \mathfrak{m}_\init, \update)$ where $\fmstates$ is a finite set
of states, $\mathfrak{m}_\init\in\fmstates$ and
$\update\colon \fmstates\times (L\times\regions)\to \fmstates$.
Given a Mealy machine
$\fmmealy = (\fmstates, \mathfrak{m}_\init, \update, \nextmove)$
encoding a finite-memory region  strategy, we refer to
$(\fmstates, \mathfrak{m}_\init, \update)$ as the DFA underlying $\fmmealy$.
Showing that we can find cycles in the region abstraction gives us no
information on the finite-memory strategy. Therefore, we instead
require the stronger claim that we can find a cycle in the
product of the region abstraction of $\automaton$ and the underlying DFA
within the first $\lambda$ time units of the sequence of states.

\begin{lemma} \label{lemma:technical:bounded}
  Let $(\fmstates, \mathfrak{m}_\init, \update)$ be a DFA.
  Let $\pi = s_0s_1s_2\ldots\in S^\omega$ be a sequence of states induced by
  some time-divergent path or play in which delays are bounded by $1$,
  and $\mathfrak{m}_0\mathfrak{m}_1\mathfrak{m}_2\ldots\in\fmstates^\omega$
  be the sequence inductively defined by $\mathfrak{m}_0 = \mathfrak{m}_\init$
  and $\mathfrak{m}_{k+1} = \update(\mathfrak{m}_k, [s_k])$.
  Let $\lambda = 2\cdot |L|\cdot |\regions|\cdot |\fmstates|+ 3$.
  If $\pi\notin\mathsf{TGW}(p, \lambda)$, then there exist some
  indices $i < j$ such that
  $([s_i], \mathfrak{m}_i) = ([s_j], \mathfrak{m}_j)$, the
  global clock $\globalclock$ passes some integer bound between indices $i$ and $j$,
  and strictly less than $\lambda$ time units elapse before reaching $s_j$ from $s_0$.
\end{lemma}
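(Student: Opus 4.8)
The plan is to obtain the indices by a pigeonhole argument on \emph{integer-crossing indices}, appealing only to time-divergence and the delay bound; the priority structure plays no role in producing the indices. The hypothesis $\pi\notin\mathsf{TGW}(p,\lambda)$ serves to place the extraction inside a window in which the good-window condition fails, which is what makes the resulting time-divergent cycle useful in the applications of this lemma, but the existence of the indices themselves rests only on the fact that $\pi$ is time-divergent with delays bounded by $1$.

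First I would exploit that $\globalclock$ is never reset, so $(v_n(\globalclock))_{n\in\IN}$ is non-decreasing, increases by at most $1$ at each step, and tends to $+\infty$. Writing $t_0=v_0(\globalclock)$, for each integer $N>t_0$ set $i_N=\min\{n\mid v_n(\globalclock)\geq N\}$, the first index at which the milestone $N$ is reached. By minimality $v_{i_N-1}(\globalclock)<N$, and since one step adds at most $1$ we get $N\leq v_{i_N}(\globalclock)<N+1$. Hence for every integer $N$ with $t_0<N\leq t_0+\lambda-1$ we have $v_{i_N}(\globalclock)<t_0+\lambda$, so the milestone $i_N$ is reached strictly within the first $\lambda$ time units.

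Next I would count. Because $\lambda-1\in\IN$, the number of integers $N$ with $t_0<N\leq t_0+\lambda-1$ is exactly $\lambda-1=2\cdot|L|\cdot|\regions|\cdot|\fmstates|+2$, which strictly exceeds the number $|L|\cdot|\regions|\cdot|\fmstates|$ of distinct pairs $([s],\mathfrak{m})\in(L\times\regions)\times\fmstates$. By pigeonhole there are two milestone indices, say $i=i_N$ and $j=i_{N'}$ with $N<N'$ (so $i<j$), whose pairs coincide: $([s_i],\mathfrak{m}_i)=([s_j],\mathfrak{m}_j)$, which is the first property. Since $v_i(\globalclock)<N+1\leq N'\leq v_j(\globalclock)$, the global clock crosses the integer bound $N'$ strictly between $i$ and $j$, giving the second property; and $v_j(\globalclock)<t_0+\lambda$ yields $v_j(\globalclock)-v_0(\globalclock)<\lambda$, the third property. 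The generous slack in $\lambda$ (the factor $2$ and the additive constant) comfortably absorbs any boundary losses and the requirement that the two equal-region indices be separated by a genuine crossing.

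The step I expect to be the real obstacle is the compatibility of the region repeat $[s_i]=[s_j]$ with the integer crossing $\lfloor v_i(\globalclock)\rfloor<\lfloor v_j(\globalclock)\rfloor$: a priori, clock-equivalence pins down integer parts of clocks whose value is below their threshold (condition (ii) of $\clockequiv$), which would forbid two region-equivalent states from having global-clock values a full integer apart. This is reconciled by the fact that $\globalclock$ measures only total elapsed time and does not appear in any guard or invariant, so $c_\globalclock=0$; once $v_n(\globalclock)>0$, condition (ii) no longer constrains $\lfloor v_n(\globalclock)\rfloor$, and only the fractional-part information of $\globalclock$ (condition (iii)) is retained. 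As all milestone indices satisfy $v(\globalclock)\geq N>t_0\geq 0$, their global-clock integer parts are invisible to $\clockequiv$, so two of them may indeed be region-equivalent while differing by a full integer; this is precisely what makes the extracted cycle time-divergent. (Even if $\globalclock$ were constrained, one would instead restrict to milestones beyond $c_\globalclock$, and since $|\regions|\geq 2c_\globalclock+1$ the chosen $\lambda$ is still large enough to reach such milestones inside the window, so the argument is unaffected.)
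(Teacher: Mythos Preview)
Your argument is correct and is genuinely simpler than the paper's. The paper proceeds by contradiction with an inductive elimination argument: it walks along the prefix of $\pi$ lying in the first $\lambda$ time units, marking each pair $([s],\mathfrak{m})$ as ``unsuitable'' the first time it is encountered if no later occurrence (within the prefix) is separated from it by an integer crossing of $\globalclock$, and then shows that at least $n$ pairs are eliminated by time $2n-1$, so all $|L|\cdot|\regions|\cdot|\fmstates|$ pairs would be exhausted before the prefix ends, a contradiction. Your route is a direct pigeonhole on the first-passage indices $i_N$ of the integer milestones $N\in(t_0,t_0+\lambda-1]$: there are $\lambda-1$ of them, more than the number of pairs, and two coinciding ones automatically satisfy the integer-crossing and time-bound requirements. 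This is shorter and makes transparent why the factor $2$ and the additive constant in $\lambda$ are not tight (indeed $|L|\cdot|\regions|\cdot|\fmstates|+2$ would already suffice). You are also right that the hypothesis $\pi\notin\mathsf{TGW}(p,\lambda)$ is not used to manufacture the indices---the paper's own proof does not use it either; the hypothesis is there so that, in the applications (Theorems~\ref{theorem:verification:direct},~\ref{theorem:verification:indirect}, Lemma~\ref{lemma:games:technical}), the extracted cycle lies inside a segment in which every prefix has odd minimum priority.

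Your final paragraph about the compatibility of $[s_i]=[s_j]$ with $\lfloor v_i(\globalclock)\rfloor<\lfloor v_j(\globalclock)\rfloor$ is a useful clarification but not a gap: once pigeonhole produces two milestones with equal region, the integer-crossing is a separate fact you have already established, and together they simply force both states to have $v(\globalclock)>c_\globalclock$ (via condition~(i) of $\clockequiv$). Nothing in the proof breaks even if $c_\globalclock>0$; the pigeonhole still applies verbatim, and the parenthetical workaround you sketch is unnecessary.
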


\begin{proof}
  Assume that $\pi\notin\mathsf{TGW}(p, \lambda)$. For any $j\in\IN$, let
  $v_j$ denote the clock valuation of $s_j$. Because we assume that
  $\pi$ is induced by a path or play, the sequence
  $(v_j(\globalclock))_{j\in\IN}$ is non-decreasing. It follows that the
  set $\{j\in\IN \mid v_j(\globalclock) - v_0(\globalclock) < \lambda\}$
  is an interval.
  Let $j^\star$ denote the greatest element of this interval; $j^\star$ is well-defined
  because we assume that $\pi$ is induced by some time-divergent path or play.
  We let $h = s_0\ldots s_{j^\star}$ be the prefix of $\pi$
  in which strictly less than $\lambda$ time units have elapsed.
  Observe that because delays between states are at most of $1$
  and $v_{j^\star+1}(\globalclock) - v_0(\globalclock)\geq\lambda$, it follows
  that  $v_{j^\star}(\globalclock) - v_0(\globalclock) \geq \lambda - 1$.
  
  We find the sought indices $i$ and $j$ by progressively checking each
  index up to $j^\star$ by induction.
  We mark elements
  $([s_i], \mathfrak{m}_i)\in (L\times \regions)\times \fmstates$
  as unsuitable if at step $i$ of our search there is no $j > i$
  such that $([s_i], \mathfrak{m}_i) = ([s_j], \mathfrak{m}_j)$ and
  $\lfloor v_i(\globalclock)\rfloor < \lfloor v_j(\globalclock)\rfloor$
  (i.e., the global clock passes a new integer bound). If at any step
  we do not mark the current region as unsuitable, we have found the sought
  indices $i$ and $j$ and stop the search procedure.

  In the remainder of this proof, we show that the procedure must terminate
  by finding a suitable pair of indices. By contradiction, we assume that all
  elements of $(L\times \regions)\times \fmstates$ appearing in
  $([s_0], \mathfrak{m}_0)\ldots([s_{j^\star}], \mathfrak{m}_{j^\star})$
  are marked as unsuitable, i.e., the search of a suitable pair fails.
  
  Any $([s], \mathfrak{m})\in (L\times \regions)\times \fmstates$
  that is marked as unsuitable during the search procedure
  can only appear again at most one time unit after its first appearance, otherwise it
  would not have been marked as unsuitable.
  This implies that whenever a pair $([s], \mathfrak{m})$ is marked as
  unsuitable, there is some point fixed in time from its first appearance
  after which it no longer appears in $h$, i.e., there is
  some $\delay\in\IR_{\geq 0}$ (depending on the smallest index for which
  we witness the pair) such that for all $i\leq j^\star$,
  $v_i(\globalclock)\geq v_0(\globalclock) + \delay$ implies
  $([s_i], \mathfrak{m}_i)\neq ([s], \mathfrak{m})$, in which case we say that
  the pair $([s], \mathfrak{m})$ is eliminated by (as shorthand for can no
  longer appear from) time $\delay$.
  We give lower bounds on the number of eliminated pairs depending
  on the time that has passed. We reach a contradiction by showing that
  we run out of pairs in $(L\times \regions)\times \fmstates$ before we
  reach $j^\star$.

  We claim that at least $n$ pairs are eliminated by time $2n-1$.
  We prove this by induction. The base case is handled by considering
  the first elements of the sequence: $(s_0, \mathfrak{m}_0)$ is eliminated
  by time $1$. Let $k_1\leq j^\star$ denote
  the latest index such that
  $([s_0], \mathfrak{m}_0) = ([s_{k_1}], \mathfrak{m}_{k_1})$. This index
  occurs at most one time unit after index $0$.

  Now assume inductively that
  we have shown that (at least) $n$ distinct pairs
  $([s_{k_1}], \mathfrak{m}_{k_1})$, \ldots, $([s_{k_n}], \mathfrak{m}_{k_n})$
  (where $k_i\leq j^\star$ denotes the index of the last occurrence of a pair)
  are eliminated by time $2n-1$.
  It follows that $([s_{k_n+1}], \mathfrak{m}_{k_n+1})$ is eliminated at
  most $2$ time units after the elimination of
  $([s_{k_n+1}], \mathfrak{m}_{k_n+1})$: there is at most $1$ time unit between
  indices $k_n$ and $k_n +1$, and at most $1$ time unit between the first
  and last occurrence of $([s_{k_n+1}], \mathfrak{m}_{k_n+1})$. This shows that
  there are $n+1$ eliminated pairs by time $2n+1$.

  It follows that all elements of $(L\times \regions)\times \fmstates$
  are eliminated at time $\lambda - 2$, i.e., there are no more pairs
  that can appear in $h$ after this time.
  However, we have $v_{j^\star}(\globalclock) - v_0(\globalclock) \geq
  \lambda - 1$, i.e., it is absurd to have had
  $([s_{j^\star}], \mathfrak{m}_{j^\star})$ eliminated.
\end{proof}

The main interest of the lemma is to construct witness paths or plays that
violate the direct bounded timed window objective.
By following the sequence of states up to index $i$ and then looping in the
cycle formed by the sequence of states from $i$ to index $j$ (modulo
clock-equivalence), one obtains a path along which, at
all steps, the smallest priority seen from the start is odd, i.e., such that
no good window can ever be witnessed from the start.

\section{Verification of timed automata}\label{section:verification}

In this section, we are concerned with the verification of direct and indirect
bounded timed window objectives in TAs. For both objectives, we show
the equivalence of the following assertions:
(1) there exists a time-divergent witness to the violation of a (direct)
bounded objective, (2) there exists a time-divergent witness to the
violation of the matching (direct) fixed objective for a sufficiently
large bound, and (3) there exists a set of states (regions) reachable
from one another verifying some properties that we describe later in
this section.
Nondeterministic algorithms for the verification of the objectives are
obtained by guessing
appropriate regions and checking that they are reachable from one another.

The outline of the section is as follows.
Section~\ref{section:verification:criteria}
describes criteria attesting to the existence of time-divergent paths
violating the direct and indirect bounded timed window objectives in
timed automata. Verification algorithms are described
in Section~\ref{section:verification:complexity}.

We fix for this entire section a TA
$\automaton = (L, \ell_\init, C, \Sigma, I, E)$ and
a priority function $p\colon L\to \{0, \ldots, \maxpriority-1\}$.

\subsection{Equivalent conditions to the violation of bounded objectives}\label{section:verification:criteria}

In this section, we provide conditions equivalent to the existence of
paths violating the direct and indirect bounded timed window objectives.
We are also concerned with the question of uniformity of time bounds;
we show that in a timed automaton in which all time-divergent paths
satisfy a direct or indirect bounded timed window objective,
there exists a bound for which a direct or indirect fixed  objective is
satisfied.

\subsubsection{Direct bounded timed window objectives}

A path satisfies the direct bounded timed window objective if at all steps,
there is a good window and the size of these good windows is bounded
overall. Therefore, a path can violate this objective in one of two ways.
First, it may be the case that at some step, no good window of any size is
witnessed. Second, it may be the case that good windows are witnessed at
all steps, but that there is no bound on the size of these windows.

We show that whenever some time-divergent path violates the direct bounded
timed window objective, there is always some witness that falls in the first
category. Furthermore, a witness that takes the form of a path that
eventually follows a
time-divergent cycle of the region abstraction can be chosen.

Given a time-divergent path $\pi$ violating the direct bounded timed window
objective, the rough idea to derive a suitable witness is the following.
We consider some state $s_1$ along $\pi$ from which there is no good window
for the window size $\lambda$  in the statement of
Lemma~\ref{lemma:technical:bounded} (assuming that the DFA has only one state).
We obtain through this lemma two region-equivalent states $s_2$ and $s_2'$
appearing in $\pi$ within
$\lambda$ time units of $s_1$, such that in the path fragment of $\pi$ between
$s_2$ and $s_2'$, the global clock $\globalclock$ passes a new integer value. As
explained in Section~\ref{section:objectives:technical},
we can construct a path violating the direct bounded window objective
by following $\pi$ up to $s_2$ and then following a path that repeats the
time-divergent cycle in the region abstraction induced by
the sequence of states between $s_2$ and $s_2'$ in $\pi$.

The states described in the construction above can be characterized as follows.
First, there must be a finite path from the state $s_1$ to the state
$s_2$ in which the smallest priority in all prefixes is odd. Second, we require
that $[s_2]$ be reachable from $s_2$ without witnessing a good window from
$s_1$ and also in such a way that the global clock $\globalclock$ passes a new
integer bound. The latter property can also be translated to reachability
requirements; we require that, on the sought path from $s_2$ to its state
region, there be
states $s_3$ and $s_4$ such that the valuation of $\globalclock$ is integral
in only one of the two states $s_3$ and $s_4$.

We show hereunder that the existence of a time-divergent path violating
the direct bounded timed window objective is equivalent to the existence of
states satisfying the properties above. Because we need only consider a
state from which there is no good window for the bound $\lambda$ of
Lemma~\ref{lemma:technical:bounded}, these two conditions are also
equivalent to the existence of a time-divergent path violating
$\mathsf{DFTW}(p, \lambda)$.

\begin{theorem}\label{theorem:verification:direct}
  The following three statements are equivalent.
  \begin{enumerate}
  \item There exists a time-divergent initial path $\pi\notin\mathsf{DBTW}(p)$.
    \label{item:verification:direct:1}
  \item There exists a time-divergent initial path
    $\pi\notin\mathsf{DFTW}(p, 2\cdot|L|\cdot|\regions| + 3)$.
    \label{item:verification:direct:2}
  \item There exist reachable states $s_1$, $s_2$, $s_2'$,
  $s_3$ and $s_4$ such that $s_2\clockequiv s_2'$,
  the valuation of $\globalclock$ is integral in only one of the two
  states $s_3$ and $s_4$, and there is a finite path $h$ from
  $s_1$ to $s_2'$ passing through
  $s_2$, $s_3$ and $s_4$ in order such that the smallest priority in all
  of the prefixes of $h$ is odd.
  \label{item:verification:direct:3}
  \end{enumerate}
\end{theorem}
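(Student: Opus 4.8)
The plan is to establish the cycle of implications $(\ref{item:verification:direct:1})\Rightarrow(\ref{item:verification:direct:2})\Rightarrow(\ref{item:verification:direct:3})\Rightarrow(\ref{item:verification:direct:1})$. Throughout, write $\lambda^\star = 2\cdot|L|\cdot|\regions| + 3$ for the bound in statement~(\ref{item:verification:direct:2}); this is precisely the bound of Lemma~\ref{lemma:technical:bounded} instantiated with a single-state DFA, which is the workhorse of the argument. The implication $(\ref{item:verification:direct:1})\Rightarrow(\ref{item:verification:direct:2})$ is immediate: by the inclusion $\mathsf{DFTW}(p,\lambda^\star)\subseteq\mathsf{DBTW}(p)$ of Lemma~\ref{lemma:inclusions}, any time-divergent initial path $\pi\notin\mathsf{DBTW}(p)$ already satisfies $\pi\notin\mathsf{DFTW}(p,\lambda^\star)$, so the same path serves as a witness.

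For $(\ref{item:verification:direct:2})\Rightarrow(\ref{item:verification:direct:3})$, I would first refine the witness into a time-divergent initial path $\pi'$ whose delays are bounded by $1$, by splitting each delay move into sub-unit delay moves. Since the inserted intermediate states share the location of the move being split and the global clock only increases, this refinement creates no new good windows and preserves all global-clock spans, so $\pi'\notin\mathsf{DFTW}(p,\lambda^\star)$ still holds; I then pick an index $n$ with $\pi'_{n\to}\notin\mathsf{TGW}(p,\lambda^\star)$ and set $s_1$ to be the $n$-th state. Applying Lemma~\ref{lemma:technical:bounded} to this suffix yields indices $i<j$ with region-equivalent states, which I take as $s_2$ and $s_2'$, reached within $\lambda^\star$ time units of $s_1$ and between which $\globalclock$ passes an integer bound. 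Because delays are at most $1$, this crossing occurs on the fragment from $s_2$ to $s_2'$; refining once more so that a state lands exactly on the integer, I obtain $s_3$ with $\globalclock$ integral immediately followed by $s_4$ with $\globalclock$ non-integral. The finite path $h$ is the fragment from $s_1$ to $s_2'$, passing through $s_2,s_3,s_4$ in order; every prefix of $h$ stays within $\lambda^\star$ time units of $s_1$, so the absence of a good window from $s_1$ forces the smallest priority on each prefix to be odd. All five states lie on an initial path and are therefore reachable.

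The implication $(\ref{item:verification:direct:3})\Rightarrow(\ref{item:verification:direct:1})$ is the construction and, I expect, the main obstacle. Using reachability of $s_1$ I prepend an initial path to $s_1$, follow $h$ up to $s_2$, and then repeat forever the region cycle $[s_2]\xrightarrow{\tau}\cdots\xrightarrow{\tau}[s_2']=[s_2]$ induced by the fragment from $s_2$ to $s_2'$; the path/region correspondence from the preliminaries lets me lift this cycle to a concrete infinite path continuing from $s_2'$, yielding a path $\rho$. The pair $s_3,s_4$ certifies that $\globalclock$ crosses an integer along the cycle, so the integral part of $\globalclock$ strictly increases on each repetition and $\rho$ is time-divergent. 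To conclude $\rho\notin\mathsf{DBTW}(p)$, I would show that along the suffix of $\rho$ starting at $s_1$, the smallest priority seen from that start is odd at every prefix: since the loop only repeats locations already occurring in $h$, the minimum over any prefix of $\rho$ from $s_1$ coincides with the minimum over the corresponding prefix of $h$, which is odd by hypothesis. Hence no good window of any size is witnessed from $s_1$, so this suffix lies outside $\mathsf{TGW}(p,\lambda)$ for every $\lambda$, whence $\rho\notin\mathsf{DFTW}(p,\lambda)$ for every $\lambda$ and therefore $\rho\notin\mathsf{DBTW}(p)$.

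The delicate points to get right are exactly the two invariants in the last implication: verifying that the minimum priority on every prefix from $s_1$ stays odd despite the loop re-entering $h$ at its middle state $s_2$ (which requires tracking that a prefix ending inside the loop has the same minimum as the matching prefix of $h$), and confirming that the lifted cyclic path genuinely time-diverges, for which the role of $s_3,s_4$ as a certificate of the integer crossing is essential.
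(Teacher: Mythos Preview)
Your proposal is correct and follows essentially the same route as the paper: the same cycle of implications, the same use of Lemma~\ref{lemma:technical:bounded} with a trivial DFA to extract the region-cycle between $s_2$ and $s_2'$, and the same looped construction for $(\ref{item:verification:direct:3})\Rightarrow(\ref{item:verification:direct:1})$ with time-divergence certified by the $s_3,s_4$ integrality alternation. The only cosmetic difference is that the paper takes $s_3=s_2$ and then locates $s_4$ by a single move-split, whereas you split to land $s_3$ exactly on an integer and take $s_4$ immediately after; both choices work.
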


\begin{proof}
  Let $\lambda = 2\cdot|L|\cdot|\regions| + 3$. This $\lambda$
  is the bound of Lemma~\ref{lemma:technical:bounded} assuming a deterministic
  finite automaton with a single state.
  The implication
  (\ref{item:verification:direct:1}$\implies$\ref{item:verification:direct:2})
  follows directly from the inclusion
  $\mathsf{DFTW}(p, \lambda)\subseteq \mathsf{DBTW}(p)$.

  We move on to the proof of
  (\ref{item:verification:direct:2}$\implies$\ref{item:verification:direct:3}).
Let us assume there is some time-divergent initial path
$\pi\notin\mathsf{DFTW}(p, \lambda)$. We may assume without loss of generality
that delays in $\pi$ are bounded by $1$: moves $(\delay, a)$ with large delays can
be simulated by using $\lfloor \delay \rfloor$ moves of the form $(1, \bot)$
followed by the move $(\fracpart(\delay), a)$. It can easily
be shown that the objective is still violated following this modification.

Let $s_1$
be some state of $\pi$ such that some suffix $\pi'$ of $\pi$ starting in
$s_1$ violates the timed good window objective $\mathsf{TGW}(p, \lambda)$.
It follows from Lemma~\ref{lemma:technical:bounded} that there are two
region-equivalent states $s_2$ and $s_2'$ in $\pi'$ within the first
$\lambda$ time units such that, in $\pi'$, the global clock passes an integer
bound between the two states.

One can take $s_3 = s_2$. If in $s_2$, the valuation of $\globalclock$ is
an integer (resp.~not an integer), take $s_4$ to be any state in the path
from $s_2$ to $s_2'$ in $\pi'$ that has a non-integral (resp.~integral)
valuation. If no such state exists, it suffices to split a move
$(\delay, a)$ in $\pi'$ into two well-chosen  moves
$(\delay_1, \bot)$ and $(\delay_2, a)$ where $\delay = \delay_1 + \delay_2$
and the global clock $\globalclock$ is not equal to an
integer (respectively is equal to an integer) after $\delay_1$ time units elapse.
It follows from  $\pi'\notin\mathsf{TGW}(p, \lambda)$ that $s_1$, $s_2$, $s_2'$,
$s_3$ and $s_4$ satisfy the requirement of the theorem. This ends this direction
of the proof.

Finally, let us establish the implication (\ref{item:verification:direct:3}$\implies$\ref{item:verification:direct:1}).
Assume now that we have the states $s_1$, $s_2$, $s_2'$, $s_3$, $s_4$
and $h$ satisfying the properties in the statement of the theorem.
Let $h'$ denote the suffix of $h$ between states $s_2$ and $s_2'$.
We argue that any initial path $\pi$ obtained by reaching $s_1$ from $s_\init$
(by any means), then
following $h$ up to $s_2$, and then following the cycle in the region
abstraction induced by $h'$ is time-divergent and violates $\mathsf{DBTW}(p)$.
Fix one such path $\pi$ and let $\pi'$ denote its suffix starting from $s_1$.

First, let us argue the time-divergence of $\pi$. The path $\pi$
passes through states that are equivalent to $s_3$ and to $s_4$ infinitely often. In
other words, the global clock $\globalclock$ is infinitely often an integer, and
infinitely often not an integer. It follows from the fact that
$\globalclock$ cannot
be reset that it must pass infinitely many integer bounds, i.e., $\pi$ is
time-divergent.

Second, let us move on to showing that $\pi\notin\mathsf{DBTW}(p)$.
It suffices to show that no matter the bound $\lambda\in\IN$, the objective
$\mathsf{TGW}(p, \lambda)$ is violated by $\pi'$. This can be established by showing
that in any prefix of $\pi'$, the smallest priority that occurs is odd.
For any prefix of $\pi'$ that is a prefix of $h$, this property
follows from our hypothesis on $h$. For any subsequent prefix,
no new priorities are introduced as we repeat a cycle in the region abstraction
following
the suffix $h'$ of $h$. This shows that $\pi\notin\mathsf{DBTW}(p)$ and ends
the proof of this implication.

\end{proof}

In light of Theorem~\ref{theorem:verification:direct}, we directly
obtain the following corollary. 

\begin{corollary}\label{corollary:direct:uniformity}
  Let $\lambda = 2\cdot |L|\cdot |\regions| + 3$.
  All time-divergent paths of $\automaton$ satisfy $\mathsf{DBTW}(p)$ if and
  only if all time-divergent paths of
  $\automaton$ satisfy $\mathsf{DFTW}(p,\lambda)$.
\end{corollary}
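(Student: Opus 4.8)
The plan is to treat this corollary as a direct logical rephrasing of the equivalence between items~(\ref{item:verification:direct:1}) and~(\ref{item:verification:direct:2}) of Theorem~\ref{theorem:verification:direct}, which is already established. First I would point out that the constant $\lambda = 2\cdot|L|\cdot|\regions| + 3$ appearing in the corollary is exactly the threshold used in item~(\ref{item:verification:direct:2}) of that theorem, so no new quantity needs to be introduced.

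Next I would set up the two universal statements and negate them. Writing $P$ for the assertion ``all time-divergent (initial) paths of $\automaton$ satisfy $\mathsf{DBTW}(p)$'' and $Q$ for ``all time-divergent (initial) paths of $\automaton$ satisfy $\mathsf{DFTW}(p, \lambda)$'', the negation $\neg P$ is precisely the existence of a time-divergent initial path outside $\mathsf{DBTW}(p)$, i.e.\ item~(\ref{item:verification:direct:1}), and $\neg Q$ is precisely the existence of a time-divergent initial path outside $\mathsf{DFTW}(p, \lambda)$, i.e.\ item~(\ref{item:verification:direct:2}). Theorem~\ref{theorem:verification:direct} yields $(\ref{item:verification:direct:1})\iff(\ref{item:verification:direct:2})$, equivalently $\neg P \iff \neg Q$, and taking contrapositives gives $P \iff Q$, which is the claim. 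The one point deserving a word of care is the passage between ``all paths'' and ``all \emph{initial} paths'': since these objectives are closed under taking suffixes, prefixing a path from a reachable state with an access path from $s_\init$ preserves membership and non-membership, so the two formulations coincide and the reduction to the theorem is harmless.

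Finally I would make explicit where the content sits. The implication $Q \implies P$ is routine, being the contrapositive of $(\ref{item:verification:direct:1})\implies(\ref{item:verification:direct:2})$; at the level of objectives it is immediate from the inclusion $\mathsf{DFTW}(p,\lambda)\subseteq\mathsf{DBTW}(p)$ of Lemma~\ref{lemma:inclusions}, so anything satisfying the fixed objective for $\lambda$ already satisfies the bounded one. The substance lies in the reverse direction $P \implies Q$, the contrapositive of $(\ref{item:verification:direct:2})\implies(\ref{item:verification:direct:1})$: from a single time-divergent witness violating the fixed objective at the exact threshold $\lambda$ one must produce a (possibly different) witness violating the bounded objective for \emph{every} bound at once. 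This is the main obstacle, and it is discharged through the chain $(\ref{item:verification:direct:2})\implies(\ref{item:verification:direct:3})\implies(\ref{item:verification:direct:1})$ of the theorem, where Lemma~\ref{lemma:technical:bounded} extracts a time-divergent cycle of the region abstraction seeing no good window, and pumping that cycle yields a time-divergent path whose every prefix has odd minimal priority, thereby defeating $\mathsf{TGW}(p,\lambda')$ for all $\lambda'$.
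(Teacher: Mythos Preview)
Your proposal is correct and mirrors the paper's own treatment: the corollary is presented there as an immediate consequence of Theorem~\ref{theorem:verification:direct}, and your contraposition of the equivalence $(\ref{item:verification:direct:1})\iff(\ref{item:verification:direct:2})$ is exactly that. One small caveat on your aside about initial versus arbitrary paths: prefixing does \emph{not} preserve membership in $\mathsf{DFTW}(p,\lambda)$ or $\mathsf{DBTW}(p)$ (a prepended segment can introduce a bad window), only non-membership is preserved---but since the verification problem and hence the corollary are about initial paths, and in any case only non-membership under prefixing is needed for the reduction, this slip is harmless.
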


Even though the corollary above suggests that we can reduce verification
of bounded objectives to verification of fixed objectives, the verification
of fixed objectives requires time polynomial in the supplied time bound.
Intuitively, one must explore the region abstraction of a TA derived from
$\automaton$ in which an additional clock $z\notin C$ is introduced and increases
up to the bound of the objective.
Given that the bound provided by Lemma~\ref{lemma:technical:bounded} is
large, we develop approaches that avoid the cost incurred
by this reduction.

\subsubsection{Bounded timed window objective}\label{section:verification:indirect}

We now move on to the bounded timed window objective. In this case,
time-divergent paths that eventually repeat a cycle of the region abstraction
no longer suffice as witnesses to the violation of the objective.
In the direct case, the finite path preceding the cycle mattered in the
violation, e.g., if an odd priority smaller than all those of the cycle
appeared along this path.
However, in such paths, only the cycle itself would
matter by prefix-independence for the indirect objective.

Lemma~\ref{lemma:strictness:inclusions} asserts the existence of a TA in
which all time-divergent paths satisfy the parity objective, but some
violate the bounded timed window objective. This implies that even if all
time-divergent cycles in the region abstraction have an even smallest
priority, this does not ensure the satisfaction of the bounded timed window
objective. It follows that the form of witnesses is more complex
in this case.

Nonetheless, witnesses can be always be found with a recursive structure.
Assume that some time-divergent path violates the bounded timed window
objective. Then there is a another violating path operating in stages
labeled by natural numbers $n\in\IN$, with each stage divided in two parts.
In the first part of stage $n$, we visit some well-chosen fixed state
region $[s]$
from which there is a time-divergent path that violates the direct bounded
objective. Once a state belonging to such a region is reached, we can follow a time-divergent path
violating the direct objective that eventually follows a cycle in the region
abstraction for (at least) $n$ time units, before moving on to stage $n+1$.

In the direct case, Theorem~\ref{theorem:verification:direct} essentially
states  that one can find a witness to the violation of the objective if
and only if there exists reachable states $s_1$, $s_2$ and $s_2'$ such that
$s_2\clockequiv s_2'$,
there is a finite path $h$ from $s_1$ to $s_2'$ passing through $s_2$ such
that the smallest priority in any prefix of $h$ is odd and an integer bound
is passed by the global clock between $s_2$ and $s_2'$ in $h$. The
characterization in the prefix-independent case is only slightly stronger:
we only require, in addition to the above, that $[s_1]$ be reachable from $s_2$,
without any constraints on the path between these two states. Intuitively,
we return to $[s_1]$ whenever a stage has ended. One such state is easy to
find: there are finitely many regions and infinitely many suffixes of the
path from which there are no good windows for some sufficiently large bound,
therefore some region must repeat.

We formalize our characterization below. Similarly to the direct case,
one can also show that the existence of a time divergent path violating
$\mathsf{BTW}(p)$ is equivalent to the existence of a
time-divergent path violating $\mathsf{FTW}(p, \lambda)$ for the bound
$\lambda$ of Lemma~\ref{lemma:technical:bounded}.

\begin{theorem}\label{theorem:verification:indirect}
  The following three statements are equivalent.
  \begin{enumerate}
  \item There exists a time-divergent initial path $\pi\notin\mathsf{BTW}(p)$.
    \label{item:verification:indirect:1}
  \item There exists a time-divergent initial path
    $\pi\notin\mathsf{FTW}(p, 2\cdot|L|\cdot|\regions| + 3)$.
    \label{item:verification:indirect:2}
  \item There exist reachable states $s_1$, $s_2$, $s_2'$,
  $s_3$ and $s_4$ such that $s_2\clockequiv s_2'$,
  the valuation of $\globalclock$ is integral in only
  of the two states $s_3$ and $s_4$,
  there is a finite path $h$ from $s_1$ to $s_2'$ passing through
  $s_2$, $s_3$ and $s_4$ (in order) such that the smallest priority in all
  of the prefixes of $h$ is odd, and the region $[s_1]$ is reachable from $s_2'$.
  \label{item:verification:indirect:3}
  \end{enumerate}
\end{theorem}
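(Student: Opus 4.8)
The plan is to establish the theorem as a cycle of three implications, closely mirroring the proof of Theorem~\ref{theorem:verification:direct} for the direct case, and isolating the single new ingredient required by prefix-independence: the reachability of the region $[s_1]$ from $s_2'$. Throughout I fix $\lambda = 2\cdot|L|\cdot|\regions| + 3$, the bound of Lemma~\ref{lemma:technical:bounded} instantiated with a single-state DFA. The implication (\ref{item:verification:indirect:1}$\implies$\ref{item:verification:indirect:2}) is immediate from the inclusion $\mathsf{FTW}(p,\lambda)\subseteq\mathsf{BTW}(p)$ of Lemma~\ref{lemma:inclusions}: a path outside $\mathsf{BTW}(p)$ is a fortiori outside $\mathsf{FTW}(p,\lambda)$.

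For (\ref{item:verification:indirect:2}$\implies$\ref{item:verification:indirect:3}), I would first reduce to delays bounded by $1$ by splitting large moves into unit delay moves, exactly as in the direct case. Since $\pi\notin\mathsf{FTW}(p,\lambda)$, no suffix lies in $\mathsf{DFTW}(p,\lambda)$, so the set of indices $m$ with $\pi_{m\to}\notin\mathsf{TGW}(p,\lambda)$ is infinite. As there are only finitely many state regions, by the pigeonhole principle some region $R$ is the region of $s_m$ for infinitely many such bad indices $m$. I take $s_1$ to be such a state (the first one, say), and apply Lemma~\ref{lemma:technical:bounded} to the suffix starting at $s_1$ to obtain region-equivalent states $s_2\clockequiv s_2'$ within $\lambda$ time units across which $\globalclock$ passes an integer bound; the states $s_3,s_4$ with the integrality/non-integrality property are then extracted as in Theorem~\ref{theorem:verification:direct}, splitting a move if necessary. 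The extra requirement that $[s_1]$ be reachable from $s_2'$ comes for free from the pigeonhole: because $R$ recurs infinitely often among bad indices, there is a later bad state $s_1''$ with $[s_1'']=R=[s_1]$ occurring after $s_2'$ along $\pi$, and the intervening segment of $\pi$ witnesses the reachability of $[s_1]$ from $s_2'$.

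For (\ref{item:verification:indirect:3}$\implies$\ref{item:verification:indirect:1}), I would build a time-divergent witness with the staged structure described before the theorem. After reaching $s_1$ from $s_\init$, each stage $n\in\IN$ proceeds from a state in $[s_1]$ by following a region-faithful copy of $h$ up to $s_2'$, looping the cycle induced by the segment $h'$ between $s_2$ and $s_2'$ until at least $n$ time units have elapsed, and then following the path witnessing that $[s_1]$ is reachable from $s_2'$ to return to some state in $[s_1]$, whereupon stage $n+1$ begins. Following region-equivalent copies of $h$ from arbitrary states of $[s_1]$ is justified by the correspondence between region-abstraction paths and concrete paths recalled in the preliminaries, and the property that the smallest priority in every prefix of $h$ is odd transfers to such copies since it depends only on locations. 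Time-divergence follows because each traversal of the cycle passes an integer bound of the non-resettable clock $\globalclock$ (via the states equivalent to $s_3$ and $s_4$) and stage lengths grow without bound. To see $\pi\notin\mathsf{BTW}(p)$, I argue that from the state in $[s_1]$ opening any stage $n$, the smallest priority witnessed remains odd for at least $n$ time units (the loop introduces no new priorities, as $h'$ is a suffix of $h$), so $\mathsf{TGW}(p,\lambda')$ fails there for every $\lambda'\leq n$; hence for any suffix index $N$ and any bound $\lambda'$, a stage $n>\lambda'$ starting past $N$ supplies a suffix violating $\mathsf{TGW}(p,\lambda')$, so $\pi_{N\to}\notin\mathsf{DFTW}(p,\lambda')$ for all $\lambda'$, giving $\pi\notin\mathsf{BTW}(p)$.

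The main obstacle I anticipate is the correctness of the staged construction in the last implication, specifically convincing the reader that prefix-independence fully neutralizes the unconstrained return segments: these segments may well introduce small even priorities and thus good windows, but since the violation of $\mathsf{BTW}(p)$ only requires exhibiting, arbitrarily late, a starting point with no good window of any fixed size, the stage-opening states alone suffice and the return segments are irrelevant. Making precise that windows of size $<\lambda'$ opened at a stage start stay entirely within the odd-priority $h$-plus-loop portion, and that growing the stage length beyond every fixed $\lambda'$ defeats every candidate bound simultaneously, is the delicate bookkeeping at the heart of the proof.
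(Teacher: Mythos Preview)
Your proposal is correct and follows essentially the same approach as the paper's proof: the same trivial inclusion for (\ref{item:verification:indirect:1}$\implies$\ref{item:verification:indirect:2}), the same pigeonhole argument on infinitely many bad indices to pick an $s_1$ whose region recurs (yielding reachability of $[s_1]$ from $s_2'$) together with the extraction of $s_2,s_2',s_3,s_4$ from the direct case, and the same staged construction looping the region cycle increasingly many times before returning to $[s_1]$ for (\ref{item:verification:indirect:3}$\implies$\ref{item:verification:indirect:1}). The only cosmetic difference is that the paper counts loop iterations ($n{+}1$ repetitions of the cycle) while you phrase it in terms of elapsed time (loop until at least $n$ time units pass); since each cycle traversal forces $\globalclock$ past a new integer, these are equivalent.
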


\begin{proof}
  Let $\lambda = 2\cdot|L|\cdot|\regions| + 3$
  be the bound of Lemma~\ref{lemma:technical:bounded} assuming a deterministic
  finite automaton with a single state.
  The implication (\ref{item:verification:indirect:1}$\implies$\ref{item:verification:indirect:2})
  follows directly from the inclusion
  $\mathsf{FTW}(p, \lambda)\subseteq \mathsf{BTW}(p)$.
  
  To establish the implication
  (\ref{item:verification:indirect:2}$\implies$\ref{item:verification:indirect:3}),
  we explain how to adapt the proof of
  Theorem~\ref{theorem:verification:direct} to derive the five states from
  a time-divergent initial path $\pi\notin \mathsf{FTW}(p, \lambda)$.   
  Let us assume that there is some time-divergent initial path 
  $\pi\notin\mathsf{FTW}(p, \lambda)$. In particular, 
  $\pi\notin\mathsf{DFTW}(p, \lambda)$. It is shown in the proof
  of Theorem~\ref{theorem:verification:direct} that by taking any state $s_1$
  in $\pi$ such that some suffix $\pi'$ of $\pi$ starting in
  $s_1$ violates the timed good window objective $\mathsf{TGW}(p, \lambda)$,
  we can find the sought-after states $s_2$, $s_2'$, $s_3$ and $s_4$,
  without the requirement that $[s_1]$ be reachable from $s_2'$.

  To ensure that $[s_1]$ is reachable from a matching $s_2'$, we 
  choose a state $s_1$ subject to some constraints. We show that
  there must be some state $s_1$ in $\pi$ such that some suffix
  $\pi'$ starting in $s_1$ satisfies $\pi'\notin \mathsf{TGW}(p, \lambda)$
  and such that there are states equivalent to $s_1$ infinitely often in $\pi'$.
  Because the region $[s_1]$ occurs
  infinitely often along $\pi$, there is an occurrence after
  the appearance of $s_2'$. This makes one such $s_1$ a good choice. The proof
  of existence of one such $s_1$ follows.
  
  Let $I = \{i\in \IN\mid \pi_{i\to}\notin \mathsf{TGW}(p, \lambda)\}$. The
  set $I$ must be infinite, otherwise there would be some $j\in\IN$ such
  that for all $i\geq j$, $\pi_{i\to}\in \mathsf{TGW}(p, \lambda)$, i.e.,
  $\pi_{j\to}\in\mathsf{DFTW}(p, \lambda)$, which would imply
  $\pi\in\mathsf{FTW}(p, \lambda)$. Because $I$ is infinite and there are finitely many
  state regions, one can find a state $s_1$ in $\pi$ indexed by an element in
  $I$ such that its state region is visited infinitely often along $\pi$.
  This ends the proof of this implication.

  Let us now move on to the implication (\ref{item:verification:indirect:3}$\implies$\ref{item:verification:indirect:1}). Assume the existence of states
  $s_1$, $s_2$, $s_2'$, $s_3$ and $s_4$ subject to the constraints above.
  We construct a time-divergent initial path $\pi\notin\mathsf{BTW}(p)$ inductively.
  We denote by $\pi_n$ the path constructed at step $n$ of the induction.
  We let $\pi_0$ be any finite path to $s_1$ from $s_\init$. Let us now assume that we are at
  induction step $n\geq 1$, and by induction that the last state of $\pi_{n-1}$
  is in $[s_1]$.
  We split the counterpart in the region abstraction of the path from
  $s_1$ through $s_2$, $s_3$, $s_4$ to $s_2'$ given by our hypothesis into two
  parts: $h_\regions$ for the part up to $[s_2]$ (not 
  included) and $h'_\regions$ for the remaining cycle from $[s_2]$ to itself. We
  extend $\pi_{n-1}$ by appending to it some path in $\automaton$ following the
  path $h_\regions(h_\regions')^{n+1}$ in the region abstraction, and then any
  path from $[s_2]$ back to $[s_1]$, so that we can continue the inductive
  construction.
  
  The path $\pi$ obtained through the inductive construction above 
  is time-divergent; the global clock, which cannot be reset, alternates between
  taking an integer value and not taking an integer value infinitely often, therefore its valuation must diverge.
    We now argue that $\pi$ violates $\mathsf{BTW}(p)$, i.e., we argue
  that for all suffixes of $\pi$, for all $\lambda\in\IN$,
  $\mathsf{DFTW}(p, \lambda)$ is not satisfied by the suffix.
  By construction, the path appended at step $n$
  of the construction aside from the return to $[s_1]$
  is such that, in all of its prefixes, the smallest priority is odd.
  Furthermore,
  the duration of this path is of at least $n$ time units: we witness the
  global clock pass an integer bound at least $n+1$ times in this path. It follows 
  that the suffix of $\pi$ after $\pi_{n-1}$ violates
  $\mathsf{TGW}(p, n)$. Because we let $n$ grow
  to infinity in the construction, no suffix of $\pi$ satisfies a direct
  fixed timed window objective. This ends the proof.
\end{proof}

In light of Theorem~\ref{theorem:verification:indirect}, we directly
obtain the following corollary.
\begin{corollary}\label{corollary:indirect:uniformity}
  Let $\lambda = 2\cdot |L|\cdot |\regions| + 3$.
  All time-divergent paths of $\automaton$ satisfy $\mathsf{BTW}(p)$ if and
  only if all time-divergent paths of $\automaton$ satisfy
  $\mathsf{FTW}(p,\lambda)$.
\end{corollary}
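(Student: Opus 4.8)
The plan is to obtain the corollary as nothing more than the contrapositive of the equivalence $(\ref{item:verification:indirect:1})\Leftrightarrow(\ref{item:verification:indirect:2})$ already established in Theorem~\ref{theorem:verification:indirect}. First I would observe that the corollary is a biconditional between two universally quantified assertions, and that negating each side turns it into an existential statement of exactly the form handled by the theorem. Concretely, reading ``all time-divergent paths'' as ``all time-divergent initial paths'' (consistent with the verification problem, where only behavior reachable from $s_\init$ is relevant), the sentence ``all time-divergent initial paths satisfy $\mathsf{BTW}(p)$'' is precisely the negation of statement~(\ref{item:verification:indirect:1}), and ``all time-divergent initial paths satisfy $\mathsf{FTW}(p,\lambda)$'' is precisely the negation of statement~(\ref{item:verification:indirect:2}).

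Having set up this dictionary, the argument is immediate: since Theorem~\ref{theorem:verification:indirect} gives $(\ref{item:verification:indirect:1})\Leftrightarrow(\ref{item:verification:indirect:2})$, contraposition on both sides yields $\neg(\ref{item:verification:indirect:1})\Leftrightarrow\neg(\ref{item:verification:indirect:2})$, which is exactly the claimed equivalence for the fixed bound $\lambda = 2\cdot|L|\cdot|\regions| + 3$. No new construction is required; the whole point is that the theorem has already done the work of relating witnesses of the bounded objective to witnesses of the fixed objective for this particular $\lambda$.

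It is worth noting that the two directions have very different weight. The implication ``all satisfy $\mathsf{FTW}(p,\lambda)$ $\Rightarrow$ all satisfy $\mathsf{BTW}(p)$'' is entirely elementary: it follows pointwise from the inclusion $\mathsf{FTW}(p,\lambda)\subseteq\mathsf{BTW}(p)$ of Lemma~\ref{lemma:inclusions}, and corresponds to the easy direction $(\ref{item:verification:indirect:1})\Rightarrow(\ref{item:verification:indirect:2})$ of the theorem. All the substance lives in the converse, ``all satisfy $\mathsf{BTW}(p)$ $\Rightarrow$ all satisfy $\mathsf{FTW}(p,\lambda)$'', i.e.\ the contrapositive of $(\ref{item:verification:indirect:2})\Rightarrow(\ref{item:verification:indirect:1})$, whose justification is the cycle-extraction machinery of Lemma~\ref{lemma:technical:bounded} underpinning the theorem.

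Accordingly, I do not expect a genuine obstacle in proving the corollary itself; the only care needed is bookkeeping, namely making sure the universal quantifiers of the corollary align correctly with the existential statements of the theorem and that the intended reading is over initial paths. Once that correspondence is made explicit, the corollary is, as the surrounding text advertises, a direct consequence of Theorem~\ref{theorem:verification:indirect}.
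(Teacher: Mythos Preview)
Your proposal is correct and is exactly the approach the paper takes: the paper simply states that the corollary is obtained ``in light of Theorem~\ref{theorem:verification:indirect}'' without further argument, and your contrapositive reading of $(\ref{item:verification:indirect:1})\Leftrightarrow(\ref{item:verification:indirect:2})$ is precisely what is intended. Your remark about reading ``all time-divergent paths'' as ``all time-divergent initial paths'' is also on point and matches the convention used throughout the verification section.
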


\subsection{Verification algorithms}\label{section:verification:complexity}

In this section, we discuss verification algorithms for the direct and indirect
bounded  objectives. In Section~\ref{section:verification:subreach}, we
provide a useful procedure to check, given some states of the TA, the existence
of paths subject to the constraints of
Theorems~\ref{theorem:verification:direct}
and~\ref{theorem:verification:indirect}.
We then discuss non-deterministic verification algorithms and their complexity
in Section~\ref{section:verification:algorithms}.

\subsubsection{Checking reachability with priority-induced constraints}\label{section:verification:subreach}
In the two previous sections, we have identified conditions for the existence
of paths violating the direct bounded timed window objectives and the bounded
timed window objectives. These criteria involve the existence of states such that
one can find a path traversing these states where, in any prefix of
this path, the smallest priority that occurs is odd, i.e., we construct
paths along which no good window is identified. We argue in the sequel 
that the existence of such paths can be decided in polynomial 
space.
We will outline a non-deterministic polynomial space procedure; the
previous claim follows from the equality
$\mathsf{PSPACE} = \mathsf{NPSPACE}$~\cite{DBLP:journals/jcss/Savitch70}.

This complexity can be justified by a straightforward adaptation of
the classical algorithm for reachability in timed automata~\cite{AlurD94}.
The idea is to detect a suitable path by means of the region abstraction.
The region abstraction itself is exponential in the size of the TA,
but needs not be constructed entirely to check whether some
region is reachable. An \textsf{NPSPACE} algorithm
for reachability can operate by exploring the region abstraction on-the-fly,
and keeping track of a region and the current number of steps taken in
the current path. The algorithm returns a positive answer if a target
is reached, and a negative answer if the step counter reaches the
size of the region abstraction. Because regions are representable in
polynomial space and the counter can be represented in binary, the claimed
complexity follows.

In the sequel, we require a slight variant of this algorithm. We are
given a certain number of regions $[s_1]$, \ldots, $[s_n]$ and want
to determine whether one path exists traversing these regions in
such a way that the smallest priority witnessed from the start of the
path is odd at all times. The classical algorithm can be extended
naturally to handle multiple sequential targets and the priority-related
constraints.

To handle the visiting of multiple regions in
order, it suffices, each time a target is reached, to reset the step counter
and update the target to the next one. One returns a 
positive answer if all targets have been reached. This induces an
increase in memory at most linear in the number of targets: one
can simply keep track of the current target by means of its index in
the sequence of targets. In practice, we use this procedure with five targets.

For the priority-related constraints, it suffices to keep track
of the smallest priority witnessed from the start of the guessed
path (unlike the counter above, this priority should
never be reset). We add an additional condition: the decision 
procedure stops and returns a negative answer if this priority
becomes even at any point. This induces an increase in memory of
at most $\log_2(d)$ bits. Overall, this modified procedure still
only uses polynomial space. We therefore obtain the following lemma.

\begin{lemma}\label{lemma:verification:reach}
  The existence of a path passing through $n$ given regions in
  order such that the smallest priority of all of its prefixes
  is odd is decidable in deterministic polynomial space.
\end{lemma}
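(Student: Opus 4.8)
The plan is to adapt the classical on-the-fly region reachability algorithm of Alur and Dill, extended to handle a sequence of targets together with a parity-style side condition on the smallest priority witnessed so far. The main idea is that reachability in the region abstraction can be decided nondeterministically in polynomial space even though the region abstraction itself is exponential: rather than materializing the whole graph, one guesses a path region-by-region, storing only the current region and a step counter bounded by $|L|\cdot|\regions|$. Since each region is representable in polynomial space and the counter is stored in binary, the whole procedure runs in \textsf{NPSPACE}, and Savitch's theorem~\cite{DBLP:journals/jcss/Savitch70} gives the deterministic polynomial-space bound.

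First I would describe the base procedure for single-target reachability: starting from $[s_\init]$ (or, here, the first prescribed region), the algorithm repeatedly guesses a successor region according to the transition relation $\to'$ of the region abstraction, decrementing nothing but incrementing a step counter, and accepts if the target region is reached. It rejects if the counter exceeds $|L|\cdot|\regions|$, since any reachable region is reachable by a path no longer than the number of state regions. Next I would layer on the handling of the $n$ ordered targets $[s_1],\dots,[s_n]$: the algorithm keeps an index $i\in\{1,\dots,n\}$ of the current target, and each time it guesses a transition into the current target region it resets the step counter and advances $i$ to $i+1$. It accepts once all $n$ targets have been reached in order. Since $i$ ranges over a set of size $n$, tracking it costs only $\bigo(\log n)$ additional bits, and in our applications $n=5$ so this is negligible.

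Finally I would incorporate the priority constraint. Alongside the current region and counter, the algorithm maintains the smallest priority $p_{\min}$ witnessed since the start of the guessed path; each time a new region $(\ell,R)$ is entered, it updates $p_{\min}\gets\min\{p_{\min},p(\ell)\}$. Crucially, $p_{\min}$ is never reset when a target is reached, because the odd-smallest-priority condition must hold over the whole path. The procedure immediately rejects if $p_{\min}$ ever becomes even, which precisely enforces that in every prefix of the guessed path the smallest priority is odd. Storing $p_{\min}$ requires only $\bigo(\log_2 \maxpriority)$ bits. Summing the costs, the algorithm uses space polynomial in the encoding of $\automaton$, and correctness follows because any accepting computation exhibits a concrete path through the targets with the desired priority property, while any such path yields an accepting computation by guessing its sequence of regions.

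The step I expect to require the most care is the correctness of the counter bound in the presence of the priority side condition: one must argue that if a path satisfying the ordered-reachability and odd-priority constraints exists at all, then between consecutive targets there is one of length at most $|L|\cdot|\regions|$, so that the counter never needs to exceed this value. This follows because one can contract any repeated region along a segment between two targets without introducing new priorities or violating the odd-priority invariant---removing a cycle only removes occurrences of priorities and hence cannot cause $p_{\min}$ to become even---so the shortest witness between targets is loop-free in the region graph and thus bounded by $|L|\cdot|\regions|$. Establishing that this contraction preserves both the reachability of the next target and the odd-priority property is the one genuinely non-routine point; the rest is a direct bookkeeping argument on the space used.
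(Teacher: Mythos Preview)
Your approach is essentially the same as the paper's: an on-the-fly \textsf{NPSPACE} exploration of the region abstraction, augmented with a target index and a running minimum priority, closed off with Savitch's theorem. The paper's presentation is in fact less detailed than yours and does not isolate the counter-bound issue you single out at the end.

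However, your justification of that counter bound contains a genuine gap. You claim that contracting a repeated region between two targets preserves the odd-prefix-minimum invariant because ``removing a cycle only removes occurrences of priorities and hence cannot cause $p_{\min}$ to become even.'' This is false: removing priorities from a prefix can \emph{raise} its minimum, and an odd minimum can rise to an even one. Concretely, take a path through state regions with location priorities $3,1,3,2$ where the first and third regions coincide (same location, hence same priority $3$). All prefix minima are odd ($3,1,1,1$), but contracting the cycle yields the two-step path with priorities $3,2$ and prefix minima $3,2$, violating the invariant at the second prefix.

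The fix is easy and does not affect the complexity claim: contract cycles in the product of regions with the current value of $p_{\min}$, rather than in the bare region graph. A repeated pair $([s],p_{\min})$ can be safely excised, since by an immediate induction the running minimum at every later position of the contracted path coincides with that of the original. The resulting bound on each inter-target segment becomes $|L|\cdot|\regions|\cdot\maxpriority$ instead of $|L|\cdot|\regions|$; the counter still fits in polynomially many bits, so the \textsf{PSPACE} conclusion is unaffected.
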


\subsubsection{Algorithms for the verification of bounded timed window objectives}\label{section:verification:algorithms}

We can now describe the complexity of the verification problem
for direct and indirect bounded  timed window objectives.
We first describe algorithms for the dual problem of verification, i.e.,
algorithms that check whether there exists a time-divergent path that violates
the considered objective.
These algorithms use oracles to check reachability
properties between regions. The complexity of our algorithms is in
$\mathsf{NP}^\mathsf{PSPACE} = \mathsf{PSPACE}$~\cite{DBLP:journals/siamcomp/BakerGS75}. The idea is to guess five state regions and then check whether they conform to
the conditions in Theorems~\ref{theorem:verification:direct}
and~\ref{theorem:verification:indirect}.

We use two oracles in \textsf{PSPACE}. The first oracle returns, given two
regions, whether there is a path in the region abstraction from the first region to the second, i.e.,
this oracle decides
standard reachability. The second oracle encodes the problem formulated in
Lemma~\ref{lemma:verification:reach}.

To decide the existence of a time-divergent path violating the direct objective,
we guess five regions and check if they satisfy the conditions of
Theorem~\ref{theorem:verification:direct}. This algorithm consists
of guessing the regions, checking whether the first region is reachable from the
initial state using the first oracle and then using the second oracle to
confirm the satisfaction of conditions of
Theorem~\ref{theorem:verification:direct}.
For the indirect objective, we proceed similarly to check the conditions of
Theorem~\ref{theorem:verification:indirect}; the only difference to the
direct case is that there is an additional call to the first oracle.
This shows that the dual problem of verification for direct and indirect bounded
timed window objectives is in
$\mathsf{NP}^\mathsf{PSPACE} = \mathsf{PSPACE}$.
Because $\mathsf{PSPACE}$ is closed under complementation, the
$\mathsf{PSPACE}$-membership of the verification problem for direct and
indirect bounded timed window objectives follows.

\begin{lemma}\label{lemma:verification:complexity}
  The verification problems for direct and indirect bounded timed window
  objectives are in \textsf{PSPACE}.
\end{lemma}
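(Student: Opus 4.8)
The plan is to establish \textsf{PSPACE}-membership of the verification problem by membership of its \emph{dual} (the existence of a time-divergent violating path) in $\mathsf{NP}^\mathsf{PSPACE}$, and then invoke closure of \textsf{PSPACE} under complementation together with $\mathsf{NP}^\mathsf{PSPACE} = \mathsf{PSPACE}$. The whole argument rests on the characterizations already proved in Theorems~\ref{theorem:verification:direct} and~\ref{theorem:verification:indirect}: a time-divergent path violating the direct (respectively indirect) bounded objective exists if and only if there are five reachable states $s_1, s_2, s_2', s_3, s_4$ satisfying the stated reachability and priority constraints. The decision procedure is thus a guess-and-verify scheme: nondeterministically guess five state regions, then deterministically check, in polynomial space, that they meet condition~(\ref{item:verification:direct:3}) or~(\ref{item:verification:indirect:3}).

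First I would isolate the building blocks that the verification relies on. Each guessed state region is representable in polynomial space, since a region records integral parts (bounded by the constants $c_x$, encoded in the TA) and an ordering of fractional parts over $|C|$ clocks. The checks needed are of two kinds: plain reachability between two regions in the region abstraction, and the priority-constrained reachability of Lemma~\ref{lemma:verification:reach}, which decides whether a path exists traversing several target regions in order while keeping the smallest priority seen odd throughout. Both are \textsf{PSPACE} oracles: standard reachability is \textsf{PSPACE} by the classical on-the-fly exploration of the region abstraction~\cite{AlurD94}, and Lemma~\ref{lemma:verification:reach} supplies the priority-constrained variant. I would also note that the local constraints (namely $s_2 \clockequiv s_2'$, and that $\globalclock$ is integral in exactly one of $s_3, s_4$) are region-level properties checkable directly from the guessed regions without any oracle call.

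Assembling these, the nondeterministic procedure for the direct case guesses the five regions, uses the first oracle to confirm $[s_1]$ is reachable from $[s_\init]$, and uses the second oracle (Lemma~\ref{lemma:verification:reach}) to confirm the existence of the required priority-odd path through $s_2, s_3, s_4$ to $s_2'$, with the integrality constraint on $\globalclock$ folded into the targets. For the indirect case the procedure is identical except for one extra call to the first oracle, checking that $[s_1]$ is reachable from $s_2'$, matching the additional clause of Theorem~\ref{theorem:verification:indirect}. In both cases the algorithm runs in $\mathsf{NP}^\mathsf{PSPACE}$, which equals \textsf{PSPACE} by~\cite{DBLP:journals/siamcomp/BakerGS75}; since \textsf{PSPACE} is closed under complementation~\cite{DBLP:journals/jcss/Savitch70}, the verification problem itself lies in \textsf{PSPACE}.

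The main conceptual obstacle has, in fact, already been discharged upstream: the reduction of an infinitary uniformity question (``all bounds fail, for every suffix'') to a \emph{finite} witness of five reachable states is exactly the content of Theorems~\ref{theorem:verification:direct} and~\ref{theorem:verification:indirect}, which in turn depend on the cycle-extraction argument of Lemma~\ref{lemma:technical:bounded}. Given those results, the only care needed here is bookkeeping: confirming that regions, step counters (in binary), the running smallest priority, and the current-target index each fit in polynomial space, and that encoding the $\globalclock$-integrality requirement as an intermediate target in the priority-constrained reachability query does not break the oracle's specification. These are routine, so I expect the proof to be short, with the heavy lifting inherited from the earlier sections.
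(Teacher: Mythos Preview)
Your proposal is correct and follows essentially the same approach as the paper: guess five state regions, use a \textsf{PSPACE} reachability oracle to check reachability from the initial state (and, in the indirect case, the extra reachability of $[s_1]$ from $s_2'$), use the priority-constrained reachability oracle of Lemma~\ref{lemma:verification:reach} for the remaining conditions, and conclude via $\mathsf{NP}^\mathsf{PSPACE}=\mathsf{PSPACE}$ and closure under complementation. Your write-up in fact includes slightly more bookkeeping detail (region encodings, the integrality check on $\globalclock$) than the paper's own argument.
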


Let us now assume that the priority function
$p\colon L\to\{0, \ldots, \maxpriority -1\}^\numdimensions$ is multi-dimensional.
Verifying that all time-divergent paths satisfy a generalized objective
is equivalent to checking that a one-dimensional objective is verified on
each dimension. It follows from Lemma~\ref{lemma:verification:complexity}
and $\mathsf{P}^\mathsf{PSPACE} = \mathsf{PSPACE}$~\cite{DBLP:journals/siamcomp/BakerGS75}
that the verification of multi-dimensional objectives can be done in polynomial
space.

\begin{theorem}\label{theorem:verification:complexity}
  The verification problems for generalized direct and indirect bounded
  timed window are in \textsf{PSPACE}.
\end{theorem}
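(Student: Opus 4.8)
The plan is to reduce the multi-dimensional verification problem to $\numdimensions$ independent instances of the one-dimensional problem already settled by Lemma~\ref{lemma:verification:complexity}, and then to combine the answers using closure of \textsf{PSPACE} under polynomially many oracle calls. The starting point is purely definitional: the generalized objectives are conjunctions across dimensions, $\mathsf{GDBTW}(p) = \bigcap_{1\leq k\leq \numdimensions}\mathsf{DBTW}(p_k)$ and $\mathsf{GBTW}(p) = \bigcap_{1\leq k\leq \numdimensions}\mathsf{BTW}(p_k)$.

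First I would observe that the universal quantifier in the verification problem distributes over the conjunction defining the objective. Concretely, all time-divergent initial paths of $\automaton$ satisfy $\mathsf{GDBTW}(p)$ if and only if, for every dimension $k\in\{1,\ldots,\numdimensions\}$, all time-divergent initial paths satisfy $\mathsf{DBTW}(p_k)$; the identical equivalence holds for $\mathsf{GBTW}(p)$ and the $\mathsf{BTW}(p_k)$. This is because a path lies in $\bigcap_k \Psi_k$ exactly when it lies in each $\Psi_k$, so exchanging the two universal quantifiers (over paths and over dimensions) changes nothing. This is the only genuinely conceptual step, and it is what lets us sidestep any product construction: there is no blow-up from handling several dimensions at once, since we never need to track several window objectives along a single path simultaneously.

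Next I would invoke Lemma~\ref{lemma:verification:complexity}: for each fixed dimension $k$, deciding whether all time-divergent initial paths satisfy $\mathsf{DBTW}(p_k)$ (respectively $\mathsf{BTW}(p_k)$) is precisely the one-dimensional verification problem for the priority function $p_k$, and hence is in \textsf{PSPACE}. The overall algorithm therefore loops over the $\numdimensions$ dimensions, runs the one-dimensional procedure on $p_k$ for each $k$, and returns a positive answer if and only if every dimension passes.

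Finally, I would bound the resulting complexity. The number of dimensions $\numdimensions$ is part of the input and hence polynomially bounded, so the loop issues polynomially many calls to a \textsf{PSPACE} subroutine, and membership in \textsf{PSPACE} follows from $\mathsf{P}^{\mathsf{PSPACE}} = \mathsf{PSPACE}$~\cite{DBLP:journals/siamcomp/BakerGS75}. I expect no real obstacle here; the only points needing care are that the per-dimension checks be executed sequentially with their working space reused across iterations, and that the counter ranging over dimensions be stored in binary, so that the total space remains polynomial.
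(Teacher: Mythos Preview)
Your proposal is correct and follows essentially the same approach as the paper: reduce the generalized objective to $\numdimensions$ one-dimensional checks via the definitional conjunction, invoke Lemma~\ref{lemma:verification:complexity} for each dimension, and conclude using $\mathsf{P}^{\mathsf{PSPACE}} = \mathsf{PSPACE}$. The paper's argument is terser but structurally identical.
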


\section{Solving timed games}\label{section:games}

In this section, we propose an algorithmic solution to the realizability
problem for direct and indirect bounded timed window parity objectives.
For the direct case, we provide a reduction to the realizability problem for
an $\omega$-regular region
objective in Section~\ref{section:games:direct}: we show that to enforce
the direct bounded objective, we can consider the objective requiring that
any odd priority is followed by a smaller even priority.
In Section~\ref{section:games:indirect}, we provide a fixed-point
algorithm for the indirect case, which intuitively iterates the computation
of a winning set for the direct case.

For this entire section, we fix a TG $\game = (\automaton, \Sigma_1, \Sigma_2)$
with $\automaton = (L, \ell_\init, C, \Sigma_1\cup\Sigma_2, I, E)$ and a
multi-dimensional priority function
$p\colon L\to \{0, \ldots, \maxpriority-1\}^\numdimensions$.

\subsection{Direct bounded timed window objective}\label{section:games:direct}
In this section, we provide a reduction from the realizability problem for
the generalized direct bounded timed window objective to the realizability
problem for the untimed $\omega$-regular request-response
objective~\cite{DBLP:conf/wia/WallmeierHT03,DBLP:conf/lata/ChatterjeeHH11}.
In Section~\ref{section:games:direct:rr}, we introduce the request-response
objective and explain how to derive a request-response objective from the
multi-dimensional priority function $p$. In
Section~\ref{section:games:direct:reduction}, we show that the set of winning states
for this request-response objective coincides with the winning set for the
generalized direct bounded timed window objective and that this set coincides
even with the winning set of some generalized direct fixed timed window
objective.

\subsubsection{Request response-objectives}\label{section:games:direct:rr}

A request-response objective is an $\omega$-regular region objective
defined by a family of pairs of sets of state regions
$\mathcal{R} = ((\mathsf{Rq}_j, \mathsf{Rp}_j))_{j=1}^r$. The request-response
objective for $\mathcal{R}$ requires that
for all $k\in\{1, \ldots, r\}$, for any visit to a state region in
$\mathsf{Rq}_k$, there must be a location in $\mathsf{Rp}_k$ appearing later in
the play. We refer to state regions in $\mathsf{Rq}_k$ as requests and to state
regions in $\mathsf{Rp}_k$ as responses.

Let $\mathcal{R} = ((\mathsf{Rq}_k, \mathsf{Rp}_k))_{k=1}^r$ be a family of
request-response pairs.
Formally, we define the request-response objective $\mathsf{RR}(\mathcal{R})$
as the set of sequences of states
\[\{s_0s_1\ldots \in S^\omega\mid
  \forall\, k\leq r,\,\forall n,\,\exists n'\geq n,\, [s_n]\in\mathsf{Rq}_k\implies [s_{n'}]\in\mathsf{Rp}_k \}.\]

A DPA in which the only priorities are $0$ and $1$
are equivalent to the deterministic Büchi automata (DBAs) of the literature.
In general, for a request-response objective with $r$ request-response
pairs, a DBA with $2^r\cdot r$ states suffices~\cite{DBLP:conf/wia/WallmeierHT03}. 
The request-response families we define later from
priority functions have $\lfloor \frac{\maxpriority}{2}\rfloor\cdot \numdimensions$
request-response pairs. Hence, using such a DBA in our game solving approach induces
an exponential blow-up in the number of priorities in the time complexity.
We can do better: the request-response objectives we derive from
multi-dimensional priority functions can be represented by DBAs with
$(\lfloor \frac{\maxpriority}{2}\rfloor+1)^\numdimensions\cdot\numdimensions$ states.
For a fixed number of dimensions, we obtain
a DBA with a number of states polynomial in the number of priorities.

We do not directly introduce small DBAs for the specific request-response
families derived from multi-dimensional priority functions. Instead, we
define a class of request-response families that subsumes them. We proceed
this way due to the indirect case; in the indirect case, we repeatedly
solve request-response games in which we alter the sets of requests and
responses; by introducing a broader class of
request-response families, we achieve a better complexity for these
computations with respect to the number of priorities.

We say that a family of request-response pairs
$\mathcal{R} = \{(\mathsf{Rq}_1, \mathsf{Rp}_1), \ldots, (\mathsf{Rq}_r, \mathsf{Rp}_r)\}$ is a
\textit{chain-response family} if the
sets of responses form a chain, i.e.,
$\mathsf{Rp}_1\supseteq \mathsf{Rp}_2 \supseteq\ldots \supseteq \mathsf{Rp}_r$
and each set of requests and responses are pairwise disjoint, i.e., for all
$i, j\leq r$, $\mathsf{Rq}_i\cap \mathsf{Rp}_j=\emptyset$. In a request-response
objective induced by a chain-response family, one needs only keep track of the
pending request with the fewest responses, because any response to this
request also addresses requests with more responses due to the chain of inclusions.
This allows us to define a DBA with $r + 1$ states; there is one state to
indicate that no requests are pending, and one state per
request to keep track of whichever pending request has fewest responses.

Let $\mathcal{R} = ((\mathsf{Rq}_k, \mathsf{Rp}_k))_{k=1}^r$ be a chain-response
family where $\mathsf{Rp}_1\supseteq \mathsf{Rp}_2 \supseteq\ldots \supseteq \mathsf{Rp}_r$. The request-response objective $\mathsf{RR}(\mathcal{R})$ can be
encoded by
a DBA $H = (\dpastates, q_\init, L\times\regions, \dpatransitions, p_H)$
where $\dpastates = \{0, 1, \ldots, r\}$, $q_\init = 0$, and $\dpatransitions$
is defined, for all $q\in\dpastates$ and $[s]\in L\times \regions$,
\[\dpatransitions(q, [s]) = \begin{cases}
    0 & \text{if } q\neq 0\text{ and }[s]\in\mathsf{Rp}_q \\
    \max(\{q\}\cup \{i\leq r\mid [s]\in\mathsf{Rq}_i \})
    & \text{otherwise,}
  \end{cases}\]
and the priority function $p_H$ assigns $0$ to state $0$ of $H$
and $1$ to all other states
of $H$. The DBA $H$ encodes $\mathsf{RR}(\mathcal{R})$. Indeed, $H$ keeps track
of the highest seen index of a request and a higher index means fewer
responses. Because request and response sets are pairwise disjoint, witnessing
the state $0$ of $H$ infinitely often is equivalent to having all requests
eventually answered.

We say a family of request-response pairs is an \textit{$n$-chain-response}
family if it is a union of $n$ chain-response families. Observe that for all
$\mathcal{R}_1$,\ldots, $\mathcal{R}_n$, we have
$\mathsf{RR}(\bigcup_{1\leq i\leq n}\mathcal{R}_i) =
\bigcap_{1\leq i\leq n}\mathsf{RR}(\mathcal{R}_i)$. The intersection of
the languages of $n$ DBAs with $r+1$ states
can be encoded by a DBA with $(r +1)^n\cdot n$
states~\cite[Proposition 6.1]{DBLP:books/daglib/0016866}.
It follows that request-response objectives obtained from
$n$-chain-response families where each underlying chain-response family has
at most $r$ pairs can be encoded by DBAs with at most $(r+1)^n\cdot n$
states.

The following result follows immediately from
Theorem~\ref{theorem:omegaregular:complexity}
and Theorem~\ref{theorem:fm:strategies}.

\begin{lemma}\label{lemma:games:rr}
  Let $\mathcal{R}$ be an $n$-chain-response family in which each underlying
  chain-response family has at most $r$ pairs.
  The set of winning states in $\game$ for the request-response objective
  $\mathsf{RR}(\mathcal{R})$ is a union of state regions and
  can be computed in time
  $\bigo((|L|\cdot|\regions|\cdot (r+1)^n\cdot n)^{3})$,
  and finite-memory region strategies proposing delays of at most $1$
  with $4\cdot (r+1)^n\cdot n$ states suffice for winning.
\end{lemma}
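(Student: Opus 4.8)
The plan is to invoke the two quoted results—Theorem~\ref{theorem:omegaregular:complexity} (complexity of solving $\omega$-regular region games via a DPA) and Theorem~\ref{theorem:fm:strategies} (existence of bounded-delay finite-memory region strategies)—and specialize them to the DBA encoding a request-response objective arising from an $n$-chain-response family. The bridge between the lemma statement and these theorems is the explicit construction of a suitable DBA $H$ that recognizes $\mathsf{RR}(\mathcal{R})$ and has a controlled number of states, so that plugging its parameters into the two theorems yields the claimed bounds.

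**First I would** assemble the DBA. The excerpt already provides, for a single chain-response family with at most $r$ pairs, a DBA with $r+1$ states and priorities in $\{0,1\}$. For an $n$-chain-response family $\mathcal{R} = \bigcup_{i=1}^n \mathcal{R}_i$, I would use the observed identity $\mathsf{RR}(\bigcup_{i}\mathcal{R}_i) = \bigcap_i \mathsf{RR}(\mathcal{R}_i)$ together with the quoted fact that the intersection of $n$ DBAs each with $r+1$ states is encodable by a DBA with $(r+1)^n\cdot n$ states. This gives a DBA $H$ for $\mathsf{RR}(\mathcal{R})$ with $|Q| = (r+1)^n\cdot n$ and maximal priority $\maxpriority = 2$ (priorities $0$ and $1$ only, as DBAs are DPAs with these two priorities).

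**Next I would** substitute these parameters into Theorem~\ref{theorem:omegaregular:complexity}. With $\maxpriority = 2$, the value $\maxpriority'$ (defined as $\maxpriority$ if odd, else $\maxpriority-1$) equals $1$, so the exponent $\maxpriority'+2$ equals $3$, and the factor $\maxpriority$ in the base is a constant. The complexity $\bigo((4\cdot|L|\cdot|\regions|\cdot|Q|\cdot\maxpriority)^{\maxpriority'+2})$ therefore collapses to $\bigo((|L|\cdot|\regions|\cdot(r+1)^n\cdot n)^3)$, absorbing the constants $4$ and $\maxpriority=2$ into the $\bigo$, which matches the claimed time bound. For the strategy part, Theorem~\ref{theorem:fm:strategies} yields a winning finite-memory region strategy with $2\cdot|Q|\cdot\maxpriority$ states proposing delays of at most $1$; with $|Q|=(r+1)^n\cdot n$ and $\maxpriority=2$ this is $4\cdot(r+1)^n\cdot n$ states, exactly as stated. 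The fact that the winning set is a union of state regions is inherited directly from Theorem~\ref{theorem:omegaregular:complexity}.

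**The main obstacle** is purely bookkeeping rather than conceptual: one must confirm that the request-response objective genuinely is an $\omega$-regular region objective recognized by the constructed DBA (so that the two theorems apply verbatim), and that the constant factors—the $4$ from Theorem~\ref{theorem:omegaregular:complexity} and the value $\maxpriority=2$—are correctly absorbed into the $\bigo$ notation so that the exponent drops to $3$. Since all the heavy lifting (the DBA construction, the intersection bound, and both complexity/strategy theorems) is already available in the excerpt, the proof is essentially a verification that the parameters line up, and I would present it as a short derivation rather than a detailed argument.
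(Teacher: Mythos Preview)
Your proposal is correct and matches the paper's approach exactly: the paper states that the lemma follows immediately from Theorem~\ref{theorem:omegaregular:complexity} and Theorem~\ref{theorem:fm:strategies}, relying on the DBA construction for $n$-chain-response families given in the preceding text, and your derivation carries out precisely this substitution of parameters ($|Q|=(r+1)^n\cdot n$, $\maxpriority=2$, hence $\maxpriority'=1$ and exponent $3$).
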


We now explain how we derive a $\numdimensions$-chain-response family from
a $\numdimensions$-dimensional priority function. The idea is to model each
odd priority on each dimension as a request, the responses to which are smaller
even priorities on the same dimension. A similar construction is used
for direct bounded objectives in games in
graphs~\cite{DBLP:journals/corr/BruyereHR16}.

Assume $p$ is a one-dimensional priority function.
We define the chain-response family $\mathcal{R}(p)$ as the
family of request-response pairs that contains for each odd priority
$j\in\{0, 1, \ldots, \maxpriority-1\}$,
the pair $(\mathsf{Rq}_{j}, \mathsf{Rp}_{j})$
where $\mathsf{Rq}_{j} = p^{-1}(j)\times\regions$ and $\mathsf{Rp}_{j} =
\{\ell\in L\mid p(\ell)\leq j \land p(\ell)\bmod 2 = 0\}\times\regions$.
This is indeed a chain-response family because the responses to an odd priority
are smaller even priorities, and are therefore also responses to any
greater odd priorities.
If $p$ is $\numdimensions$-dimensional, we let $\mathcal{R}(p)$ be the
$\numdimensions$-chain-response family
$\mathcal{R}(p)=\bigcup_{1\leq i\leq\numdimensions}\mathcal{R}(p_i)$.

We close this section by highlighting a nuance between the notion of good
windows and the modeling of priorities as requests and responses provided
in the definition of $\mathcal{R}(p)$.
We consider the one-dimensional case for the upcoming explanation.

Given a state occurring in a play, recall that
one finds a good window (of some size about which we are not concerned) if there
is a later state on the play such that the smallest priority seen between
the two states is even. The earliest response to a request in $\mathcal{R}(p)$
may not induce a good window; it may be the case that on the segment
between the request and response, we witness another odd priority for which
the first response is not suitable. This new priority must be strictly smaller
than that of the initial request; any response
to this new request is also strictly smaller than the first response. Assuming
that this second request is answered, there
may yet again be a strictly smaller odd priority between the second request
and response for which the second response is not suitable. We can
repeat this reasoning assuming the third request is answered, and so on. However, this
phenomenon can only occur finitely often due to the finite number of priorities.
The last response in the sequence of responses obtained above
is an even priority smaller than any prior odd priority, i.e., we witness a
good window eventually assuming that all requests are answered.

It follows that the request-response objective is satisfied if and only if
there are good windows from all states along the play. The remaining question
addressed in the following section is whether winning for the request-response
objective ensures the existence of a winning strategy for which the
size of these windows is bounded.

\subsubsection{Reducing direct objectives to request-response}\label{section:games:direct:reduction}
 
The goal of this section is to show that to solve the TG $\game$ with the
objective $\mathsf{GDBTW}(p)$, one can solve the TG $\game$ with the
request-response objective $\mathsf{RR}(\mathcal{R}(p))$.
The main argument consists in showing that the time-divergent outcomes of any
winning finite-memory region strategy for the objective
$\mathsf{RR}(\mathcal{R}(p))$ proposing bounded delays (the
existence of which is ensured by Lemma~\ref{lemma:games:rr} if
$\player_1$ wins) must satisfy $\mathsf{GDFTW}(p, \lambda)$ for some $\lambda\in\IN$. This implies that all time-divergent outcomes of one such strategy satisfy
$\mathsf{GDBTW}(p)$.

This result is shown by contradiction. We assume that there exists some
time-divergent outcome $\pi$ of one such finite-memory strategy violating
$\mathsf{DFTW}(p_k, \lambda)$ on some dimension $k$
for some sufficiently large bound $\lambda$;
this is ensured whenever one assumes the existence of a time-divergent outcome
of $\sigma$ violating $\mathsf{DBTW}(p_k)$.
If this is the case, we can construct an
outcome of $\sigma$ along which, on dimension $k$,
some odd priority is never
followed by a smaller even priority, i.e., some request goes unanswered, which
contradicts the fact that $\sigma$ is winning for
$\mathsf{RR}(\mathcal{R}(p))$.

The main points of the proof are as follows.
There is some suffix $\pi'$ of $\pi$ violating the timed good window objective
for $\lambda$. Within the $\lambda$ first time
units of $\pi'$, using Lemma~\ref{lemma:technical:bounded}, one can find
two indices such that the TG finds itself in state-equivalent states $s$ and
$s'$ and the Mealy machine encoding the winning strategy $\sigma$ finds itself in the
same memory states. Because we consider a finite-memory
region strategy, it is possible to inductively construct an outcome of
$\sigma$ which first follows $\pi$ up to $s$ and then
follows the time-divergent cycle in the region abstraction induced by $\pi$
between $s$ and
$s'$. However, because the smallest priority appearing in all prefixes of $\pi'$
up to $s'$ is odd (the timed good window objective is violated), it
follows that this specific priority is never followed by any smaller even
priority, contradicting the fact that $\sigma$ was winning
for $\mathsf{RR}(\mathcal{R}(p))$.

We provide the details hereunder. We prove a slightly stronger statement for
later use. Let $U$ be a set of state regions.
We show that the announced result holds even if we modify the
request-response pairs of $\mathcal{R}(p)$ by removing regions
in $U$ from all request sets and adding regions in $U$ to all response sets;
that is, any
time-divergent outcome of a finite-memory region winning strategy for the
modified request-response objective proposing
bounded delays satisfies some generalized direct fixed timed window objective,
and therefore the generalized direct bounded timed window
objective, under the assumption that the regions in $U$ are not visited.

\begin{lemma}\label{lemma:games:technical}
  Let $\mathcal{R}(p) = (\mathsf{Rq}_i, \mathsf{Rp}_i)_{i= 1}^r$
  be the family of request-response pairs derived from $p$ and
  $\mathcal{R}' = (\mathsf{Rq}_i\setminus U, \mathsf{Rp}_i\cup U)_{i= 1}^r$
  for some set of state regions $U\subseteq L\times\regions$.
  Let $W$ denote the set of winning states for the objective
  $\mathsf{RR}(\mathcal{R}')$ and
  $\fmmealy = (\fmstates, \mathfrak{m}_\init, \update, \nextmove)$
  be a Mealy machine encoding a finite-memory region winning strategy of
  $\player_1$ from $W$ for the objective $\mathsf{RR}(\mathcal{R}')$
  proposing delays of at most $1$.
  Let $\pi = s_0 (m_0^{(1)},m_0^{(2)}) s_1\ldots$ be a time-divergent outcome
  of the strategy induced by $\fmstates$ such that $s_0\in W$ and let
  $\lambda = 2 \cdot |L|\cdot |\regions|\cdot |\fmstates| + 3$.
  If for all $n\in\IN$, $[s_n]\notin U$ then
  $\pi\in \mathsf{GDFTW}(p, \lambda)\subseteq \mathsf{GDBTW}(p)$.
\end{lemma}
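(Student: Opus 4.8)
The plan is to argue by contradiction, manufacturing from a violation of $\mathsf{GDFTW}(p,\lambda)$ a time-divergent outcome of the strategy $\sigma$ induced by $\fmmealy$ along which some request of $\mathcal{R}'$ stays pending forever, contradicting that $\sigma$ wins $\mathsf{RR}(\mathcal{R}')$ from $s_0\in W$. Since $\mathsf{GDFTW}(p,\lambda)=\bigcap_{k}\mathsf{DFTW}(p_k,\lambda)$, assuming $\pi\notin\mathsf{GDFTW}(p,\lambda)$ gives a dimension $k$ and a suffix $\pi'=\pi_{n_0\to}$ with $\pi'\notin\mathsf{TGW}(p_k,\lambda)$. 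I first observe that the effective delays along $\pi$ are bounded by $1$: each such delay is the minimum of the two proposed delays and $\player_1$ proposes delays of at most $1$. This lets me invoke Lemma~\ref{lemma:technical:bounded} on $\pi'$, taking as DFA the one underlying $\fmmealy$ with initial memory set to the memory state $\mathfrak{m}_{n_0}$ reached along $\pi$ by index $n_0$, and using priority function $p_k$; here it is important that, $\sigma$ being a region strategy, the memory sequence the lemma tracks coincides with the genuine memory evolution of $\fmmealy$ along $\pi$.

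The lemma then yields two positions $i<j$ of $\pi$ with $n_0\le i$ such that $[s_i]=[s_j]$ and $\mathfrak{m}_i=\mathfrak{m}_j$, the global clock $\globalclock$ crosses an integer between them, and $v_j(\globalclock)-v_{n_0}(\globalclock)<\lambda$. Because $\pi'$ has no good window of size $<\lambda$ while $s_j$ lies within $\lambda$ time units of $s_{n_0}$, the smallest priority $q:=\min_{n_0\le l\le j}p_k(\ell_l)$ on dimension $k$ is odd. I then build the witness outcome $\hat\pi$: follow $\pi$ exactly up to $s_i$, then repeat forever the cycle of state-region/memory pairs traversed from $i$ to $j$. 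Since $([s_i],\mathfrak{m}_i)=([s_j],\mathfrak{m}_j)$ and $\sigma$ is a finite-memory region strategy, $\player_1$ proposes region-consistent moves at the start of every copy of the cycle, and $\player_2$ can replay region-equivalent moves, so $\hat\pi$ is a genuine outcome of $\sigma$ from $s_0$, exactly the looping construction discussed after Lemma~\ref{lemma:technical:bounded}. As $\globalclock$ crosses an integer once per cycle and cannot be reset, $\hat\pi$ is time-divergent.

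It then remains to pin down the unanswered request. The loop visits only locations occurring on the segment $[i,j]\subseteq[n_0,j]$, so it introduces no priority below $q$ on dimension $k$; hence the smallest priority on dimension $k$ appearing anywhere in the suffix of $\hat\pi$ from index $n_0$ is exactly $q$, attained at some index $m$. As $q$ is odd and $[s_m]\notin U$ by hypothesis, the region $[s_m]$ lies in $\mathsf{Rq}_q\setminus U$, a request set of $\mathcal{R}'$. Answering it would require either an even priority $\le q$ on dimension $k$ (impossible, since every priority from $m$ onward is $\ge q$ and $q$ is odd) or a visit to $U$ (impossible, since $\pi$ and every region in the cycle avoid $U$). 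Thus this request is pending forever, so $\hat\pi\notin\mathsf{RR}(\mathcal{R}')$; being time-divergent and consistent with $\sigma$ from $s_0\in W$, this contradicts $\sigma$ winning $\mathsf{RR}(\mathcal{R}')$. Hence $\pi\in\mathsf{GDFTW}(p,\lambda)$, and the inclusion $\mathsf{GDFTW}(p,\lambda)\subseteq\mathsf{GDBTW}(p)$ obtained by intersecting the one-dimensional inclusions of Lemma~\ref{lemma:inclusions} completes the proof. The step I expect to be the main obstacle is making the looping construction of $\hat\pi$ rigorous as an outcome of $\sigma$ — justifying that the region-strategy hypotheses genuinely let $\player_2$ reproduce the cycle region-by-region while keeping $\fmmealy$'s memory synchronized — whereas the priority bookkeeping isolating the unanswered request is comparatively routine.
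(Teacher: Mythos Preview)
Your proposal is correct and follows essentially the same route as the paper: assume a bad dimension, use the delay bound of $1$ (from $\player_1$'s proposals dominating the joint delay) to invoke Lemma~\ref{lemma:technical:bounded} on the suffix with the DFA underlying $\fmmealy$, extract a region/memory cycle inside the no-good-window segment, loop it into a time-divergent outcome of $\sigma$, and exhibit the minimal odd priority on that segment as a permanently pending request of $\mathcal{R}'$ (using that $U$ is never visited). The paper spends its effort precisely where you flagged the main obstacle, giving an explicit case split on who was responsible for each transition of the original cycle to define $\player_2$'s region-equivalent replay moves and verify memory synchronization; your sketch of that step is accurate but would need those details to be complete.
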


\begin{proof}
  Assume that $\pi\notin\mathsf{GDFTW}(p, \lambda)$ by contradiction.
  We fix a dimension $k\in\{1, \ldots, \numdimensions\}$ such that
  $\pi\notin\mathsf{DFTW}(p_k, \lambda)$.
  Let $\sigma$ denote
  the strategy induced by $\fmmealy$. We consider the sequence
  $\mathfrak{m}_0\mathfrak{m}_1\ldots\in\fmstates^\omega$ of memory states
  witnessed along $\pi$, given by $\mathfrak{m}_0 = \mathfrak{m}_\init$ and
  for all $n\in\IN$, $\mathfrak{m}_{n+1} = \update(\mathfrak{m}_{n}, [s_n])$.
  
  Recall that $\lambda$ is the bound
  of Lemma~\ref{lemma:technical:bounded} using the DFA underlying $\fmmealy$.
  It follows from our assumption of
  $\pi\notin\mathsf{DFTW}(p_k, \lambda)$ that there is some $n_0\in\IN$ such that
  $\pi_{n_0\to}\notin\mathsf{TGW}(p_k, \lambda)$. By
  Lemma~\ref{lemma:technical:bounded}, there exists two indices
  $n_1$, $n_2\geq n_0$ such that $n_1 < n_2$, $s_{n_1}\clockequiv s_{n_2}$,
  $\mathfrak{m}_{n_1} = \mathfrak{m}_{n_2}$ and
  $\lfloor v_{n_1}(\globalclock)\rfloor < \lfloor v_{n_2}(\globalclock)\rfloor$,
  and for all $n_0\leq n'\leq n_2$, we have $\min_{n_0\leq n\leq n'}p(\ell_n)$
  is odd, where $s_n = (\ell_n, v_n)$ for all $n\in\IN$.

{ \newcommand{\altmove}{\widetilde{m}}
  \newcommand{\altstate}{\widetilde{s}}
  \newcommand{\altmemory}{\widetilde{\mathfrak{m}}}
  \newcommand{\altplay}{\widetilde{\pi}}
  We now construct a time-divergent outcome
  $\altplay=\altstate_0(\altmove_0^{(1)}, \altmove_0^{(2)})\altstate_1\ldots$
  of $\sigma$ that does not satisfy the request-response objective
  $\mathsf{RR}(\mathcal{R}')$. We denote by
  $\altmemory_0\altmemory_1\ldots\in\fmstates^\omega$ the sequence of memory
  states along the play $\altplay$.
  We define $\altplay_{|n_2} =\pi_{|n_2}$, i.e., the play $\altplay$ coincides
  with $\pi$ up
  to step $n_2$. It follows that for any $n\leq n_2$, we have
  $\altmemory_n = \mathfrak{m}_n$. In particular, we have
  $\altmemory_{n_2} = \mathfrak{m}_{n_1}$.
  
  The remainder of the construction is by induction.
  Let $k\in\IN$ and $j\in \{0, \ldots, n_2 - n_1 - 1\}$. We will choose
  $\altstate_{n_2 + (n_2 - n_1)\cdot k + j}$ such that it is equivalent to
  $s_{n_1 + j}$ and
  $\altmemory_{n_2 + (n_2 - n_1)\cdot k + j} = \mathfrak{m}_{n_1+j}$. The idea
  to extend $\altplay$ is to follow
  the cycle in the region abstraction induced by the history
  $s_{n_1}(m_{n_1}^{(1)}, m_{n_1}^{(2)})\ldots(m_{n_2-1}^{(1)}, m_{n_2-1}^{(2)})s_{n_2}$.
  In practice, to ensure time-divergence of the constructed play, we
  ensure that the moves $\altmove_{n_2 + (n_2 - n_1)\cdot k + j}^{(1)}$ and
  $\altmove_{n_2 + (n_2 - n_1)\cdot k + j}^{(2)}$ are such that
  $\widetilde{\delay} = \delayfunc(m_{n_2 + (n_2 - n_1)\cdot k + j}^{(1)}, m_{n_2 + (n_2 - n_1)\cdot k + j}^{(2)})$
  traverses the same regions from $\altstate_{n_2 + (n_2 - n_1)\cdot k + j}$
  than $\delay = \delayfunc(m_{n_1 + j}^{(1)}, m_{n_1 + j}^{(2)})$ does from
  $s_{n_1+j}$,
  i.e.,
  $\{[v_{n_1+j} + \delay_{\mathsf{mid}}] \mid
  0\leq\delay_{\mathsf{mid}}\leq \delay\}
  =
  \{[\widetilde{v} + \delay_{\mathsf{mid}}] \mid 0\leq\delay_{\mathsf{mid}}\leq
  \widetilde{\delay} \}$
  where $v_{n_1+j}$ and $\widetilde{v}$ denote the clock valuations
  in $s_{n_1 + j}$ and in $\altstate_{n_2 + (n_2 - n_1)\cdot k + j}$ respectively.

  We only provide the construction of
  $\altmove^{(1)}_{n_2}$, $\altmove^{(2)}_{n_2}$, $\altstate_{n_2+1}$
  and $\altmemory_{n_2+1}$ (i.e., case $k = 0$ and $j=0$) for the sake
  of readability. Other cases are handled similarly.
  We define $\altmove^{(1)}_{n_2} = \nextmove(\altmemory_{n_2}, \altstate_{n_2})
  = \nextmove(\mathfrak{m}_{n_1}, \altstate_{n_2})$
  to ensure consistency of $\altplay$ with $\sigma$. To
  define $\altmove^{(2)}_{n_2}$, we distinguish two cases depending on which
  player is responsible for the transition in $\pi$ at step $n_1$.

  Assume first that $s_{n_1}\xrightarrow{m_{n_1}^{(1)}}s_{n_1+1}$ holds.
  Then we let $\altmove^{(1)}_{n_2}$ be
  any $\player_2$ move enabled in $\altstate_{n_2}$ with a delay greater than
  or equal to that of $\altmove_{n_1}^{(1)}$.
  Let $\altstate_{n_2+1}$ be the unique state such that
  $\altstate_{n_2}\xrightarrow{\altmove_{n_1}^{(1)}}\altstate_{n_2+1}$ holds.
  Because $\sigma$ is a finite-memory region strategy, the equivalence
  $\altstate_{n_2+1}\clockequiv s_{n_1+1}$ is ensured and the same regions
  are traversed from $s_{n_1}$ and $\altstate_{n_2 +1}$ in $\pi$ and
  $\altplay$ respectively.
  It follows from
  $\mathfrak{m}_{n_1}=\altmemory_{n_2}$ and $s_{n_1}\clockequiv\altstate_{n_2}$
  that $\mathfrak{m}_{n_1+1}=\altmemory_{n_2+1}$.
  This closes this case of the inductive
  step.

  Now, assume that $s_{n_1}\xrightarrow{m_{n_1}^{(1)}}s_{n_1+1}$ does not hold.
  In this case, the move of $\player_2$ is responsible for the transition at
  step $n_1$ in $\pi$. Let $\delay=\delayfunc(m_{n_1}^{(2)})$ and
  let $v_{n_1}$ and $\widetilde{v}_{n_2}$ denote the clock valuations in state
  $s_{n_1}$ and $\altstate_{n_2}$ respectively.
  We choose $\altmove_{n_2}^{(2)}=(\widetilde{\delay}, \action(m_{n_1}^{(2)}))$
  for some $\widetilde{\delay}\leq \delayfunc(\altmove_{n_1}^{(1)})$ such that
  $\widetilde{v}_{n_2} + \widetilde{\delay} \in [v_{n_1} + \delay]$
  and $\{[v_{n+j} + \delay_{\mathsf{mid}}] \mid
  0\leq\delay_{\mathsf{mid}}\leq \delay\} =
  \{[\widetilde{v}_{n_2} + \delay_{\mathsf{mid}}] \mid
  0\leq\delay_{\mathsf{mid}}\leq \widetilde{\delay} \}$; one such delay exists
  because the moves $m_{n_1}^{(1)}$ and $\altmove_{n_2}^{(1)}$ traverse the same
  regions from $s_{n_1}$ and $\altstate_{n_2}$ respectively ($\sigma$
  is a finite-memory region strategy), and the region
  $[v_{n_1} + \delay]$ has the region $[v_{n_1} + \delayfunc(m_{n_1}^{(1)})]$
  as a successor.
 
  Let $\altstate_{n_2+1}$ be the unique state such that
  $\altstate_{n_2}\xrightarrow{\altmove_{n_2}^{(2)}}\altstate_{n_2+1}$.
  By choice of $\altmove_{n_2}^{(2)}$, we have
  $\altstate_{n_2+1}\in\jointdestination(\altstate_{n_2}, \altmove_{n_2}^{(1)}, \altmove_{n_2}^{(2)})$.
  Furthermore, because guard satisfaction is uniform within a region
  and resets preserve regions, it follows that
  $\altstate_{n_2+1}\clockequiv s_{n_1+1}$.
  Finally, we must have
  $\mathfrak{m}_{n_1+1}=\altmemory_{n_2+1}$ for the same reason as in the
  previous case.

  We now argue that $\altplay$ is time-divergent and does not satisfy the
  request-response objective $\mathsf{RR}(\mathcal{R}')$. Time-divergence
  follows from the fact  that the global clock $\gamma$ passes an integer bound
  between indices
  $n_1$ and $n_2$ in $\pi$ and that all regions traversed between these
  indices in $\pi$ are infinitely often traversed in $\altplay$.
  For the request-response objective, we first remark
  that for all $n\in\IN$, $[\altstate_n]\notin U$, because all states appearing
  in $\altplay$ are equivalent to states in $\pi$.
  Hence, requests and responses along $\altplay$ are determined
  by the sequence of witnessed locations and their priorities. Let
  $n^\star\in\argmin_{n_0\leq n\leq n_2}p_k(\ell_n)$. From index $n^\star$ in
  $\altplay$, no priority smaller than $p_k(\ell_{n^\star})$ appears
  on dimension $k$, and this
  priority is odd. This shows that some request goes unanswered in $\altplay$,
  i.e., $\altplay\notin\mathsf{RR}(\mathcal{R}')$, contradicting the fact that
  $\sigma$ is winning.

}
\end{proof}

\begin{remark}
  The proof of Lemma~\ref{lemma:games:technical} can be adapted to show that
  if a state is winning for an arbitrary request-response objective
  $\mathsf{RR}(\mathcal{R})$, then it is winning for a bounded variant thereof,
  in which we require that the delay between requests and responses
  along a play be bounded by some integer.
\end{remark}

It follows from Lemma~\ref{lemma:games:technical} that any state winning for
the objective $\mathsf{RR}(\mathcal{R}(p))$ is also winning for some generalized
direct fixed timed window objective, thus for the generalized direct bounded
timed window objective. A consequence is that one can use the synthesis
algorithm for games with request-response objectives to construct winning
strategies for the generalized direct bounded timed window objective from
these states.

It remains to argue that states that are winning for the generalized bounded window
objective are also winning for the request-response objective. This follows
immediately
from the inclusion $\mathsf{GDBTW}(p)\subseteq\mathsf{RR}(\mathcal{R}(p))$: if a
play satisfies $\mathsf{GDBTW}(p)$, there must be good windows (of bounded
size) at all times and on all dimensions along a play, implying that all odd
priorities are always followed by smaller even priorities.
We obtain the following result.

\begin{theorem}\label{theorem:games:direct:winning}
  Let $\lambda = 8 \cdot |L|\cdot |\regions|\cdot
  (\lfloor\frac{\maxpriority}{2}\rfloor +1)^\numdimensions\cdot
  \numdimensions + 3$.
  The sets of winning states for the objectives
  $\mathsf{GDFTW}(p, \lambda)$,
  $\mathsf{GDBTW}(p)$ and $\mathsf{RR}(\mathcal{R}(p))$ coincide.
  Furthermore, there exists a finite-memory region strategy that is winning
  for all three objectives from any state in these sets.
\end{theorem}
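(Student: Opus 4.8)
The plan is to prove the equality of the three winning sets by a cycle of inclusions and to exhibit a single strategy witnessing membership in all three. Write $W_{\mathsf{DF}}$, $W_{\mathsf{DB}}$ and $W_{\mathsf{RR}}$ for the sets of winning states of $\mathsf{GDFTW}(p,\lambda)$, $\mathsf{GDBTW}(p)$ and $\mathsf{RR}(\mathcal{R}(p))$ respectively. First I would record a monotonicity observation about winning conditions: if $\Psi_1\subseteq\Psi_2$ then $\wc_1(\Psi_1)\subseteq\wc_1(\Psi_2)$, since in the definition $\wc_1(\Psi)=(\plays(\game,\Psi)\cap\plays_\infty(\game))\cup(\textsf{Blameless}_1\setminus\plays_\infty(\game))$ only the first term depends on $\Psi$, and monotonically so. Consequently any strategy winning for a smaller objective is winning for a larger one, so objective inclusions lift directly to winning-set inclusions.

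Two of the three inclusions then follow immediately. Applying Lemma~\ref{lemma:inclusions} dimension-wise and intersecting gives $\mathsf{GDFTW}(p,\lambda)\subseteq\mathsf{GDBTW}(p)$, whence $W_{\mathsf{DF}}\subseteq W_{\mathsf{DB}}$; and the inclusion $\mathsf{GDBTW}(p)\subseteq\mathsf{RR}(\mathcal{R}(p))$ established above (a play satisfying $\mathsf{GDBTW}(p)$ exhibits good windows at all times on every dimension, so each odd priority is eventually answered by a smaller even one) gives $W_{\mathsf{DB}}\subseteq W_{\mathsf{RR}}$.

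The remaining inclusion $W_{\mathsf{RR}}\subseteq W_{\mathsf{DF}}$ carries the real content, and it is supplied by Lemma~\ref{lemma:games:technical} with $U=\emptyset$. I would invoke Lemma~\ref{lemma:games:rr} with $n=\numdimensions$ chain-response families, each having at most $r=\lfloor\frac{\maxpriority}{2}\rfloor$ pairs, to obtain a single finite-memory region strategy $\sigma$ winning for $\mathsf{RR}(\mathcal{R}(p))$ from every state of $W_{\mathsf{RR}}$, proposing delays of at most $1$ and using $|\fmstates|=4\cdot(\lfloor\frac{\maxpriority}{2}\rfloor+1)^\numdimensions\cdot\numdimensions$ memory states. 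The crucial arithmetic is that the bound $2\cdot|L|\cdot|\regions|\cdot|\fmstates|+3$ appearing in Lemma~\ref{lemma:games:technical} equals exactly the $\lambda$ of the statement. Hence, for any $s_0\in W_{\mathsf{RR}}$, Lemma~\ref{lemma:games:technical} guarantees that every time-divergent outcome of $\sigma$ from $s_0$ satisfies $\mathsf{GDFTW}(p,\lambda)$. To upgrade this to winning in the sense of $\wc_1$, I would note that the time-convergent outcomes need no separate treatment: since $\sigma$ is winning for $\mathsf{RR}(\mathcal{R}(p))$, all its time-convergent outcomes lie in $\textsf{Blameless}_1$, and this set is independent of the objective. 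Thus $\outcome_1(\sigma,s_0)\subseteq\wc_1(\mathsf{GDFTW}(p,\lambda))$, so $s_0\in W_{\mathsf{DF}}$.

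Chaining the three inclusions yields $W_{\mathsf{DF}}=W_{\mathsf{DB}}=W_{\mathsf{RR}}$. For the furthermore clause, the single strategy $\sigma$ produced above is winning for $\mathsf{GDFTW}(p,\lambda)$ from every state of the common winning set, hence, by the monotonicity observation together with the two objective inclusions, also winning for $\mathsf{GDBTW}(p)$ and for $\mathsf{RR}(\mathcal{R}(p))$; being a finite-memory region strategy, it witnesses the claim. The only genuinely delicate point in the whole argument is the bookkeeping of the memory-state count, so that the bound emerging from Lemma~\ref{lemma:games:technical} lands precisely on the advertised $\lambda$; everything else reduces to monotonicity of $\wc_1$ and the objective-independence of the blameless condition.
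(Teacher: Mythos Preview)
Your proposal is correct and follows essentially the same approach as the paper: the same cycle of inclusions using Lemma~\ref{lemma:inclusions}, the objective inclusion $\mathsf{GDBTW}(p)\subseteq\mathsf{RR}(\mathcal{R}(p))$, and the key step via Lemma~\ref{lemma:games:technical} (with $U=\emptyset$) applied to the finite-memory region strategy supplied by Lemma~\ref{lemma:games:rr}. If anything, your treatment is slightly more explicit than the paper's in spelling out the monotonicity of $\wc_1$ and the objective-independence of $\textsf{Blameless}_1$ to handle time-convergent outcomes, which the paper leaves implicit.
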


\begin{proof}
  We first argue that from the set of winning states for
  $\mathsf{RR}(\mathcal{R}(p))$, there exists a strategy winning for all
  three objectives at once. Lemma~\ref{lemma:games:rr} ensures that there
  exists a finite-memory region strategy $\sigma^\fmmealy$ induced by a Mealy machine
  with $4\cdot (\lfloor\frac{\maxpriority}{2}\rfloor +1)^\numdimensions\cdot
  \numdimensions$ states and proposing delays of at most $1$ suffices to
  win for $\mathsf{RR}(\mathcal{R}(p))$.
  It follows from Lemma~\ref{lemma:games:technical} that $\sigma^\fmmealy$
  is winning for the objectives
  $\mathsf{GDFTW}(p, \lambda)$ and $\mathsf{GDBTW}(p)$
  from any state from which $\player_1$ has a winning strategy for
  $\mathsf{RR}(\mathcal{R}(p))$.

  It follows from the above that the set of winning states for
  $\mathsf{RR}(\mathcal{R}(p))$ is a subset of the set of winning states of
  the two window objectives. Furthermore, the inclusion
  $\mathsf{GDFTW}(p, \lambda) \subseteq \mathsf{GDBTW}(p)$
  implies that any state winning for the fixed objective is also winning for
  the bounded objective.
  To end the proof, it suffices to show that the
  inclusion $\mathsf{GDBTW}(p)\subseteq\mathsf{RR}(\mathcal{R}(p))$ holds
  to obtain that the set of winning states for the bounded window objective
  is included in that of the request-response objective.

  Let $\pi = s_0(m_0^{(1)}, m_0^{(2)})s_1\ldots\in\mathsf{GDBTW}(p)$ be a
  play conforming to the generalized direct bounded timed window objective.
  Let $n\in\IN$ such that $s_n\in\mathsf{Rq}$ for some
  $(\mathsf{Rq}, \mathsf{Rp})\in\mathcal{R}(p)$. There is some dimension
  $k\in\{1, \ldots, \numdimensions\}$ such that
  $\mathsf{Rq} = p^{-1}_k(j)\times \regions$ for some odd priority $j$.
  By definition, there is some
  $\lambda\in\IN$ such that $\pi\in\mathsf{DFTW}(p_k, \lambda)$, which
  implies $\pi_{n\to}\in\mathsf{TGW}(p_k, \lambda)$. It follows immediately
  from the definition of $\mathsf{TGW}(p_k, \lambda)$ that there exists some
  $n' > n$ such that the priority of the location of $s_{n'}$ on dimension $k$
  is even and smaller than $j$, i.e., $[s_{n'}]\in\mathsf{Rp}$. This shows
  that $\pi$ satisfies the request-response objective, and ends the proof.
\end{proof}

We conclude this section by determining the time complexity of solving the
realizability problem for direct bounded timed window objectives.
We produce a request-response objective with
$\numdimensions\cdot\lfloor\frac{\maxpriority}{2}\rfloor$ pairs, i.e., our reduction is in
polynomial time. In light of Lemma~\ref{lemma:games:rr}, we obtain that
the overall reduction-based algorithm described above for realizability in TGs
with direct bounded timed window objectives is in exponential time.

\begin{theorem}\label{theorem:games:direct:complexity}
  The realizability problem for TGs with generalized direct bounded timed
  window objectives is in \textsf{EXPTIME}.
\end{theorem}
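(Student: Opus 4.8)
The plan is to obtain the bound purely by assembling the two main ingredients already in place: the coincidence of winning sets from Theorem~\ref{theorem:games:direct:winning} and the complexity of solving chain-response games from Lemma~\ref{lemma:games:rr}. First I would invoke Theorem~\ref{theorem:games:direct:winning}: since the set of winning states for $\mathsf{GDBTW}(p)$ coincides with the set of winning states for $\mathsf{RR}(\mathcal{R}(p))$, deciding realizability of the generalized direct bounded timed window objective reduces to deciding whether $s_\init$ is winning for the request-response objective $\mathsf{RR}(\mathcal{R}(p))$. Thus it suffices to compute the latter winning set and test membership of the initial state; the correctness of this reduction is exactly the content of the preceding results, so no further soundness argument is needed here.

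Next I would pin down the parameters of the reduction. By construction $\mathcal{R}(p)=\bigcup_{1\leq i\leq\numdimensions}\mathcal{R}(p_i)$ is a $\numdimensions$-chain-response family, and each underlying family $\mathcal{R}(p_i)$ contains one request-response pair per odd priority, hence at most $\lfloor\frac{\maxpriority}{2}\rfloor$ pairs. Every request and response set has the form (set of locations)$\,\times\regions$ and is determined solely by the priorities assigned to locations, so the family admits a compact description of size polynomial in $|L|$ and $\maxpriority$; producing it is therefore a polynomial-time step, as announced.

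Then I would apply Lemma~\ref{lemma:games:rr} with $n=\numdimensions$ and $r=\lfloor\frac{\maxpriority}{2}\rfloor$, which yields the winning set for $\mathsf{RR}(\mathcal{R}(p))$ in time $\bigo\big((|L|\cdot|\regions|\cdot(\lfloor\frac{\maxpriority}{2}\rfloor+1)^\numdimensions\cdot\numdimensions)^{3}\big)$. Since this winning set is a union of state regions, testing whether $s_\init$ lies in it is immediate once the set is computed. It then remains only to observe that the stated running time is exponential in the size of the input: the region count satisfies $|\regions|\leq |C|!\cdot 2^{|C|}\cdot\prod_{x\in C}(2c_x+1)$, which is singly exponential in the encoding of $\automaton$, while $(\lfloor\frac{\maxpriority}{2}\rfloor+1)^\numdimensions$ is likewise at most singly exponential in the input; cubing and multiplying these quantities keeps the bound within \textsf{EXPTIME}.

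The crux is this final bookkeeping step: confirming that the only exponential contributions, namely $|\regions|$ and the automaton blow-up $(\lfloor\frac{\maxpriority}{2}\rfloor+1)^\numdimensions$, are each \emph{singly} exponential and do not compound into a doubly exponential bound. I would emphasize that the chain-response encoding is precisely what prevents an extra exponential factor in $\maxpriority$: a naive translation of the request-response objective built from all $\numdimensions\cdot\lfloor\frac{\maxpriority}{2}\rfloor$ pairs into a Büchi automaton would use $2^{\numdimensions\lfloor\maxpriority/2\rfloor}\cdot\numdimensions\lfloor\maxpriority/2\rfloor$ states, exponential in $\maxpriority$, whereas Lemma~\ref{lemma:games:rr} exploits the chain of response inclusions to keep the per-dimension state count linear in $\maxpriority$. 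Beyond verifying this, no genuine obstacle remains, since all the substantive work, i.e.\ soundness of the reduction and the accompanying strategy transfer, is already carried by Theorem~\ref{theorem:games:direct:winning} and Lemma~\ref{lemma:games:technical}.
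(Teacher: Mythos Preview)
Your proposal is correct and follows essentially the same approach as the paper: reduce to the request-response objective $\mathsf{RR}(\mathcal{R}(p))$ via Theorem~\ref{theorem:games:direct:winning}, then invoke Lemma~\ref{lemma:games:rr} with $n=\numdimensions$ and $r=\lfloor\frac{\maxpriority}{2}\rfloor$ and observe that the resulting bound is singly exponential. The paper's own proof is terser (it additionally accounts for the time to build the DBA explicitly), but the structure and key ingredients are the same.
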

\begin{proof}
  It takes time
  $\bigo((\lfloor\frac{\maxpriority}{2}\rfloor + 1)^\numdimensions
  \cdot\numdimensions\cdot |L|\cdot|\regions|)$ to construct a
  DBA encoding $\mathsf{RR}(\mathcal{R}(p))$ (the factor $|L|\cdot|\regions|$
  comes from the construction of transitions), and by
  Lemma~\ref{lemma:games:rr},
  it takes time
  $\bigo((|L|\cdot|\regions|\cdot
  (\lfloor\frac{\maxpriority}{2}\rfloor + 1)^\numdimensions \cdot\numdimensions))^{3})$
  to solve the request-response game. Overall, we need exponential time to
  solve the game.
\end{proof}

\subsection{Indirect bounded timed window objective}\label{section:games:indirect}

In this section, we show the \textsf{EXPTIME}-membership of the realizability
problem for the generalized bounded timed window objective.
To this end, we provide a fixed-point algorithm to solve these games.
At each step of the algorithm, we compute the set of winning states for a
given request-response objective.

The structure of the section is as follows. We open the section by
presenting the algorithm and proving its termination in
Section~\ref{section:games:indirect:algorithm}.
The correctness of the algorithm is shown in
Section~\ref{section:games:indirect:correctness}.
Section~\ref{section:games:indirect:complexity}
establishes that the algorithm terminates in exponential time.

\subsubsection{An algorithm for solving bounded timed window games}\label{section:games:indirect:algorithm}

We provide a fixed-point algorithm to compute the set of winning states for
the bounded timed window objective. We utilize request-response objectives
as in the direct case.

The algorithm behaves as follows. We start by computing the winning set
$W^1$ for the direct objective via the request-response objective
$\mathsf{RR}(\mathcal{R}(p))$; we obtain in this way a subset of the set of
winning states, because
$\mathsf{GDBTW}(p)\subseteq \mathsf{GBTW}(p)$.
It follows from the prefix-independence of $\mathsf{GBTW}(p)$ that $\player_1$
can extend any play that reaches $W^1$ into a winning play. Hence, we can
compute a larger subset $W^2$ of the set of winning states of $\player_1$ by
changing our request-response pairs in such a way that reaching $W^1$ clears
all requests.

This set $W^2$ is a subset of the set of winning states; intuitively
if $\player_1$ uses a winning strategy for the simplified request-response
objective from $W^2$ and the play does not reach $W^1$, the outcome satisfies
the winning condition for the direct objective $\mathsf{GDBTW}(p)$
by Lemma~\ref{lemma:games:technical} with $U=\{[s]\mid s\in W^1\}$.
This reasoning can be repeated inductively: we update the request-response
objective so that states in $W^2$ clear all requests. We continue until a fixed
point is reached; the set of states $W$ obtained this way is a set of states
from which $\player_1$ has a winning strategy for the objective
$\mathsf{GBTW}(p)$.

We now formally present the algorithm.
The steps of the algorithm are as follows. First, we construct the family of
request-response pairs $\mathcal{R}(p)$. After this initialization,
the algorithm enters a loop, in which we repeatedly solve request-response
games. We modify the request-response
pairs at each step by marking regions that were in the latest computed
winning set as responses for all possible requests. Note that at each step,
we always have $\numdimensions$-chain-response families.
The algorithm terminates when the set of winning states no longer grows.
The procedure is summarized in Algorithm~\ref{algorithm:indirect:game}, in
which we assume a sub-routine $\mathsf{SolveRR}$ which given a TG and a
$\numdimensions$-chain-response family $\mathcal{R}$, outputs the set of
winning states in the TG for
the objective $\mathsf{RR}(\mathcal{R})$.

\begin{algorithm}[h]
  \caption{Computing the set of winning states for $\mathsf{BTW}(p)$} \label{algorithm:indirect:game}
  \KwData{A TG $\game = (\automaton, \Sigma_1, \Sigma_2)$, a multi-dimensional priority function $p$ over $\automaton$.}
  $k\leftarrow 0$\;
  $W^0\leftarrow \emptyset$\;
  $\mathcal{R}\leftarrow \mathcal{R}(p)$\;
  \Repeat{$W^{k}\setminus W^{k-1}=\emptyset$}
  {
    $k\leftarrow k+1$\;
    $W^k\leftarrow \mathsf{SolveRR}(\game, \mathcal{R})$\;
    \For{$(\mathsf{Rq}, \mathsf{Rp})\in\mathcal{R}$}
    {
      $\mathsf{Rq}\leftarrow \mathsf{Rq}\setminus \{[s]\in L\times\regions \mid [s]\subseteq W^k\}$\;
      $\mathsf{Rp}\leftarrow \mathsf{Rp} \cup\{[s]\in L\times\regions \mid [s]\subseteq W^k\}$\;
    }
  }
  \Return $W^k$\;
\end{algorithm}

We now move on to the termination of Algorithm~\ref{algorithm:indirect:game}.
It is known that the set of winning states for $\omega$-regular region
objectives in TGs are unions of state regions (Theorem~\ref{theorem:omegaregular:complexity}).
Hence, it suffices to show
that the sequence of sets $(W^k)_{k\in K}$ computed at each step of the
algorithm is non-decreasing to obtain a proof of termination, as there are
finitely many state regions. Let us note that it is due to this property that
we refer to Algorithm~\ref{algorithm:indirect:game} as a fixed-point algorithm.
Intuitively, the result holds because we simplify the request-response
objectives from one iteration to the next.

\begin{lemma}\label{lemma:indirect:termination}
  The sequence of sets $(W^k)_{k\in K}$ computed in the loop of
  Algorithm~\ref{algorithm:indirect:game} is non-decreasing. As a consequence,
  Algorithm~\ref{algorithm:indirect:game} terminates.
\end{lemma}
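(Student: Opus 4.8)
The plan is to prove that $W^k \subseteq W^{k+1}$ for every $k$, from which termination follows immediately. I would first fix notation: let $\mathcal{R}^{(k)}$ denote the request-response family supplied to $\mathsf{SolveRR}$ at the iteration producing $W^k$, so that $\mathcal{R}^{(1)} = \mathcal{R}(p)$, and $\mathcal{R}^{(k+1)}$ is obtained from $\mathcal{R}^{(k)}$ by running the inner \textbf{for} loop with $U_k = \{[s]\in L\times\regions \mid [s]\subseteq W^k\}$, which deletes the regions of $U_k$ from every request set and inserts them into every response set. Writing $(\mathsf{Rq}_j^{(k)}, \mathsf{Rp}_j^{(k)})$ for the $j$-th pair of $\mathcal{R}^{(k)}$, the updates give $\mathsf{Rq}_j^{(k+1)}\subseteq\mathsf{Rq}_j^{(k)}$ and $\mathsf{Rp}_j^{(k)}\subseteq\mathsf{Rp}_j^{(k+1)}$; so across iterations request sets only shrink and response sets only grow.

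The first step is to lift this to an inclusion between objectives, namely $\mathsf{RR}(\mathcal{R}^{(k)})\subseteq\mathsf{RR}(\mathcal{R}^{(k+1)})$. This is immediate from the definition of the request-response objective: if $\pi = s_0s_1\ldots$ satisfies $\mathsf{RR}(\mathcal{R}^{(k)})$, then for any pair index $j$ and any position $n$ with $[s_n]\in\mathsf{Rq}_j^{(k+1)}$ we also have $[s_n]\in\mathsf{Rq}_j^{(k)}$, so $\pi$ supplies some $n'\geq n$ with $[s_{n'}]\in\mathsf{Rp}_j^{(k)}\subseteq\mathsf{Rp}_j^{(k+1)}$; hence $\pi$ satisfies $\mathsf{RR}(\mathcal{R}^{(k+1)})$.

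The second step transfers the objective inclusion to the winning sets. Here I would invoke the monotonicity of the map $\Psi\mapsto\wc_1(\Psi)$: enlarging $\Psi$ enlarges the time-divergent component $\plays(\game,\Psi)\cap\plays_\infty(\game)$ of the winning condition while leaving the time-convergent blameless component $\textsf{Blameless}_1\setminus\plays_\infty(\game)$ untouched. Consequently any strategy witnessing that a state is winning for $\mathsf{RR}(\mathcal{R}^{(k)})$ is also winning for $\mathsf{RR}(\mathcal{R}^{(k+1)})$, so $W^k\subseteq W^{k+1}$, establishing that $(W^k)_{k\in K}$ is non-decreasing. Termination is then a finiteness argument: by Theorem~\ref{theorem:omegaregular:complexity} each $W^k$ is a union of state regions, and there are only finitely many state regions (at most $|L|\cdot|\regions|$); a non-decreasing sequence of subsets of a finite set must stabilize, so at some iteration $W^k\setminus W^{k-1}=\emptyset$ and the loop exits, in fact within at most $|L|\cdot|\regions|+1$ iterations.

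I do not expect a genuine obstacle: the lemma is essentially a monotonicity statement, and both the objective inclusion and the finiteness argument are routine. The only point meriting explicit care is the passage from objective inclusion to winning-set inclusion, i.e., verifying that $\wc_1$ is monotone in its objective argument so that inclusions of objectives propagate to inclusions of the corresponding winning sets; this is where I would be slightly careful, but it is immediate from the definition of $\wc_1$.
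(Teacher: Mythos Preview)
Your proposal is correct and follows essentially the same approach as the paper: both argue that the update shrinks request sets and grows response sets, deduce the objective inclusion $\mathsf{RR}(\mathcal{R}^{(k)})\subseteq\mathsf{RR}(\mathcal{R}^{(k+1)})$, transfer this to the winning sets, and conclude termination from the finiteness of state regions. If anything, you are slightly more explicit than the paper about the monotonicity of $\wc_1$ in its objective argument, which the paper leaves implicit.
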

\begin{proof}
  We show the first statement of the lemma.  We proceed by induction. We
  trivially have $W^0\subseteq W^1$ given that $W^0=\emptyset$. Let us
  now take $k\in K$, $k< \sup K$, and show that $W^k\subseteq W^{k+1}$.
  Let $\mathcal{R}^k$ and $\mathcal{R}^{k+1}$ respectively
  denote the family of request-response pairs from which $W^k$ and
  $W^{k+1}$ were computed.
To obtain $W^k\subseteq W^{k+1}$, it suffices to show that 
  $\mathsf{RR}(\mathcal{R}^{k})\subseteq \mathsf{RR}(\mathcal{R}^{k+1})$;
  $W^k$ and $W^{k+1}$ are the respective winning sets for these objectives.

  Let $s_0s_1\ldots\in \mathsf{RR}(\mathcal{R}^k)$
  be an infinite sequence of states. We must show that
  $s_0s_1\ldots\in \mathsf{RR}(\mathcal{R}^{k+1})$.
  Let $(\mathsf{Rq}^{k+1}, \mathsf{Rp}^{k+1})\in\mathcal{R}^{k+1}$ be a
  request-response pair. Assume there exists $i\in\IN$ be such
  that $s_i\in\mathsf{Rq}^{k+1}$.
  It follows from the innermost loop of the algorithm that there is some
  request-response pair
  $(\mathsf{Rq}^k, \mathsf{Rp}^k)\in\mathcal{R}^k$ such that
  $\mathsf{Rq}^{k+1} \subseteq \mathsf{Rq}^k$ and
  $\mathsf{Rp}^{k+1} \supseteq \mathsf{Rp}^k$. It follows from
  $s_0s_1\ldots\in \mathsf{RR}(\mathcal{R}^{k})$ and
  $\mathsf{Rq}^{k+1} \subseteq \mathsf{Rq}^k$ that there is some $j\geq i$
  such that $s_j\in \mathsf{Rp}^k \subseteq \mathsf{Rp}^{k+1}$. This shows that
  $s_0s_1\ldots\in\mathsf{RR}(\mathcal{R}^{k+1})$. This ends the argument that
  $(W^k)_{k\in K}$ is non-decreasing.

  It remains to show that Algorithm~\ref{algorithm:indirect:game} terminates.
  Each $W^k$, $k\in K$, is a union of state regions. There are finitely
  many state regions and we have shown $(W^k)_{k\in K}$ to be non-decreasing,
  thus it follows the sequence eventually reaches a fixed point, i.e., the
  algorithm terminates.
\end{proof}

\subsubsection{Correctness of the fixed-point algorithm}\label{section:games:indirect:correctness}

In this section, we prove that the set $W$ returned by
Algorithm~\ref{algorithm:indirect:game} is the set of winning states for
$\player_1$ in the TG $\game$ for the objective $\mathsf{GBTW}(p)$. We
establish the stronger claim that $\player_1$ has a strategy that is winning
for some generalized fixed timed window objective from $W$.

The proof is done in two steps. First, we show that
Algorithm~\ref{algorithm:indirect:game} outputs a subset of the set of
winning states
of $\player_1$ on which finite-memory region strategies suffice.
Second, to end the proof of correctness, we show that the
complement of the returned set is not winning for $\mathsf{GBTW}(p)$.

Let us argue that $\player_1$ has a winning strategy from any state in
the set $W$ returned by Algorithm~\ref{algorithm:indirect:game}.
The set $W$ is organized in layers: each set
$W^k\setminus W^{k-1}$ is one such layer.
We can construct winning strategies by exploiting this layered structure.
In the lowermost layer $W^1$, we have the winning set
for $\mathsf{GDBTW}(p)$, which is also winning for a fixed objective
(Theorem~\ref{theorem:games:direct:winning}); any winning strategy for the
direct objective is trivially winning for the indirect objective.

Higher layers are handled inductively. Given some layer, e.g.,
$W^k\setminus W^{k-1}$, one argues that $\player_1$ wins by constructing a
strategy that changes its behavior when a lower layer is reached: as long as
the layer does not change, $\player_1$ plays a winning strategy for the
current request-response objective, and should a deeper layer be reached,
$\player_1$ wins by forgetting the history and switching to a winning
strategy in this deeper layer. All outcomes of this strategy are
winning by prefix-independence of the objective; once the layer index
no longer decreases, Lemma~\ref{lemma:games:technical} ensures
that that some generalized direct fixed objective is satisfied if time diverges.

By choosing finite-memory region winning strategies for each request-response
objective in the construction of the layered winning strategy, we can even
show that finite-memory region strategies suffice for winning in $W$.
The idea is to keep
track of the current layer in memory, and whenever the layer of the current
state is lower than that in the memory, we act as though we had just
started the play in the current state.
The following lemma and its proof formalize the explanations above.

\begin{lemma}\label{lemma:indirect:game:inclusion}
  Let $\lambda = 8 \cdot |L|\cdot |\regions|\cdot
  (\lfloor\frac{\maxpriority}{2}\rfloor +1)^\numdimensions\cdot
  \numdimensions + 3$.
  The set $W$ provided by Algorithm~\ref{algorithm:indirect:game}
  is a subset of the set of winning states of $\player_1$
  for the objective $\mathsf{GFTW}(p, \lambda)$
  and finite-memory region strategies suffice for winning from any state
  in $W$. 
\end{lemma}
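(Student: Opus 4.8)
The plan is to build a single finite-memory region strategy that is winning for $\mathsf{GFTW}(p, \lambda)$ from every state of $W$, by stitching together the strategies produced for the individual request-response games solved in the loop of Algorithm~\ref{algorithm:indirect:game}. Write $W^1\subseteq W^2\subseteq\ldots\subseteq W^K = W$ for the successive winning sets, let $\mathcal{R}^\ell$ be the family used to compute $W^\ell$ (so $\mathcal{R}^1 = \mathcal{R}(p)$, and $\mathcal{R}^\ell$ is $\mathcal{R}(p)$ with $U^\ell = \{[s]\mid [s]\subseteq W^{\ell-1}\}$ removed from all requests and added to all responses), and for each $\ell$ fix, via Lemma~\ref{lemma:games:rr}, a finite-memory region winning strategy $\sigma_\ell$ for $\mathsf{RR}(\mathcal{R}^\ell)$ from $W^\ell$ that proposes delays of at most $1$, uses at most $4\cdot(\lfloor\frac{\maxpriority}{2}\rfloor+1)^\numdimensions\cdot\numdimensions$ memory states, and whose outcomes stay inside $W^\ell$. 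For $s\in W$, let $\mathsf{lay}(s)$ denote the least $\ell$ with $s\in W^\ell$; since the $W^\ell$ are unions of state regions, $\mathsf{lay}$ depends only on $[s]$.

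First I would assemble the combined strategy $\sigma$. Its memory is a pair consisting of a \emph{committed layer} $\ell$ and a memory state of $\sigma_\ell$; from a state $s$ it plays $\sigma_\ell$, and upon moving to a state $s'$ it keeps $\ell$ and advances $\sigma_\ell$ if $\mathsf{lay}(s')\geq \ell$, and otherwise resets the committed layer to $\mathsf{lay}(s')$ and restarts $\sigma_{\mathsf{lay}(s')}$ as though the play had begun at $s'$. Because each $\sigma_\ell$ keeps outcomes inside $W^\ell$, the committed layer never increases along any outcome and ranges over $\{1,\ldots,K\}$, so it changes only finitely often and stabilizes at some value $\ell^\star$; moreover $\sigma$ is a finite-memory region strategy (finitely many layers, each $\sigma_\ell$ region-based, $\mathsf{lay}$ region-based) proposing delays of at most $1$.

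Then I would analyze an arbitrary outcome $\pi$ from $s_0\in W$ through the suffix $\pi'$ following the last layer switch. By construction $\pi'$ is an outcome of $\sigma_{\ell^\star}$ from a state of $W^{\ell^\star}$ with freshly reset memory that never re-enters $W^{\ell^\star-1}$, hence never visits $U^{\ell^\star}$. If $\pi$ is time-divergent, so is $\pi'$, and Lemma~\ref{lemma:games:technical} applied with $\mathcal{R}' = \mathcal{R}^{\ell^\star}$ and $U = U^{\ell^\star}$ gives $\pi'\in\mathsf{GDFTW}(p,\lambda)$, where the memory bound from Lemma~\ref{lemma:games:rr} ensures the window bound $2\cdot|L|\cdot|\regions|\cdot|\fmstates|+3$ of that lemma is at most our $\lambda$; since a suffix of $\pi$ lies in $\mathsf{GDFTW}(p,\lambda)$, the definition of $\mathsf{GFTW}(p,\lambda)$ yields $\pi\in\mathsf{GFTW}(p,\lambda)$, so $\pi\in\wc_1(\mathsf{GFTW}(p,\lambda))$. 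If $\pi$ is time-convergent, then $\pi'$ is a time-convergent outcome of the winning strategy $\sigma_{\ell^\star}$, so $\player_1$ is not to be blamed from some point on in $\pi'$; as blamelessness is a tail property, the same holds for $\pi$, and again $\pi\in\wc_1(\mathsf{GFTW}(p,\lambda))$.

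I expect the main obstacle to be the two invariance points rather than the window-bound bookkeeping: first, justifying that each $\sigma_\ell$ can be chosen so that its outcomes stay inside $W^\ell$ (which is what makes the committed layer non-increasing and makes $\sigma$ everywhere defined on $W$), relying on the fact that the winning region of the suffix-closed objective $\mathsf{RR}(\mathcal{R}^\ell)$ is $\player_1$-invariant under a suitably chosen region winning strategy; and second, verifying that the reset/switch discipline really makes $\pi'$ a genuine outcome of $\sigma_{\ell^\star}$ originating in $W^{\ell^\star}$, so that Lemma~\ref{lemma:games:technical} applies verbatim. The remaining ingredients — finiteness of the combined memory, the region property, that a suffix in $\mathsf{GDFTW}(p,\lambda)$ forces membership in $\mathsf{GFTW}(p,\lambda)$, and the tail nature of blamelessness — are routine.
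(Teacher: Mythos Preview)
Your proposal is correct and follows essentially the same layered-strategy construction as the paper: fix a winning finite-memory region strategy for each $\mathsf{RR}(\mathcal{R}^\ell)$, maintain a committed layer that only decreases, argue stabilization, and apply Lemma~\ref{lemma:games:technical} to the stabilized suffix. The one point of departure is that you rely on the invariance of $W^\ell$ under $\sigma_\ell$ to keep $\mathsf{lay}(s')$ well-defined and to guarantee the suffix lies in $W^{\ell^\star}$; the paper instead lets its update function simply continue with the current layer whenever $s\notin W$ (or $\mathsf{e}([s])\geq k$), so the committed layer is non-increasing purely by construction and Lemma~\ref{lemma:games:technical} applies without ever needing that outcomes of $\sigma_\ell$ remain in $W^\ell$---your flagged obstacle thus disappears for free.
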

\begin{proof}
  We first describe a Mealy machine encoding a winning finite-memory region
  strategy of $\player_1$, and then prove it indeed encodes a winning strategy.

  Let $K$ denote the set of positive integers such that $W^k\setminus W^{k-1}$
  is non-empty. For each $k\in K$, let $\mathcal{R}^{k}$ be the family of
  request-response pairs from which $W^{k}$ was computed, and let
  $\fmmealy^k = (\fmstates, \mathfrak{m}_\init, \update^k, \nextmove^k)$
  be a Mealy machine encoding a finite-memory region strategy for $\player_1$
  on $W^k$ for the objective $\mathsf{RR}(\mathcal{R}^k)$, proposing delays
  of at most $1$ (Lemma~\ref{lemma:games:rr}) with
  $4\cdot (\lfloor \frac{\maxpriority}{2}\rfloor +1)^\numdimensions\cdot\numdimensions$ states.
  We can assume that these Mealy machines all share the same state space
  $\fmstates$. For any state $s\in W$, we
  let $\mathsf{e}([s]) = \min\{k\in K\mid [s]\subseteq W^{k}\}$ denote the earliest
  index $k\in K$ such that $s\in W^k$.
  
  We will consider the Mealy machine
  $\fmmealy = (\fmstates\times K, (\mathfrak{m}_\init,  \max K),
  \update, \nextmove)$ where the update function
  $\update\colon \fmstates\times K\times \regions\to\fmstates$
  is defined, for all $\mathfrak{m}\in\fmstates$, $k\in K$ and $s\in S$, by
  \[\update(\mathfrak{m}, k, [s]) = \begin{cases}
      (\update^k(\mathfrak{m}, [s]), k)& \text{if $s\notin W$ or $\mathsf{e}([s]) \geq k$} \\
      (\update^{\mathsf{e}([s])}(\mathfrak{m}_\init, [s]), \mathsf{e}([s])) & \text{otherwise},
    \end{cases}\]
  and $\nextmove\colon \fmstates\times K\times \regions\to M_1$ is
  defined, for all $\mathfrak{m}\in\fmstates$, $k\in K$ and $s\in S$, by
    \[\nextmove(\mathfrak{m}, k, [s]) = \begin{cases}
      \nextmove^k(\mathfrak{m}, [s]) & \text{if $s\notin W$ or $\mathsf{e}([s]) \geq k$} \\
      \nextmove^{\mathsf{e}([s])}(\mathfrak{m}_\init, [s]) & \text{otherwise}.
    \end{cases}\]
  Intuitively, $\fmmealy$ encodes a strategy that plays a winning strategy
  of $W^k$ as long as the play remains in $W^k$, and whenever the plays visits
  a state in some
  $W^{k'}$ with $k'< k$, forgets the past and switches to a winning strategy
  in $W^{k'}$.

  We now show that $\fmmealy$ encodes a strategy that is winning from
  every state in $W$.
Let $\pi = s_0(m_0^{(1)}, m_0^{(2)})s_2\ldots$ be an
  outcome of the strategy induced by $\fmmealy$ starting in some state of
  $W$, and let $((\mathfrak{m}_n, k_n))_{n\in \IN}$ be the sequence of memory
  states such that $(\mathfrak{m}_0, k_0) = (\mathfrak{m}_\init, \max K)$
  and for all $n\in\IN$, $ (\mathfrak{m}_{n+1}, k_{n+1}) = \update(\mathfrak{m}_n, k_n, [s_n])$.
  We first argue that there exists some $k\in K$ such that $\pi$ has a
  suffix that starts in $W^{k}\setminus W^{k-1}$ and that is consistent with
  the strategy induced by $\fmmealy^k$.

  By construction of $\update$, the sequence $(k_n)_{n\in\IN}$
  is a non-increasing sequence of non-negative integers, hence it must stabilize at some
  point to some $k\in K$.
  We now argue that $\pi$ has a suffix starting in $W^k\setminus W^{k-1}$
  and consistent with the strategy induced by $\fmmealy^k$.
  We distinguish two cases, depending on whether $k= k_0$ or not.

  First, assume that $k < k_0 = \max K$.
  Let $n_0 = \min\{n\in\IN\mid k_n = k\} - 1$.
  By definition of $\update$, it must be the case that $\mathsf{e}([s_{n_0}]) = k$,
  i.e., $s_{n_0} \in W^k\setminus W^{k-1}$. The suffix
  $\pi_{n_0\to}$ is consistent with the strategy encoded by $\fmmealy^k$:
  the first move
  in $\pi_{n_0\to}$ is given by $\nextmove^k(\mathfrak{m}_\init, [s_{n_0}])$
  by definition of $\nextmove$, and for later steps, it follows
  from the fact that $(k_n)_{n > n_0}$ is a constant sequence
  and the definitions of $\update$ and $\nextmove$. Indeed, memory
  updates and move proposals are performed as they would be in
  $\fmmealy^k$: the component in $K$ of memory states of $\fmmealy$ is
  disregarded.

  Now, assume that $k = k_0 = \max K$. It follows from the definition of
  $\nextmove$ that $\pi$ is consistent with the strategy induced by
  $\fmmealy^k$. Furthermore, by definition
  of $\update$, it must be the case that no state in $W^{k-1}$ has been
  visited, i.e., $\pi$ starts in $W^k\setminus W^{k-1}$. In this case, $\pi$
  itself is a play consistent with $\fmmealy^k$ that starts in
  $W^k\setminus W^{k-1}$. We let $n_0 = 0$ so as to treat both cases
  simultaneously in the remainder of the proof; note that $\pi = \pi_{0\to}$.

  We now prove that
  $\pi\in\wc_1(\mathsf{GFTWP}(p, \lambda))$.
  The definition of the indirect window objectives imply that it suffices to
  show that
  $\pi_{n_0\to}\in \wc_1(\mathsf{GDFTWP}(p, \lambda))$.
  If $\pi_{n_0\to}$ is time-convergent, then it must be blameless for $\player_1$
  because it is the outcome of a winning strategy (the strategy induced by
  $\fmmealy^k$).
  Assume that $\pi_{n_0\to}$ is time-divergent. By
  Lemma~\ref{lemma:indirect:termination}, we obtain
  $W^{k-1} = \bigcup_{k'\leq k-1} W^{k'}$. We can therefore apply
  Lemma~\ref{lemma:games:technical} with $U = \{[s]\mid s\in W^{k-1}\}$ and $\fmmealy^k$
  to obtain that $\pi_{n_0\to}$ satisfies $\mathsf{GDFTW}(p, \lambda))$. 
  We have shown that the strategy induced by $\fmmealy$ is winning from
  every state of $W$, ending the proof.
\end{proof}

Lemma~\ref{lemma:indirect:game:inclusion} asserts that all states in the
output $W$ of Algorithm~\ref{algorithm:indirect:game} are winning for some
generalized fixed timed window objective, hence for the generalized
bounded timed window objective. To finish
the proof of correctness of the algorithm, it remains to show that states
outside of $W$ are not winning for the generalized bounded timed window
objective. The idea is to show that for any strategy of
$\player_1$, there is some losing outcome when starting from $S\setminus W$.

We use the fact that states in $S\setminus W$ are losing for some
request-response objective where states in $W$ answer all pending requests.
It follows that any losing time-divergent outcome eventually stays in
$S\setminus W$. We can inductively construct an outcome that violates the
bounded timed window objective in stages as follows. At stage $n$,
we forget about the past and follow a play that is losing for the
request-response objective, while remaining consistent with the strategy of
$\player_1$ fixed beforehand. We follow this play until some request is left
pending for at least $n$ time units, and then move on to the next stage.
This constructs some losing outcome of $\player_1$'s strategy because requests
come from odd priorities: an unanswered request for $n$ time units implies
the existence of a window that is not good within $n$ time units on some dimension.
Because there are finitely many dimensions, this outcome cannot satisfy the
bounded timed window objective on at least one dimension,
and it follows that $\player_1$'s strategy is not winning. We formalize this
construction in the following proof.

\begin{lemma}\label{lemma:indirect:game:complement}
  Let $W$ denote the set provided by Algorithm~\ref{algorithm:indirect:game}.
  From every state in $S\setminus W$, $\player_1$ has no winning strategy
  for the objective $\mathsf{GBTW}(p)$.
\end{lemma}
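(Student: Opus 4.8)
The plan is to show that $\sigma$ cannot be winning from any $s_0\in S\setminus W$ by exhibiting a single losing outcome, built incrementally. First I would fix notation: let $\mathcal{R}$ be the request-response family present when Algorithm~\ref{algorithm:indirect:game} terminates. By Lemma~\ref{lemma:indirect:termination} the loop stops with $W^k=W^{k-1}=W$, and this final $W$ is exactly the set of winning states for $\mathsf{RR}(\mathcal{R})$, where $\mathcal{R}=(\mathsf{Rq}_i\setminus U,\mathsf{Rp}_i\cup U)_i$ with $U=\{[s]\mid s\in W\}$; in particular every region of $W$ is a response to every request. The crucial consequence is that a play visiting $W$ infinitely often answers every request and hence lies in $\mathsf{RR}(\mathcal{R})$, so any play \emph{violating} $\mathsf{RR}(\mathcal{R})$ visits $W$ only finitely often. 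I would maintain throughout the construction the invariant that the current state lies in $S\setminus W$; since $W$ is the winning set, the residual of $\sigma$ from such a state is not a winning strategy for $\mathsf{RR}(\mathcal{R})$, and therefore admits an outcome outside $\wc_1(\mathsf{RR}(\mathcal{R}))$.

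The construction proceeds in stages indexed by $n\in\IN$. At stage $n$ I take an outcome $\rho$ of the residual strategy, starting in the current state of $S\setminus W$, with $\rho\notin\wc_1(\mathsf{RR}(\mathcal{R}))$. If $\rho$ is time-convergent, then by definition of $\wc_1$ it fails to be $\player_1$-blameless; concatenating the history built so far with $\rho$ yields a time-convergent, non-$\player_1$-blameless outcome of $\sigma$ from $s_0$, which is immediately outside $\wc_1(\mathsf{GBTW}(p))$, and the argument stops. Otherwise $\rho$ is time-divergent with $\rho\notin\mathsf{RR}(\mathcal{R})$, so some request $s_i$ (an odd priority $j$ on some dimension $k_n$) is never answered in $\rho$, and, as noted above, $\rho$ visits $W$ finitely often. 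I would then append to the play a finite prefix of $\rho$ chosen to run \emph{past $\rho$'s last visit to $W$} and far enough that \emph{at least $n$ time units elapse after $s_i$ with no response appearing}. This prefix ends in $S\setminus W$, which restores the invariant and lets the construction continue to stage $n+1$.

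Correctness of the limiting play $\pi$ rests on three observations. It is time-divergent, since stage $n$ contributes at least $n$ time units. Next, because the responses to a request $j$ on dimension $k$ are exactly the even priorities $\le j$ on that dimension, every window $[i,i']$ with $v_{i'}(\globalclock)-v_i(\globalclock)<n$ contains no even priority $\le j$, so its minimum on dimension $k$ is odd; hence the request state $s_i$ of stage $n$ satisfies $s_i\notin\mathsf{TGW}(p_{k_n},n)$, and a fortiori $s_i\notin\mathsf{TGW}(p_{k_n},\lambda)$ for every $\lambda\le n$. Finally, by the pigeonhole principle some dimension $k^\star$ occurs as $k_n$ for infinitely many $n$, yielding states failing $\mathsf{TGW}(p_{k^\star},\cdot)$ at positions tending to infinity. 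Fixing any bound $\lambda$ and any index $m$ and choosing such a stage with $n\ge\lambda$ and request index at least $m$ shows $\pi_{m\to}\notin\mathsf{DFTW}(p_{k^\star},\lambda)$; as $\lambda$ and $m$ are arbitrary, $\pi\notin\mathsf{BTW}(p_{k^\star})$, so $\pi\notin\mathsf{GBTW}(p)$ and, being time-divergent, $\pi\notin\wc_1(\mathsf{GBTW}(p))$. Thus $\sigma$ is not winning from $s_0$.

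The step I expect to be most delicate is the staging itself, and it is dictated by prefix-independence: a single unanswered request only produces one non-good window, after which the play may well satisfy a fixed window objective, so one violating outcome of $\mathsf{RR}(\mathcal{R})$ does not suffice to negate $\mathsf{BTW}$. The growing threshold ``$\ge n$ time units without a response'' is precisely what forces non-good windows on a common dimension at arbitrarily late positions and for every bound, which is what the negation of $\mathsf{BTW}(p_{k^\star})$ requires. Two further points need care: that truncating $\rho$ preserves the $S\setminus W$ invariant (guaranteed by running past the last $W$-visit and by $W$ being a union of regions), and that the request/good-window correspondence invoked here is the easy direction (an existing good window forces a response within the same time span), in contrast to the subtler converse discussed in Section~\ref{section:games:direct:rr}.
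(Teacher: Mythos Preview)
Your proposal is correct and follows essentially the same approach as the paper: fix the terminal request-response family $\mathcal{R}$, use that $S\setminus W$ consists of states losing for $\mathsf{RR}(\mathcal{R})$, and build a losing outcome of $\sigma$ in stages by repeatedly extracting from the residual strategy a play with an unanswered request, truncating it after $n$ time units past that request (which lands back in $S\setminus W$ because $W$ is a universal response), and finally applying pigeonhole on the dimension. The only cosmetic differences are that the paper disposes of the time-convergent non-blameless case once upfront rather than per stage, and that your write-up is more explicit about the pigeonhole on dimensions and the verification that the truncated segment indeed yields $\pi_{i\to}\notin\mathsf{TGW}(p_{k_n},n)$.
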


\begin{proof}
  Let $\mathcal{R}$ be the request-response family
  $\{(\mathsf{Rq}\setminus [W], \mathsf{Rp}\cup [W])\mid
  (\mathsf{Rq}, \mathsf{Rp})\in \mathcal{R}(p)\}$ appearing in the last
  iteration of Algorithm~\ref{algorithm:indirect:game}, where
  $[W] = \{[s]\in L\times\regions\mid s\in W\}$ denotes
  the set of state regions in $W$.
  The set $W$ is the set of winning states of $\player_1$ for the request-response
  objective $\mathsf{RR}(\mathcal{R})$.
  It follows that from each state in $S\setminus W$, $\player_1$ has no
  winning strategy for this objective.

  Let $s\in S\setminus W$. We must show that $\player_1$ has no winning
  strategy for $\mathsf{GBTW}(p)$ from $s$. Fix a strategy $\sigma$ of
  $\player_1$. If $\sigma$ has some time-convergent outcome that is not
  blameless for $\player_1$, $\sigma$ cannot be a winning strategy. In the sequel,
  we assume that all time-convergent outcomes of $\sigma$ are blameless for $\player_1$.

  We construct an outcome of $\sigma$ from $s$ by inductively extending
  histories. We will denote the history after step $n\in\IN$ of the construction
  by $h_n$. The inductive assumptions we rely on is that $h_n$ ends in some
  state of $S\setminus W$, and that $h_n$ is consistent with $\sigma$.
  The play obtained in the limit of the construction will be an outcome of
  $\sigma$ from $s$. To ensure that this outcome violates $\mathsf{GBTW}(p)$,
  we construct our histories as follows: for $n\geq 1$, in the history
  appended to $h_{n-1}$ so as to obtain $h_n$,
  on some dimension, there is some odd priority that was not
  followed by any smaller even priority within $n$ time units.
  The resulting play violates $\mathsf{GBTW}(p)$: no matter the suffix
  taken, even if all windows along it are good on all dimensions, there is
  no bound on their size over all dimensions, hence there is some dimension
  on which there is no bound.
  Furthermore, this play is also time-divergent; at step $n$ of the
  construction, a history in which at least $n$ time units elapse is appended.

  We introduce some notation. Given a history
  $h = s_0(m_0^{(1)}, m_0^{(2)})\ldots (m_{k-1}^{(1)}, m_{k-1}^{(2)}) s_k\in\hist(\game)$ and
  some history or play
  $\pi = s_k(m_k^{(1)}, m_k^{(2)})\ldots (m_{k'-1}^{(1)}, m_{k'-1}^{(2)}) s_{k'}\ldots
  \in\hist(\game)\cup\plays(\game)$ where the last state of $h$ is the
  first state of $\pi$, we let $h\cdot\pi$ denote the play or history
  $s_0(m_0^{(1)}, m_0^{(2)})\ldots(m_{k-1}^{(1)}, m_{k-1}^{(2)})
  s_k(m_k^{(1)}, m_k^{(2)})\ldots(m_{k'-1}^{(1)}, m_{k'-1}^{(2)}) s_{k'}\ldots$
  obtained by concatenating $h$ and $\pi$ and disregarding the repeated state.

  We let $h_0 = s$, which trivially satisfies the inductive assumptions.
  Now, assume that $h_n$ has been constructed, and let $s_n=\last(h_n)$.
  We consider a strategy $\sigma_n$ such that for any history
  $h\in\hist(\game)$ starting in $s_n$, $\sigma_n(h) = \sigma(h_n\cdot h)$.
  The strategy $\sigma_n$ uses the actions which would have been proposed by
  $\sigma$ if we had seen $h_n$ prior to the input history.
  
  Because $s_n$ is losing for $\mathsf{RR}(\mathcal{R})$, there exists
  some outcome $\pi_n$ of $\sigma_n$ starting in $s_n$ such that
  $\pi_n\notin\wc_1(\mathsf{RR}(\mathcal{R}))$. By choice of $\sigma_n$,
  the play $h_n\cdot \pi_n$ is an outcome of $\sigma$.
  The play $\pi_n$ must be time-divergent, otherwise the play $h_n\cdot \pi_n$
  would not be blameless for $\player_1$, which contradicts
  the assumption that all time-convergent outcomes of $\sigma$ are
  blameless for $\player_1$. We therefore have that $\pi_n$ is time-divergent
  and $\pi_n\notin\mathsf{RR}(\mathcal{R})$.

  There is some request along $\pi_n$ that is never followed by a response.
  From states in $W$, all requests are answered, and hence there are no
  occurrences of $W$ after this request.
  Furthermore, because $\pi_n$ is time-divergent, there is some
  prefix $h(\pi_n)$ of $\pi_n$ such that at least $n+1$ time units elapse
  between the unanswered request and the last state of $h(\pi_n)$. It suffices
  to choose $h_{n+1} = h_n\cdot h(\pi_n)$ to obtain all desired properties.

  This concludes the construction of a time-divergent outcome of $\sigma$ that
  violates $\mathsf{GBTW}(p)$. We have thus shown that $\sigma$ is not a winning
  strategy, and therefore that $s$ is not in the set of winning states of $\player_1$
  for $\mathsf{GBTW}(p)$.
\end{proof}

We summarize the results of the section in the following theorem.
On the one hand, Lemma~\ref{lemma:indirect:game:inclusion} implies that the
set returned by Algorithm~\ref{algorithm:indirect:game} is a subset of the
set of winning states for $\mathsf{GBTW}(p)$ on which finite-memory region
strategies suffice, and such a strategy can be chosen as winning for some
generalized fixed timed window objective. On the other hand,
Lemma~\ref{lemma:indirect:game:complement} implies that $\player_1$ has no
winning strategy on the complement of the set computed by
Algorithm~\ref{algorithm:indirect:game} for any fixed timed window objective.
We obtain the following theorem.

\begin{theorem}\label{theorem:games:indirect:winning}
  Let $\lambda = 8 \cdot |L|\cdot |\regions|\cdot
  (\lfloor\frac{\maxpriority}{2}\rfloor +1)^\numdimensions\cdot
  \numdimensions + 3$.
  The sets of winning states for the objectives
  $\mathsf{GFTW}(p, \lambda)$ and $\mathsf{GBTW}(p)$ coincide.
  Furthermore, there exists a finite-memory region strategy that is winning
  for both objectives from any state in these sets of winning states.
\end{theorem}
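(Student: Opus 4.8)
The plan is to combine Lemma~\ref{lemma:indirect:game:inclusion} and Lemma~\ref{lemma:indirect:game:complement} into a single chain of inclusions that identifies all three sets at once, namely the output $W$ of Algorithm~\ref{algorithm:indirect:game}, the set of winning states for $\mathsf{GFTW}(p, \lambda)$, and the set of winning states for $\mathsf{GBTW}(p)$, while simultaneously exhibiting one finite-memory region strategy that is winning for both objectives. Essentially all of the technical work has already been discharged in the two lemmas, so this final statement is a matter of assembling them correctly and supplying the one glue argument that lets a strategy winning for the fixed objective transfer to the bounded objective.

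I would first let $W$ denote the set returned by Algorithm~\ref{algorithm:indirect:game}. By Lemma~\ref{lemma:indirect:game:inclusion}, there is a finite-memory region strategy $\sigma$ that is winning for $\mathsf{GFTW}(p, \lambda)$ from every state of $W$; in particular, $W$ is contained in the set of winning states for $\mathsf{GFTW}(p, \lambda)$. I would then transfer $\sigma$ to the bounded objective via the objective inclusion. Applying Lemma~\ref{lemma:inclusions} on each dimension gives $\mathsf{FTW}(p_k, \lambda)\subseteq \mathsf{BTW}(p_k)$, hence $\mathsf{GFTW}(p, \lambda)=\bigcap_{k}\mathsf{FTW}(p_k, \lambda)\subseteq \bigcap_{k}\mathsf{BTW}(p_k)=\mathsf{GBTW}(p)$. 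Because the winning condition $\wc_1(\cdot)$ is monotone in its objective (enlarging $\Psi$ only enlarges $\plays(\game, \Psi)$ and leaves the blameless, time-convergent component untouched), we have $\wc_1(\mathsf{GFTW}(p, \lambda))\subseteq \wc_1(\mathsf{GBTW}(p))$. Every outcome of $\sigma$ therefore lands in $\wc_1(\mathsf{GBTW}(p))$, so $\sigma$ is winning for $\mathsf{GBTW}(p)$ from $W$ as well. This yields both the $\emph{furthermore}$ part of the statement and the inclusion of $W$ into the winning states for $\mathsf{GBTW}(p)$.

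I would close the loop using Lemma~\ref{lemma:indirect:game:complement}, which states that no state of $S\setminus W$ is winning for $\mathsf{GBTW}(p)$; equivalently, the winning set for $\mathsf{GBTW}(p)$ is contained in $W$. Assembling the three inclusions produces the chain of winning-state sets for $\mathsf{GBTW}(p)$ contained in $W$, contained in the winning-state set for $\mathsf{GFTW}(p, \lambda)$, contained again in the winning-state set for $\mathsf{GBTW}(p)$. All three sets therefore coincide, and the strategy $\sigma$ supplied by Lemma~\ref{lemma:indirect:game:inclusion} is winning for both objectives from the common winning set.

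The only genuinely load-bearing step is the transfer in the second paragraph: one must observe that $\mathsf{GFTW}(p, \lambda)\subseteq \mathsf{GBTW}(p)$ holds componentwise by Lemma~\ref{lemma:inclusions}, and that the winning condition $\wc_1(\cdot)$ respects objective inclusion, so that a strategy winning for the stronger fixed objective is automatically winning for the weaker bounded one. Everything else is a bookkeeping combination of the two preceding lemmas with no additional obstacle.
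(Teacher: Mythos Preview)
Your proposal is correct and follows essentially the same approach as the paper: the paper also derives the theorem directly by combining Lemma~\ref{lemma:indirect:game:inclusion} (which gives $W$ contained in the winning set for $\mathsf{GFTW}(p,\lambda)$ together with a finite-memory region winning strategy) and Lemma~\ref{lemma:indirect:game:complement} (which gives the winning set for $\mathsf{GBTW}(p)$ contained in $W$), with the objective inclusion $\mathsf{GFTW}(p,\lambda)\subseteq\mathsf{GBTW}(p)$ closing the chain. Your explicit treatment of the monotonicity of $\wc_1(\cdot)$ and the componentwise use of Lemma~\ref{lemma:inclusions} simply spells out what the paper leaves implicit.
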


\subsubsection{Complexity of the fixed-point algorithm}\label{section:games:indirect:complexity}

We conclude this section by determining the computational complexity of
Algorithm~\ref{algorithm:indirect:game}. In the worst case, there are
as many iterations as there are state regions. While the complexity of
solving the request-response games is in \textsf{EXPTIME}, we still obtain an
\textsf{EXPTIME} algorithm because the exponential terms are multiplied rather
than stacked.
We obtain the following result.

\begin{theorem}\label{theorem:games:indirect:complexity}
  The realizability problem for bounded timed window objectives is
  in \textsf{EXPTIME}.
\end{theorem}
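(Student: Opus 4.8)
The plan is to bound the running time of Algorithm~\ref{algorithm:indirect:game} by separately bounding the number of iterations of its main loop and the cost of a single iteration, and then to observe that the product of these two bounds remains singly exponential in the size of the input.

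First I would bound the number of iterations. By Lemma~\ref{lemma:indirect:termination}, the sequence $(W^k)_{k\in K}$ is non-decreasing, and the \textsf{repeat} loop continues only while $W^k\setminus W^{k-1}\neq\emptyset$, i.e., while a strictly larger set is produced. Since each $W^k$ is a union of state regions (Theorem~\ref{theorem:omegaregular:complexity}) and there are at most $|L|\cdot|\regions|$ state regions, the loop executes at most $|L|\cdot|\regions|$ times before reaching the fixed point.

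Next I would bound the cost of a single iteration, which is dominated by the call to $\mathsf{SolveRR}$. The initial family $\mathcal{R}(p)$ is an $\numdimensions$-chain-response family whose underlying chain-response families each have at most $\lfloor\frac{\maxpriority}{2}\rfloor$ pairs. The inner loop updates—removing the regions of the current winning set from every request set and adding them to every response set—preserve both the chain of inclusions among response sets and the pairwise disjointness of requests and responses, so every family encountered stays an $\numdimensions$-chain-response family with at most $\lfloor\frac{\maxpriority}{2}\rfloor$ pairs per chain. Hence Lemma~\ref{lemma:games:rr} applies uniformly at every iteration, giving a per-game cost of $\bigo((|L|\cdot|\regions|\cdot(\lfloor\frac{\maxpriority}{2}\rfloor+1)^\numdimensions\cdot\numdimensions)^3)$; the bookkeeping of the inner loop (modifying $\numdimensions\cdot\lfloor\frac{\maxpriority}{2}\rfloor$ pairs, each a set of at most $|L|\cdot|\regions|$ state regions) is polynomial in these quantities and thus subsumed by the game-solving step.

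Finally I would multiply the two bounds to obtain an overall running time of $\bigo(|L|\cdot|\regions|\cdot(|L|\cdot|\regions|\cdot(\lfloor\frac{\maxpriority}{2}\rfloor+1)^\numdimensions\cdot\numdimensions)^3)$. The one subtle point to stress—and the only genuine obstacle in the argument—is that, although both the iteration count and the per-iteration cost are exponential in the input (through $|\regions|$, which is exponential in the number of clocks and the encoding of the constants, and through $(\lfloor\frac{\maxpriority}{2}\rfloor+1)^\numdimensions$, which is exponential in $\numdimensions$), these two exponential factors are \emph{multiplied} rather than \emph{stacked}: the exponents add, so the product stays of the form $2^{\mathrm{poly}}$ in the size of $\game$ and $p$. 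This is precisely what prevents the fixed-point iteration from incurring a double-exponential blow-up, and it yields membership in \textsf{EXPTIME}.
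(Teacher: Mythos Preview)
Your proposal is correct and follows essentially the same approach as the paper's own proof: bound the number of outer iterations by $|L|\cdot|\regions|$ via Lemma~\ref{lemma:indirect:termination}, bound each call to $\mathsf{SolveRR}$ via Lemma~\ref{lemma:games:rr}, and observe that the resulting product of exponential factors remains singly exponential. Your argument is in fact slightly more careful than the paper's, since you explicitly verify that the updated families remain $\numdimensions$-chain-response families (which the paper only asserts), and your final bound is marginally tighter than the paper's $\bigo((|L|\cdot|\regions|\cdot(\lfloor\frac{\maxpriority}{2}\rfloor+1)^\numdimensions\cdot\numdimensions)^4)$.
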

\begin{proof}
  We show that
  Algorithm~\ref{algorithm:indirect:game} runs in exponential time to
  finish this proof.
  By Lemma~\ref{lemma:games:rr} the subroutine $\mathsf{SolveRR}(\mathcal{R})$
  runs in time
  $\bigo((|L|\cdot|\regions|\cdot (\lfloor\frac{\maxpriority}{2}\rfloor +1)^\numdimensions\cdot\numdimensions)^{3})$ (the time
  to solve the TG dominates that of the DPA construction). The innermost
  loop iterates
  $\lfloor\frac{\maxpriority}{2}\rfloor\cdot\numdimensions$ times.
  The outermost loop iterates at most
  $|L|\cdot |\regions|$ times. By combining all of these complexities
  appropriately, one obtains a time complexity in
  $\bigo((|L|\cdot|\regions|\cdot (\lfloor\frac{\maxpriority}{2}\rfloor +1)^\numdimensions\cdot\numdimensions)^{4})$.
\end{proof}

\section{Lower bounds and completeness}\label{section:completeness}
In this section, we establish the \textsf{PSPACE} and \textsf{EXPTIME} completeness of the
verification and realizability problems for the direct and indirect bounded
timed window objectives. In light of
Theorems~\ref{theorem:verification:complexity},~\ref{theorem:games:direct:complexity} and~\ref{theorem:games:indirect:complexity} that assert membership
of these problems in these complexity classes,
we need only establish hardness to obtain completeness.
In the remainder of this section, we no longer distinguish direct and indirect
cases; arguments are the same in both cases.

We will consider the verification and realizability problems for safety
objectives to establish hardness. The verification problem for safety objectives
is \textsf{PSPACE}-complete~\cite{AlurD94},
and the realizability problem for safety objectives is
\textsf{EXPTIME}-complete (as a consequence of the \textsf{EXPTIME}-completeness
of the safety control problem~\cite{HenzingerK99}).

It was shown in~\cite{MRS21} that there exists a polynomial-time reduction from
the verification and realizability problems for safety objectives to these
respective problems for fixed timed window objectives. Furthermore, the
reduction works no matter the bound on the size of windows in the definition
of the fixed objective, i.e., the two problems for safety objectives are
reducible in polynomial time to their counterpart for the objectives
$\mathsf{FDTW}(p, \lambda)$ or $\mathsf{FTW}(p, \lambda)$ for any
$\lambda\geq 1$ (for some appropriate priority function $p$) with a
construction independent of $\lambda$.
We argue that these same reductions are suitable to establish hardness of the
studied problems with bounded timed window objectives.

An intuitive sketch of the reductions follows.
They are similar for both the verification and realizability problems; we
modify the TA in the same way in both cases, and make no changes to the
partition of actions in TGs.
Let $\automaton = (L, \ell_\init, C, \Sigma, I, E)$ be a TA. Fix $F\subseteq L$.
Recall that the safety objective for $F$ requires that no location of $F$ ever
be visited. 

The reduction consists in deriving a TA $\automaton'$ from $\automaton$
in which locations are augmented with a Boolean value indicating whether $F$ has been
previously visited. Edges of $\automaton$ are replicated in $\automaton'$. These
edges do not update the Boolean value, unless they target some location in $F$,
in which case the Boolean value is changed to indicate $F$ has been visited.
The initial location $(\ell_\init, b)$
of $\automaton'$ indicates that $F$ has been visited if and only if
$\ell_\init\in F$. To define the window objectives, we use a priority function
assigning $0$ (respectively $1$) to locations indicating $F$ has not been visited
(respectively has been visited). Intuitively, correctness of the reduction follows from
the fact that if $F$ is never visited, then only the priority $0$ appears,
and otherwise, from some point on, only the priority $1$ appears. In the former
case, any variant of timed window parity objectives are satisfied trivially,
and in the latter, they are trivially violated.

Formally, we can also derive hardness for the verification and realizability problems
for the bounded timed window objectives as follows.
We have established that the verification and realizability problem for
bounded timed window objectives are equivalent to some instance of verification
and realizability problems respectively for 
some fixed timed window objective on the same TA or TG (Corollaries~\ref{corollary:direct:uniformity}
and~\ref{corollary:indirect:uniformity} for verification and
Theorems~\ref{theorem:games:direct:winning}
and~\ref{theorem:games:indirect:winning} for realizability).
Because the reduction above is known to work for fixed objectives for any
bound $\lambda$, it follows that the verification and realizability problems
for safety objectives are reducible in polynomial time to
the verification and realizability problems for the bounded timed window
objectives, yielding \textsf{PSPACE} and \textsf{EXPTIME}-hardness of these
problems in the one-dimensional case. We obtain the following result.

\begin{theorem}\label{theorem:complexity:completeness}
  The verification problem for generalized direct and indirect bounded
  timed window objectives is \textsf{PSPACE}-complete and
  the realizability problem for generalized direct and indirect bounded
  timed window objectives is \textsf{EXPTIME}-complete.
\end{theorem}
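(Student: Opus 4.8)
The plan is to establish hardness only: membership in \textsf{PSPACE} for verification (Theorem~\ref{theorem:verification:complexity}) and in \textsf{EXPTIME} for realizability (Theorems~\ref{theorem:games:direct:complexity} and~\ref{theorem:games:indirect:complexity}) is already in hand, so completeness will follow once matching lower bounds are shown. I would anchor these on the classical safety problems: verification of $\mathsf{Safe}(F)$ is \textsf{PSPACE}-complete~\cite{AlurD94}, and realizability of $\mathsf{Safe}(F)$ is \textsf{EXPTIME}-complete as a consequence of the \textsf{EXPTIME}-completeness of the safety control problem~\cite{HenzingerK99}. It therefore suffices to give polynomial-time reductions from the safety verification and realizability problems to their bounded timed window counterparts. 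Since the construction and argument are identical for the direct and indirect objectives, I would treat them uniformly, and since the one-dimensional objective is the special case $\numdimensions=1$ of the generalized objective, hardness in the one-dimensional setting already yields hardness for the generalized problems.

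I would then describe the reduction. Given a TA $\automaton = (L, \ell_\init, C, \Sigma, I, E)$ and a target set $F\subseteq L$, build a TA $\automaton'$ whose locations are pairs $(\ell, b)$ with $b\in\{0,1\}$ recording whether a location of $F$ has already been visited; edges are copied from $\automaton$, keeping $b=1$ once set and switching $b$ to $1$ exactly on entering a location of $F$, with the initial bit equal to $1$ iff $\ell_\init\in F$. I would define the priority function $p$ to assign $0$ to locations with $b=0$ and $1$ to locations with $b=1$. For the realizability version the partition of actions is left unchanged, so $\automaton'$ inherits the game structure. This construction is clearly computable in polynomial time and makes no reference to any window bound.

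The core of the proof is correctness, namely that a path (resp.\ $\player_1$) satisfies (resp.\ wins for) $\mathsf{Safe}(F)$ in $\automaton$ if and only if it satisfies (resp.\ wins for) the bounded timed window objective in $\automaton'$. The decisive observation is that the priority behaviour of $\automaton'$ is extreme: along any path either priority $0$ occurs at every step (when $F$ is avoided), or from the step entering $F$ onward only priority $1$ occurs. In the first case every suffix has a good window at its very first state (take $n=0$, for which $v_n(\globalclock)-v_0(\globalclock)=0<\lambda$ and the minimal priority is the even $0$), so $\mathsf{TGW}(p,\lambda)$ holds for every $\lambda\geq 1$ and all timed window variants hold trivially. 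In the second case, every suffix that begins after $F$ is entered witnesses only the odd priority $1$, so no good window of any size ever appears; hence no suffix lies in $\mathsf{DFTW}(p,\lambda)$ for any $\lambda$, and $\mathsf{BTW}(p)$ (equivalently $\mathsf{DBTW}(p)$) fails. This dichotomy makes the safety and bounded-window conditions literally equivalent on $\automaton'$, both for paths and, with the blamelessness bookkeeping of the winning condition, for strategies.

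I expect the main subtlety—rather than a genuine obstacle—to be the transfer from the fixed to the bounded setting. The reduction of~\cite{MRS21} is stated for the fixed objectives $\mathsf{DFTW}(p,\lambda)$ and $\mathsf{FTW}(p,\lambda)$ and is known to work for every bound $\lambda$ with a $\lambda$-independent construction. Rather than re-verifying correctness from scratch, I would reuse it as a black box and close the gap via the uniformity results proved earlier: Corollaries~\ref{corollary:direct:uniformity} and~\ref{corollary:indirect:uniformity} equate satisfaction of a bounded objective by all time-divergent paths with satisfaction of a fixed objective for a concrete computable $\lambda$ on the \emph{same} TA, and Theorems~\ref{theorem:games:direct:winning} and~\ref{theorem:games:indirect:winning} give the analogous equivalence of winning sets in TGs. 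Composing the $\lambda$-independent safety-to-fixed-window reduction with these equivalences yields the desired polynomial-time reductions from safety to the bounded objectives, establishing \textsf{PSPACE}-hardness of verification and \textsf{EXPTIME}-hardness of realizability, and hence completeness in both cases.
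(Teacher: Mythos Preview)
Your proposal is correct and follows essentially the same approach as the paper: reduce from the safety verification and realizability problems via the $\lambda$-independent construction of~\cite{MRS21} (augmenting locations with a Boolean recording whether $F$ was visited and assigning priorities $0$/$1$ accordingly), and then invoke Corollaries~\ref{corollary:direct:uniformity}, \ref{corollary:indirect:uniformity} and Theorems~\ref{theorem:games:direct:winning}, \ref{theorem:games:indirect:winning} to transfer correctness from the fixed to the bounded objectives. Your additional direct argument via the priority dichotomy is also sound and matches the paper's informal justification of the reduction.
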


\section{Comparing window objectives in timed and untimed settings}\label{section:comparison}

In this section, we provide a short comparison of timed window objectives.
We compare the timed and untimed settings, as well as the fixed and bounded
settings.
A summary of the complexity classes for each respective problem is provided
in Table~\ref{table:complexity}. We fix a TG
$\game = (\automaton, \Sigma_1, \Sigma_2)$ with
$\automaton = (L, \ell_\init, C, \Sigma_1\cup\Sigma_2, I, E)$ for the upcoming
explanations.

\begin{table}[htb]
  \begin{center}
    \begin{tabular}{| c | c | c | c |}
      \hline
      & & Single dimension & Multiple dimensions \\ \hline
      \multirow{2}*{Timed automata} & Fixed~\cite{MRS21}
      & \textsf{PSPACE}-complete & \textsf{PSPACE}-complete \\ \cline{2-4}
      ~ & Bounded
      & \textbf{\textsf{PSPACE}-complete} & \textbf{\textsf{PSPACE}-complete} \\ \hline
      \multirow{2}*{Timed games} & Fixed~\cite{MRS21}
      & \textsf{EXPTIME}-complete & \textsf{EXPTIME}-complete \\ \cline{2-4}
      ~ & Bounded
      & \textbf{\textsf{EXPTIME}-complete} & \textbf{\textsf{EXPTIME}-complete} \\ \hline
      \multirow{2}*{Games (untimed)~\cite{DBLP:journals/corr/BruyereHR16}} & Fixed
      & \textsf{P}-complete & \textsf{EXPTIME}-complete \\ \cline{2-4}
      ~ & Bounded
      & \textsf{P}-complete & \textsf{EXPTIME}-complete \\ \hline
    \end{tabular}
    \caption{Summary of the complexity classes for problems with
      window parity objective in timed and untimed settings.
      Direct and prefix-independent cases are grouped together as their
      complexity matches. New results are in boldface.}\label{table:complexity}
  \end{center}
\end{table}

First, let us compare TGs with parity objectives and with window parity
objectives by analogy to the setting of untimed games.
In the one-dimensional case, in both the fixed and bounded cases, solving
untimed games with window parity objectives can be done in polynomial time.
Parity games on graphs are widely studied and have recently been shown to
be solvable in quasi-polynomial time~\cite{CaludeJKL017}, but are not yet known
to be solvable in polynomial time. In many algorithms, the number of priorities
is responsible for their high complexity. One-dimensional window parity
games provide a polynomial time alternative to parity games; the number of
priorities contributes polynomially to the complexity of solving an untimed
window parity game.

In the timed setting, TGs with parity objectives can be solved in exponential
time~\cite{AlfaroFHMS03}, and~\cite{ChatterjeeHP11} proposes a reduction
from parity TGs to untimed parity games; from a TG
 and a priority function $p\colon L\to\{0, \ldots, \maxpriority -1\}$, they construct a turn-based parity
game with $256\cdot |L|\cdot |\regions|\cdot |C|\cdot\maxpriority$ states
and priorities at most $\maxpriority + 1$. The solving of parity TGs by means
of this reduction nevertheless suffers
from the blow-up in the number of priorities in the same way as untimed games.
Similarly to the untimed setting, fixed and bounded timed window objectives avoid this issue; the number of
priorities only contributes polynomially to the complexity of solving these
games.

Now let us move on to a comparison of the fixed and bounded cases.
Despite there being no difference in the complexity classes for the two
cases, a TG with a generalized direct or indirect fixed timed window objective
with $\numdimensions$ dimensions and bound $\lambda\in\IN$ 
can be solved in time
\[\bigo\left(\left(|L|\cdot (\maxpriority^\numdimensions + 1) \cdot
  (|C| + \numdimensions) ! \cdot 2^{|C| + \numdimensions}\cdot
\prod_{x\in C}(2c_x+1)\cdot (2\lambda + 1)^\numdimensions\right)^4\right)\]
with the approach of~\cite{MRS21}, where $c_x$ denotes the largest bound
to which clock $x\in C$ is compared in clock constraints of $\automaton$.
It follows that the algorithm, for a fixed number of dimensions, is polynomial
in the bound on the size of windows (i.e., exponential in the size of its
encoding). When solving TGs with bounded timed window objectives, the
complexity of the algorithms presented in previous sections is not affected
by the potential size of good windows. Because the winning set for
a bounded objective coincides with the winning set for some fixed objective,
it follows that the algorithms for TGs with bounded objectives can be used
to more efficiently solve TGs with fixed objectives with large bounds, by
entirely bypassing the bound in question.

\bibliography{bib}

\newpage 
\appendix

\section{Winning strategies in \texorpdfstring{$\omega$}{omega}-regular timed games}\label{appendix:strategies}

{

\newcommand{\expandedstates}{\widehat{S}}
\newcommand{\expandedjd}{\widehat{\jointdestination}}
\newcommand{\expandeddpastates}{\widehat{\dpastates}}
\newcommand{\expandeddpatransitions}{\widehat{\mathsf{up}}}
\newcommand{\blame}{\mathsf{blame}}

In this section, we present an approach to solving timed games with
$\omega$-regular region objectives as a direct extension of the technique
of~\cite{AlfaroFHMS03} for timed games with $\omega$-regular location
objectives, i.e., objectives the satisfaction of which depends only on the
sequence of witnessed locations in the same way that region objectives depend
only on the sequence of witnessed regions along a play.
The main interest of this presentation is to highlight some
useful properties of winning strategies in timed games with $\omega$-regular
region objectives, namely that finite-memory region strategies suffice
for winning.
We assume that the objectives are given by deterministic parity automata.

The main ideas are as follows. First, we consider an expanded game in which
blamelessness and time-divergence can be encoded as $\omega$-regular conditions.
We alter the deterministic parity automaton defining the objective
so that it encodes the winning condition itself rather than the objective.
We can then compute memoryless
region strategies on the (infinite) parity game obtained through the
product of the expanded game and expanded parity automaton~\cite{AlfaroFHMS03,DBLP:conf/concur/AlfaroHM01}.
The remainder of this section
is devoted to showing that we can use these memoryless region strategies to
derive a winning finite-memory strategies on the non-expanded TG.

We fix a TG $\game = (\automaton, \Sigma_1, \Sigma_2)$ where
$\automaton = (L, \ell_\init, C, \Sigma_1\cup\Sigma_2, I, E)$
for this entire section. Recall that we use $S$ and $\to$ to denote the state
space and transition relation of $\mathcal{T}(\automaton)$, and
$\jointdestination$ for the joint-destination function.

\subparagraph*{Expanding the state space of the game.} To encode time-divergence
and blamelessness as $\omega$-regular conditions, we expand the state space
$S$ with two Boolean values, i.e., we consider an expanded state space
$\expandedstates = S\times \{\true, \false\}^2$. Expanded states are of the form
$(s, \tick, \blame)$, where $\tick$ holds if and only if during the previous
transition, the global clock $\gamma$ passes a new integer bound, and
$\blame$ holds if $\player_1$ is responsible for the latest transition.
We extend the joint-destination function so that it handles the additional
information. We denote by
$\expandedjd\colon\expandedstates\times M_1\times M_2\to 2^{\expandedstates}$
the expanded joint-destination function, defined as follows. For any expanded
state $\widehat{s} = (s, \tick, \blame)\in\expandedstates$ and moves
$m^{(1)}=(\delay^{(1)}, a^{(1)})\in M_1(s)$ and
$m^{(2)}=(\delay^{(2)}, a^{(2)})\in M_2(s)$ enabled in $s$, we set
\[\jointdestination(s, m^{(1)}, m^{(2)}) = \begin{cases}
    \{(s', \tick(s, \delay^{(1)}), \true) \mid s \xrightarrow{m^{(1)}} s'\} & \text{if } \delay^{(1)} < \delay^{(2)} \\
    \{(s', \tick(s, \delay^{(2)}), \false)\mid s \xrightarrow{m^{(2)}} s'\} & \text{if } \delay^{(1)} > \delay^{(2)} \\
    \{(s', \tick(s, \delay^{(i)}), \mathsf{bl}_1(s, m^{(1)}, s')) \mid s \xrightarrow{m^{(i)}} s', i=1, 2\} & \text{if }\delay^{(1)} = \delay^{(2)},
  \end{cases}\]
where for any $s' = (\ell, v)\in S$ and $\delay\geq 0$,
$\tick(s', \delay)$ holds if and only if
$\lfloor v(\globalclock)\rfloor < \lfloor v(\globalclock) + \delay\rfloor$, and
$\mathsf{bl}_1(s, m^{(1)}, s')$ holds only if $s \xrightarrow{m^{(1)}} s'$
(i.e., if $\player_1$ is responsible for the transition).

We denote this expanded game by $\widehat{\game}$.
The notions of plays, histories, time-divergence, blame, strategies and
objectives are defined analogously in $\widehat{\game}$ as they were in
regular TGs. We extend state equivalence to the state space of
$\widehat{\game}$ by saying that any two states
$(\ell, v, \tick, \blame)$ and $(\ell', v', \tick', \blame')$ are
state-equivalent if $\ell = \ell'$, $v\clockequiv v$, $\tick = \tick'$ and
$\blame = \blame'$. In other words, a state region of this expanded state
space is a set of
the form $\{\ell\}\times  R \times \{\tick\}\times \{\blame\}$ where
$\ell \in L$, $R\in\regions$, $\tick$, $\blame\in\{\true, \false\}$, i.e.,
obtained by taking a state region and adding two fixed Boolean values for the
last components.

We can define time-divergence and blamelessness
as $\omega$-regular conditions using the two additional Boolean values.
A play of the expanded game
is time-divergent if and only if infinitely many states of the form
$(s, \true, \blame)$ appear along it (i.e., the global clock passes infinitely
many integer bounds along the play). A play is blameless if and only if from
some index on, only states of the form $(s, \tick, \false)$ are visited, i.e.,
if from some point on, $\player_1$ is no longer responsible for transitions.

\subparagraph{Parity automata for winning conditions.}
We consider $\omega$-regular objectives specified by deterministic parity
automata. We explain how to derive
a DPA encoding the winning condition using the
additional information of $\widehat{\game}$
 from a DPA specifying a region objective
in the TG $\game$.

Let us fix a DPA
$H = (\dpastates, q_\init, L\times \regions, \dpatransitions, p)$.
One can derive from $H$ a DPA $\widehat{H}$
encoding the winning condition $\wc_1(\mathcal{L}(H))$
in the expanded game $\widehat{\game}$. Formally, we define
$\widehat{H} = (\expandeddpastates, \widehat{q}_\init, (L\times \regions) \times \{\true, \false\}^2, \expandeddpatransitions, \widehat{p})$, where
$\expandeddpastates = \dpastates\times \{\true, \false\}^2\times\{0, \ldots, \maxpriority-1\}$, $\widehat{q}_\init = (q_\init, \false, \false, d)$,
for any $\widehat{q} = (q, \tick, \blame, h)\in \expandeddpastates$ and
$\widehat{s} = ([s], \tick', \blame')$, we have
\[\expandeddpatransitions(\widehat{q}, \widehat{s}) =
  \begin{cases}
    (q', \tick', \blame', p(q')) & \text{if $\tick = \true$} \\
    (q', \tick', \blame', \min\{h, p(q')\}) & \text{otherwise},
  \end{cases}\]
where $q' = \dpatransitions(q, [s])$, and
\[\widehat{p}(\widehat{q}, \widehat{s}) =
  \begin{cases}
    h & \text{if $\tick = \true$} \\
    \maxpriority' & \text{if $\tick = \false$ and $\blame = \true$}, \\
    \maxpriority' + 1 & \text{otherwise}
  \end{cases}\]
where $\maxpriority' = \maxpriority$ if $\maxpriority$ is odd, and
otherwise $\maxpriority' = \maxpriority - 1$.
The DPA $\widehat{H}$ encodes an objective of $\widehat{\game}$ in the same
way that $H$ encodes an objective of $\game$. This objective is
the winning condition for the following reasons.

The rough idea of the construction is to keep track
of the smallest priority in $H$ seen between two ticks and output it whenever
$\tick$ holds. This way, whenever $\tick$ holds infinitely often, the smallest
priority appearing in an execution of $\widehat{H}$ is the same as the
smallest priority in the matching execution of $H$, because we chose $\maxpriority'$
greater or equal to all of the priorities of $H$.

If $\tick$ holds finitely
often however (i.e., we consider a time-convergent play), from some point on
only the priorities $\maxpriority'$ and $\maxpriority'+1$ are seen. We see the smaller odd
priority $\maxpriority'$ whenever $\player_1$ is responsible for a transition; it follows
that, in this case,  we have a rejecting execution of $\widehat{H}$
if and only if $\player_1$ is not blameless.

For the remainder of this section, we fix a DPA
$H= (\dpastates, q_\init, L\times \regions, \dpatransitions, p)$
and let $\widehat{H}$ denote its adaptation as defined above.

\subparagraph*{Computing the set of winning states.} We explain how to compute
the set of winning states of $\widehat{\game}$. The idea is to solve an
infinite parity game obtained via the synchronous product of the expanded game
$\widehat{\game}$ with the expanded DPA $\widehat{H}$. This approach is
presented in~\cite{DBLP:conf/concur/AlfaroHM01} and underlies the algorithmic
solution of~\cite{AlfaroFHMS03}.

The synchronous product of $\widehat{\game}$ and $\widehat{H}$, which we will
denote by $\widehat{\game}\times \widehat{H}$, is obtained  in the usual way.
At each step of the TG $\widehat{\game}$, we feed the state region, tick
and blame components of the current state to the DPA $\widehat{H}$. In the
sequel, because the tick and blame components in both $\widehat{\game}$ and
$\widehat{H}$ coincide (by nature of the product), we omit one of the two
in the upcoming definitions.

Formally, we obtain a game played on the state space
$S \times \widehat{Q}$, with the joint destination function
$\widehat{\jointdestination}_{\times}\colon S\times\widehat{Q}\times M_1\times M_2\to 2^{S\times\widehat{Q}}$ defined by, for all
$(s, \widehat{q})\in S\times\widehat{Q}$,
$\widehat{q} = (q, \tick, \blame, h)$, and all $m^{(1)}\in M_1(s)$ and
$m^{(2)}\in M_2(s)$,
\[\widehat{\jointdestination}_\times((s, \widehat{q}), m^{(1)}, m^{(2)}) =
  \{(s', \widehat{q}')\mid
  \widehat{s}' = (s', \tick', \blame')\in
  \widehat{\jointdestination}(\widehat{s}, m_1, m_2) \land
  \widehat{q}' = \widehat{\dpatransitions}(\widehat{q}, [\widehat{s}'])\},\]
where $\widehat{s} = (s, \tick, \blame)$.

On this product game, the objective of $\player_1$ is a parity objective.
The priority function $\widehat{p}_\times$ from which this objective is defined
assigns to each state $(\widehat{s}, q, h)$ the priority
$\widehat{p}(\widehat{q})$.

Winning in the product game $\widehat{\game}\times\widehat{H}$
and winning in the original TG $\game$ are related as follows.
There is a winning strategy for $\player_1$ in a state $s\in S$
in $\game$ for the (winning condition induced by the) objective encoded by $H$
if and only if there is a winning strategy for $\player_1$ from the
state $(s, \dpatransitions(q_\init, [s]), \false, \false, d-1)$ in
$\widehat{\game}\times\widehat{H}$ for the parity
objective given by $\widehat{p}_{\times}$. This can be
established by showing that from any winning strategy in the product game, one can
derive a winning strategy in the original TG and vice-versa.

The set of winning states in the product game can be computed by a linear-size
$\mu$-calculus formula of alternation depth
$\maxpriority'+2\leq \maxpriority+2$~\cite{DBLP:conf/concur/AlfaroHM01}.
Furthermore, and all sets involved in its computation are unions of state
regions~\cite{AlfaroFHMS03}, i.e., its evaluation can be performed on the
finite region abstraction (albeit of the product game). The following result
follows.
\theoremOmegaRegularComplexity*

Let us now discuss winning strategies in the product game. A strategy is said to
be \textit{memoryless} if for any two histories ending in the same state,
the same move is prescribed. In parity games, memoryless strategies suffice
for winning (e.g.,~\cite{DBLP:conf/focs/EmersonJ91,Zielonka98}).
In the product game $\widehat{\game}\times\widehat{H}$, one can find
winning memoryless strategies that are well-behaved with respect to regions.
A memoryless strategy $\sigma\colon S\times\widehat{Q}\to M_1$ is said to be
a \textit{memoryless region strategy} if for any two states
$(s_1, \widehat{q})$, $(s_2, \widehat{q})\in S\times\widehat{Q}$, where
$s_1 = (\ell_1, v_1)$ and $s_2 = (\ell_2, v_2)$, if $s_1\clockequiv s_2$,
then the moves
$(\delay_1, a_1) = \sigma((s_1, \widehat{q}))$ and
$(\delay_2, a_2) = \sigma((s_2, \widehat{q}))$ are such that $a_1 = a_2$,
$[v_1 + \delay_1] = [v_2 + \delay_2]$ and
$\{[v_1 + \delay_{\mathsf{mid}}]\mid 0\leq \delay_{\mathsf{mid}}\leq \delay_1\} =
\{[v_2 + \delay_{\mathsf{mid}}]\mid 0\leq \delay_{\mathsf{mid}}\leq \delay_2\}$.
Such memoryless region strategies suffice for winning
in $\widehat{\game}\times\widehat{H}$.

A function $f\colon (L\times\regions)\times\widehat{Q}\to U$, where $U$ denotes
the set of unions of elements of $(L\times\regions)\times\widehat{Q}$,
can be derived during the evaluation of the $\mu$-calculus
formula mentioned above.
This function $f$ describes a memoryless winning strategy at the region
level~\cite{DBLP:conf/concur/AlfaroHM01,AlfaroFHMS03}; a memoryless
winning strategy is obtained by assigning to any winning state
$(s, \widehat{q})\in S\times\widehat{Q}$ some move $m^{(1)}$ such that for any
move $m^{(2)}$ of $\player_2$ enabled in $(s, \widehat{q})$, we have
$\jointdestination_\times((s, \widehat{q}), m^{(1)}, m^{(2)})\subseteq f(([s], \widehat{q}))$ -- such a move is guaranteed to exist assuming that
$\player_1$ has a winning strategy from $(s, \widehat{q})$.

We explain how a memoryless winning region strategy can be obtained from $f$.
The choice of moves only matters in regions from which $\player_1$ wins.
Fix a state $(s, \widehat{q})\in S\times\widehat{Q}$ with $s = (\ell, v)$ and
let $m = (\delay, a)$ be any move that could have been assigned in
$(s, \widehat{q})$ by a winning strategy derived from $f$. Let
$s' = (\ell', v')\in S$ such that $s'\clockequiv s$; we argue that we can find
a move $m'=(\delay', a)$
such that
$[v + \delay] = [v' + \delay']$,
$\{[v + \delay_{\mathsf{mid}}]\mid 0\leq \delay_{\mathsf{mid}}\leq \delay\} =
\{[v' + \delay_{\mathsf{mid}}]\mid 0\leq \delay_{\mathsf{mid}}\leq \delay'\}$ and
for any move $m^{(2)}$ of $\player_2$ enabled in $(s', \widehat{q})$, we have
$\jointdestination_\times((s', \widehat{q}), m', m^{(2)})\subseteq f(([s'], \widehat{q}))$.
The properties of clock regions ensures that there exists some $\delay'$
satisfying the first two conditions. Fix any such $\delay'$. The
third condition follows from the facts that (i) $s\clockequiv s'$
implies $f(([s], \widehat{q})) = f(([s'], \widehat{q}))$ and (ii)
$m$ and $m'$ traverse and reach the same regions, therefore if $\player_2$ has
a move $(\delay_2', b)$ enabled in $s'$ with $\delay'_2\leq\delay'$, then
there is some $\delay_2\leq \delay$ such that
$[v + \delay_2] = [v' + \delay_2']$, therefore the sets of regions
$\{[s'']\mid (s'', \widehat{q}')\in
\jointdestination_\times((s, \widehat{q}), m, (\delay_2, b))\}$ and
$\{[s'']\mid (s'', \widehat{q}')\in
\jointdestination_\times((s', \widehat{q}), m', (\delay_2', b))\}$ are the same,
which implies the third condition in conjunction with (i).

\subparagraph*{Simplifying the structure of winning strategies.} In the previous
section, we have explained that in the product parity game
$\widehat{\game}\times \widehat{H}$, memoryless region strategies
suffice and are computable. To replicate the behavior of these strategies
in the original TG $\game$, one needs to observe the moves of the players,
e.g., to keep track of the blame component. The goal of this section is to show
that we can simplify winning strategies in two regards, with the goal of
deriving finite-memory strategies that do not take in account the moves of
the players.

Let us fix a memoryless winning region strategy
$\sigma\colon S\times \widehat{Q}\to M_1$ of $\player_1$ in the product game.
First, we show that the blame component is irrelevant to the decision of the
strategy. Formally, we show that we can select a winning strategy such that
if two expanded states $(s, \widehat{q}_1)$ and $(s, \widehat{q}_1)$ differ
only in their blame Boolean value, then the strategy prescribes the same move in
both states. Second, we show that we can bound the delays proposed by a winning
strategy, in such a way that whether $\tick$ holds or not can be inferred
without examining the delays in the moves.

To show that the blame Boolean can be disregarded, we provide a non-constructive
argument. The essence of the argument is that one can find a winning strategy
which assigns the same move to two states that possess the same successors.

\begin{lemma}\label{lemma:strategies:blame}
  In the game $\widehat{\game}\times \widehat{H}$, region memoryless strategies
  that disregard the blame Boolean suffice for winning.
\end{lemma}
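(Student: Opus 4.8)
The plan is to exploit that the blame component is \emph{write-only}: it is set by the incoming transition and is read only to determine a state's own priority, but it never influences the outgoing behaviour. Concretely, fix a pair of product states $x = (s, q, \tick, \true, h)$ and $y = (s, q, \tick, \false, h)$ that agree in every component except the blame Boolean. The available moves in $x$ and $y$ coincide, since $M_1$ and $M_2$ depend only on $s$. Moreover, for every pair of moves $(m^{(1)}, m^{(2)})$ the successor set is identical: the expanded joint-destination $\widehat{\jointdestination}$ computes the successor game state, its new tick and its new blame from $s$ and the moves alone, and the transition $\expandeddpatransitions$ computes the next DPA state from $q$, $h$, the old tick and the input letter --- in neither case is the old blame consulted. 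Hence $\widehat{\jointdestination}_\times(x, m^{(1)}, m^{(2)}) = \widehat{\jointdestination}_\times(y, m^{(1)}, m^{(2)})$, and $x$ and $y$ differ only in their own priority (and only when $\tick = \false$). It follows that a play from $x$ and the play from $y$ obtained by copying all moves coincide from the second state on --- including all blame values, which are fixed by the acting player rather than by the past --- so they share the same $\liminf$ of priorities. Consequently the winning region $W$ of the product parity game is closed under flipping blame.

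I would then build a blame-oblivious strategy from a memoryless region winning strategy $\sigma$ on $W$, whose existence was recalled above. Observe that the blame Boolean is one of the discrete coordinates defining a state region of the expanded game, so any strategy prescribing the same move at $x$ and $y$ is automatically still a region strategy; these are exactly the strategies that \emph{disregard the blame Boolean}. The target is the strategy $\sigma'$ defined by $\sigma'(s, q, \tick, \beta, h) = \sigma(s, q, \tick, \false, h)$ for both values of $\beta$. To see that $\sigma'$ is winning on $W$, I would reassign moves one blame-pair at a time: at $y$ the move $\sigma(y)$ may be replaced by $\sigma(x)$ because the latter is \emph{locally winning} at $y$, i.e.\ all its successors (which coincide with those at $x$) lie in $W$ and $\sigma$ wins from each of them; the standard fact that replacing the move of a memoryless winning strategy at a single vertex by another locally winning move preserves winning in a parity game then applies, and iterating over all pairs yields $\sigma'$.

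The crux, and the main obstacle, is precisely that the parity winning condition is \emph{not} literally invariant under flipping a state's blame, because the priority of a state depends on its blame value when $\tick = \false$ (it is the odd priority $\maxpriority'$ if $\player_1$ is to blame and the even priority $\maxpriority'+1$ otherwise). Thus one cannot simply relabel states and quote bisimulation invariance. What saves the argument is that this dependence is confined to a state's own priority while the entire outgoing structure is blame-independent, so flipping blame perturbs a play's priority sequence in at most one position at a time, which is invisible to the $\liminf$. The only genuinely technical point is to control the case where the modified vertex $y$ is revisited infinitely often along an outcome of $\sigma'$; here I would either invoke the single-vertex strategy-modification lemma for parity games directly, or, equivalently and more transparently, pass to the quotient game $G^\flat$ obtained by merging each blame-pair and attaching the blame-dependent priority to the \emph{edge} that produces it (the acting player, hence the blame, is determined by the transition). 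The game $G^\flat$ is a parity game with priorities on edges; its plays are in priority-preserving bijection with those of the product game, and by memoryless determinacy $\player_1$ has a memoryless region winning strategy on $G^\flat$. Lifting such a strategy back gives a winning region strategy that reads only $(s, q, \tick, h)$ and never the blame Boolean, which is the claim.
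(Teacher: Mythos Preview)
Your central observation---that two product states differing only in the blame Boolean have literally identical successor sets under every pair of moves---is correct and is exactly the paper's starting point. Where you diverge is in how you extract a blame-oblivious winning strategy from this.

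Your primary route (start from a memoryless region winning strategy $\sigma$, set $\sigma'(\cdot,\beta,\cdot)=\sigma(\cdot,\false,\cdot)$, and justify this by ``replacing the move at a single vertex by another locally winning move preserves winning, then iterate'') does not work as stated. The fact you call standard is false for parity games: take a $\player_1$-owned vertex $a$ with priority~$1$, moves to $b$ (priority~$0$) and $c$ (priority~$2$), with $b$ and $c$ both forced back to $a$. The strategy $a\mapsto b$ is winning; the move $a\mapsto c$ is ``locally winning'' in your sense (its unique successor $c$ lies in $W$ and $\sigma$ wins from $c$), yet replacing $a\mapsto b$ by $a\mapsto c$ yields the losing cycle $a\to c\to a$. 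The difficulty is precisely the one you flag (infinitely many revisits to modified vertices), and the fact that the successor sets at $x$ and $y$ coincide does not dissolve it once you modify at \emph{all} blame-true vertices simultaneously: a $\sigma'$-play may visit blame-true states at every step, so there is no $\sigma$-consistent shadow play to compare it to.

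Your fallback route (merge each blame-pair and push the blame-dependent priority onto the incoming edge, then invoke memoryless determinacy on the quotient) is correct, and this is in essence what the paper does: it observes that states sharing a successor function can be assigned the same move by the memoryless strategy that the Emerson--Jutla argument produces, after first passing to a turn-based reformulation of the concurrent game ($\player_1$ commits, then $\player_2$ may preempt) so that memoryless determinacy is available at all. You should make that turn-based step explicit, since positional determinacy fails for concurrent parity games in general. For the \emph{region} qualifier, note that your definition $\sigma'(z)=\sigma(\bar z)$ transports the region property directly from $\sigma$, so once the quotient argument establishes that $\sigma'$ is winning, the region aspect comes for free; the paper instead appeals to the fact that the strategies extracted from the $\mu$-calculus fixed point (equivalently, from Zielonka's algorithm) already treat successor-sharing states identically.
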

\begin{proof}
  In (potentially infinite) turn-based parity games, memoryless strategies
  that select the same
  action in two states with the same successors suffice for winning; this
  follows from the proof of Emerson and
  Jutla~\cite{DBLP:conf/focs/EmersonJ91} that memoryless strategies suffice
  in turn-based parity games with finitely many priorities.

  While the game $\widehat{\game}\times \widehat{H}$ is not turn-based, the
  definition of winning we use (i.e., winning no matter the strategy of
  $\player_2$) allows us to apply the previous result. Indeed, winning in
  the concurrent product game $\widehat{\game}\times \widehat{H}$ is trivially
  equivalent to
  winning in a turn-based game in which first $\player_1$ selects a move,
  and then $\player_2$ is informed of $\player_1$'s move and has the choice
  to preempt $\player_1$ or to let $\player_1$'s move induce the next
  transition.
  
  In light of the above and the fact that two states that differ only from
  their blame component possess the same successors, this ends the proof.
\end{proof}

While the argument above is non-constructive, memoryless winning region
strategies are constructed in practice using algorithms for finite parity
games~\cite{DBLP:conf/concur/AlfaroHM01}.
One can show that winning strategies constructed by Zielonka's recursive
algorithm~\cite{Zielonka98} can be built such that two successor-sharing
states are assigned the same action. This is due to the fact that the building
blocks of these winning strategies are so-called attractor strategies, and
intuitively that successor-sharing states are in the same attractor sets
when neither are target states.

We now move on to the second step in our simplification of winning strategies.
The goal of the upcoming construction is to have ticks be detectable by
observing only the current state region and using one bit of information.
The role of the bit of information is to remember whether the valuation of
the global clock was integral or not at the previous step. 
This allows us to infer that $\tick$ holds in some cases:
$\tick$ holds whenever the valuation of the global clock is integral
at the current step but was not at the previous step.

In the sequel, we show that the delays proposed by a strategy can be
constrained in such a way that all ticks are detectable by the mechanism
described above.
Intuitively, our construction consists, given a memoryless winning strategy
that disregards the blame Boolean, to replace proposed moves that have a
large delay by delay moves with small delays in such a way that the strategy
obtained this way is still winning, and that all ticks are observable. 

It remains to clarify what we mean by a large delay. On the one hand,
any delay such that the global clock passes an integer bound strictly
is considered large; we cannot observe that the global clock
was integral at some point in time during the transition in this case.
On the other hand, a delay of one is also considered large:
from regions, we can only observe whether the valuation of the global clock is
integral or not. If we move between two states in which the valuation
of the global clock is integral, it cannot be known without observing the
moves whether the transition was taken with a non-zero delay or not,
therefore ticks cannot be observed.

We formally state and prove the announced result hereunder. Let us
underline that in the following proof, to lighten
notation, we denote by $\widehat{s}$ states of the product game
$\widehat{\game}\times \widehat{H}$. In previous sections, we had used
such a notation for states of the expanded game $\widehat{\game}$.
\begin{lemma}\label{lemma:strategies:delay}
  In the game $\widehat{\game}\times \widehat{H}$, region memoryless strategies
  $\sigma$ that satisfy the following constraints suffice for winning:
  $\sigma$ disregards the blame Boolean and for any state
  $\widehat{s} =
  ((\ell, v), \widehat{q})\in
  S\times \widehat{Q}$,
  we have $\delayfunc(\sigma(\widehat{s})) \leq 1 - \fracpart(v(\globalclock))$,
  and this inequality is strict whenever $v(\globalclock)\in\IN$.
\end{lemma}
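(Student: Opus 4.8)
The plan is to start from a winning memoryless region strategy $\sigma$ of $\player_1$ in $\widehat{\game}\times\widehat{H}$ that disregards the blame Boolean, whose existence is guaranteed by Lemma~\ref{lemma:strategies:blame}, and to surgically shorten its long delays. Concretely, I would define a new strategy $\sigma'$ that agrees with $\sigma$ at every state $\widehat{s}=((\ell,v),\widehat{q})$ where $\sigma(\widehat{s})$ already meets the delay bound, and otherwise replaces $\sigma(\widehat{s})$ by a delay move $(\delta',\bot)$ chosen as follows: if $v(\globalclock)\notin\IN$, take $\delta'=1-\fracpart(v(\globalclock))$, so that the move advances $\globalclock$ exactly to the next integer (a tick); if $v(\globalclock)\in\IN$, take $\delta'$ to be a positive delay landing in the immediate time-successor region of $[v]$, which keeps $\globalclock$ strictly inside the open unit interval above $\lfloor v(\globalclock)\rfloor$ (no tick). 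Since $\sigma$'s original move had a delay at least as large and was enabled, convexity of the invariant guarantees that the shortened delay move is enabled; since the replacement reads only the current region and $\widehat{q}$ up to its blame component, $\sigma'$ is again a region strategy that disregards blame. This settles the structural requirements, and the bulk of the work is showing that $\sigma'$ remains winning.

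To prove that $\sigma'$ is winning, I would fix an outcome $\widehat{\pi}'$ of $\sigma'$ from a winning state and build a \emph{shadow} outcome $\widehat{\pi}$ of $\sigma$ visiting exactly the same states of $S$, differing only in the blame components of the expanded states. The shadow is produced move by move: $\player_1$ always plays $\sigma$'s unshortened move, while $\player_2$ mimics its move from $\widehat{\pi}'$ whenever $\sigma$ was not modified or $\player_2$ was the faster player, and otherwise interrupts with the delay move $(\delta',\bot)$. In every case the executed transition, hence the successor in $S$ and the value of $\tick$, coincides with that of $\widehat{\pi}'$. Because the transition function of $\widehat{H}$ computes the automaton state $q$ and the stored priority $h$ from the previous $q$, $h$, $\tick$ and the current region and tick, but never from $\blame$, the two plays $\widehat{\pi}$ and $\widehat{\pi}'$ visit identical sequences of $q$- and $h$-components, so they emit identical priorities at every tick.

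With the shadow in hand, the two modes of winning are handled separately. If $\widehat{\pi}'$ is time-divergent then so is $\widehat{\pi}$ (same ticks), and since a divergent play's $\liminf$ priority is read off the $h$-values emitted at ticks, the two plays yield the same parity; as $\widehat{\pi}$ is a winning outcome of $\sigma$, both satisfy the objective and $\widehat{\pi}'$ is winning. If $\widehat{\pi}'$ is time-convergent, the shadow $\widehat{\pi}$ is a time-convergent outcome of the winning strategy $\sigma$ and is therefore blameless for $\player_1$. I would then transfer blamelessness to $\widehat{\pi}'$ by showing that $\widehat{\pi}$ and $\widehat{\pi}'$ agree on blame at all but finitely many steps: the blame values can differ only at a step where a shortened move of $\player_1$ was actually selected, and the delay bound forces each such step either to reach a new integer (a tick) or to leave the half-open unit interval currently containing the monotone clock $\globalclock$. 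A time-convergent play meets only finitely many integers, so at most two such steps occur per visited interval and thus only finitely many overall; hence $\widehat{\pi}'$ is eventually blameless and therefore winning.

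The main obstacle is precisely this blame discrepancy: replacing a long move of $\player_1$ by a short delay move can make $\player_1$ the fastest player, and thus responsible, for a transition for which $\player_2$ was to blame in the shadow play. The delay constraint of the lemma is exactly what tames this, by guaranteeing that every newly blamed step coincides with a strict advance of $\globalclock$ across or onto an integer. I expect verifying the bound of at most two shortened-and-selected moves per unit interval of $\globalclock$ in the time-convergent case -- together with a careful treatment of ties and of zero-delay moves of $\player_2$ that keep $\globalclock$ pinned at an integer -- to be the most delicate part of the argument.
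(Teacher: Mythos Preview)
Your proposal follows essentially the same approach as the paper: start from a blame-ignoring memoryless region winning strategy (Lemma~\ref{lemma:strategies:blame}), replace over-long delays by short delay moves (exactly $1-\fracpart(v(\globalclock))$ when $v(\globalclock)\notin\IN$, a move into the immediate time-successor region when $v(\globalclock)\in\IN$), and prove the modified strategy winning by shadowing each outcome with an outcome of the original strategy in which $\player_2$ plays the truncated delay whenever the move was shortened.

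The one substantive difference is in the time-convergent case. The paper's argument is sharper than your ``at most two per unit interval'' bound: it observes that once both $\tick$ and $\blame$ are permanently $\false$ along the shadow play, a shortened move from a state with $v(\globalclock)\notin\IN$ would, if selected, produce a tick, contradicting the assumption; hence in that phase every shortened-and-selected move must have $v(\globalclock)\in\IN$, and since such a move advances $\globalclock$ strictly into the open interval and no further tick occurs, this can happen at most once. Your interval-counting reaches the same ``finitely many'' conclusion but is a bit roundabout; the paper's route avoids the delicate case analysis of ties and zero-delay $\player_2$ moves that you anticipate.
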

\begin{proof}
  Let $\widehat{s}_\init\in  S\times \widehat{Q}$ be a state from which
  $\player_1$ wins.
  Let $\sigma$ denote a memoryless region strategy winning from
  $\widehat{s}_\init$ that disregards the blame
  Boolean, the existence of which is ensured by
  Lemma~\ref{lemma:strategies:blame}. We explicitly derive a suitable strategy
  $\widetilde{\sigma}$ from $\sigma$ and show it is winning.

  For any state $\widehat{s} = ((\ell, v), \widehat{q})\in S\times \widehat{Q}$,
  we let $f = \fracpart(v(\globalclock))$ and define
  \[\widetilde{\sigma}(\widehat{s}) = \begin{cases}
      \sigma(\widehat{s}) & \text{if $\delayfunc(\sigma(\widehat{s})) \leq  1 - f$ and $v(\globalclock)\notin\IN$} \\
      (1 - f, \bot) & \text{if $\delayfunc(\sigma(\widehat{s})) >  1 - f$ and $v(\globalclock)\notin\IN$} \\
      \sigma(\widehat{s}) & \text{if $\delayfunc(\sigma(\widehat{s})) <  1 - f$ and $v(\globalclock)\in\IN$} \\
      (\frac{1}{2}(1 - \max_{x\in C}\fracpart(v(x))), \bot) & \text{if $\delayfunc(\sigma(\widehat{s})) \geq   1 - f$ and $v(\globalclock)\in\IN$}.
    \end{cases}\]
  This memoryless strategy $\widetilde{\sigma}$
  disregards the blame Boolean because $\sigma$ does, and satisfies the
  delay-related constraints by construction. To end the proof, we must show
  that $\widetilde{\sigma}$ is a region strategy and that it is winning.

  First, let us show that it is a memoryless region strategy.
  Let $\widehat{s}_1$ and $\widehat{s}_2$ be two region-equivalent states.
  Because $\sigma$ is a region strategy, it proposes moves in both
  $\widehat{s}_1$ and $\widehat{s}_2$ that traverse and reach
  the same region. In particular, given that ticks are encoded in states in
  the product game $\widehat{\game}\times\widehat{H}$, and that cases in the
  definition of $\widetilde{\sigma}$ depend on whether $\tick$ holds or not
  after the move proposed by $\sigma$, it follows that both
  $\widehat{s}_1$ and $\widehat{s}_2$ fall into the same case.
  
  In the first or third cases, $\widetilde{\sigma}$ proposes the same move
  as $\sigma$, therefore there is nothing to show. We restrict for the
  remainder of this paragraph our attention to the set of clocks containing
  the global clock and the clocks such that
  their valuation in $\widehat{s}_1$ (or equivalently in $\widehat{s}_2$)
  has not yet exceeded the largest constant to which they
  are compared to in $\automaton$.
  Clocks for which the valuation has exceeded this
  threshold need not be taken in account to prove that the delays
  $\widetilde{\sigma}$ proposes traverse and reach the same regions from both
  $\widehat{s}_1$ and $\widehat{s}_2$ (by definition of regions).

  In the second and fourth cases, $\widetilde{\sigma}$ prescribes a delay
  move; it does not affect the ordering of the fractional parts of the
  valuations of the clocks. 
  It follows that we need only check that the same clocks have pass
  and reach an integral value during and after the delay prescribed by
  $\widetilde{\sigma}$ in $\widehat{s}_1$ and $\widehat{s}_2$ respectively.
  In the second case of the definition of $\widetilde{\sigma}$, the
  only clocks that have an integral valuation after
  the delay are those with the same fractional
  part in their valuation as $\globalclock$ by choice of the delay. Furthermore,
  the valuation of any clock that had a fractional part greater than that of
  $\globalclock$ before the delay passes an integer bound during the delay.
  In the fourth case, the chosen delay is such that the valuation of
  no clock passes an integer bound after the delay. This shows that in both
  cases, the same regions are traversed and reached from both states.
  This concludes the proof that
  $\widetilde{\sigma}$ is a region strategy.

  It remains to show that $\widetilde{\sigma}$ is winning to end the proof.
  The idea for the remainder of this proof is to show that for any outcome
  $\widetilde{\pi}$ of $\widetilde{\sigma}$ from $\widehat{s}_\init$,
  one can find an analogous outcome $\pi$ from $\widehat{s}_\init$
  of $\sigma$ (by changing the moves of $\player_2$)
  and use the fact that $\pi$ is winning to show that
  $\widetilde{\pi}$ is also winning.

  Let $\widetilde{\pi} =
  \widetilde{s}_0(\widetilde{m}_0^{(1)}, \widetilde{m}_0^{(2)})\widetilde{s}_2
  \ldots$ be an outcome of $\widetilde{\sigma}$. We consider the outcome
  $\pi = \widehat{s}_0(m_0^{(1)}, m_0^{(2)})\widehat{s}_2\ldots$ of
  $\sigma$ where $\widehat{s}_0 = \widetilde{s}_0$, and for all $k\in\IN$,
  $m_k^{(1)} = \sigma(\widehat{s}_k)$ and,
  if $m_k^{(1)} = \widetilde{m}_k^{(1)}$ or $\player_2$ is
  responsible for the transition at step $k$ in $\widetilde{\pi}$, we let
  $m_k^{(2)} = \widetilde{m}_k^{(2)}$ and
  $\widehat{s}_{k+1} = \widetilde{s}_{k+1}$, and otherwise, we let
  $m_k^{(2)} = \widetilde{m}_k^{(1)}$ (i.e., $\player_2$ takes over the
  delay move of $\player_1$) and $\widehat{s}_{k+1}$ is obtained by
  reversing the blame Boolean of $\widetilde{s}_{k+1}$.
  We note that the play $\pi$ is a well-defined play because by construction,
  $\widetilde{\sigma}$ proposes shorter delays than $\sigma$.
  Since we assume that $\sigma$ is winning, it follows that $\pi$ satisfies
  the parity objective.

  It now remains to show that $\widetilde{\pi}$ is winning for the parity
  objective. Assume first that $\tick$ holds infinitely often in
  $\widetilde{\pi}$. It follows
  by construction that $\tick$ holds infinitely often in $\pi$. In this
  case, the structure of the priority function of the product game
  $\widehat{\game}\times\widehat{H}$ ensures that the smallest priority
  occurring infinitely often in $\widetilde{\pi}$ and $\pi$ coincide, i.e.,
  $\widetilde{\pi}$ is winning for the parity objective.

  Let us now assume that $\tick$ holds only finitely often in $\widetilde{\pi}$
  and therefore in $\pi$.
  Because $\pi$ is winning, it follows that there exists an index
  $n\in\IN$ such that for all $k\geq n$, both the $\tick$ and $\blame$
  components of $\widehat{s}_k$ evaluate to false. By construction of
  $\pi$, for all $k\geq n$, the $\tick$ component of $\widetilde{s}_k$
  evaluates to $\false$. In the remainder of the proof, we argue that there
  is at most one $k\geq n$,
  such that the $\blame$ component of $\widetilde{s}_k$ is $\true$.

  Let us fix $k\geq n$.
  If $\widetilde{m}_{k-1}^{(1)} = m_{k-1}^{(1)}$,
  we have $\widetilde{s}_k = \widehat{s}_k$, hence the $\blame$ component
  of $\widetilde{s}_k$ evaluates to $\false$. Let us assume instead that
  $\widetilde{m}_{k-1}^{(1)} \neq m_{k-1}^{(1)}$. There are two possibilities:
  either the valuation of $\globalclock$ in $\widetilde{s}_{k-1}$ is not
  an integer or it is an integer. The former case is easiest to handle: by
  definition of $\widetilde{\sigma}$, because the move is changed, it means that
  the move $m_{k-1}^{(1)}$ has a delay large enough that $\tick$ would hold
  after using it, therefore the move $\widetilde{m}_{k-1}^{(1)}$ is defined
  in such a way that $\tick$ would hold after using it. However, because $\tick$
  does not hold in $\widetilde{s}_k$, it follows that $\blame$ does not hold
  either. Now, let us place ourselves in the latter case and assume that
  the valuation of $\globalclock$ is integral in $\widetilde{s}_{k-1}$. In this
  case $\widetilde{\sigma}$ proposes a delay move with a strictly positive
  delay. In these circumstances, it may be the case that $\player_1$ is
  responsible for the transition, but this can happen at most once: after
  one such transition, the valuation of the global clock is never an
  integer again as there are no more ticks.
  This shows that $\widetilde{\pi}$ is winning in this
  second case. This concludes the proof that the strategy $\widetilde{\sigma}$
  is winning, and with this, the entire proof.
\end{proof}

\subparagraph{Finite-memory strategies.} Up to now, we have been concerned
with winning strategies in the expanded game structure. In this section,
we describe how to derive winning finite-memory region strategies from the
memoryless winning region strategies on the product
$\widehat{\game}\times\widehat{H}$.
The role of the Mealy machine is to keep track of the additional information
contained in the expanded product game.

It follows from Lemma~\ref{lemma:strategies:delay} that to win in the expanded
product game, one can disregard the blame Boolean and restrict themselves to
delays that prevent the occurrence of two ticks in a row. Let us fix one
such winning memoryless region strategy
$\sigma\colon S\times \widehat{Q}\to M_1$ for the remainder of this section.

The structure of the Mealy machine encoding the finite-memory strategy
we derive from
$\sigma$ is very close in nature to the structure of $\widehat{H}$. The main
difference is that neither ticks nor blame are observable from state regions
in the TG $\game$, which is why we simplified strategies
to overcome these limitations.

We provide the construction of the Mealy machine in the proof of the
following formal statement.

\theoremFMStrategies* 

\begin{proof}
{
  \newcommand{\integer}{\mathsf{int}}
    
  It suffices to show that using a finite-memory strategy, it is possible to emulate
  $\sigma$ in $\game$. Any strategy constructed this way is winning due to the
  relations between the games $\game$ and $\widehat{\game}\times\widehat{H}$.
  The state space of the upcoming Mealy machine is a essentially
  a simplification of $\widehat{Q}$: states are of the form
  $(\integer, q, h)$ where $\integer\in\{\true, \false\}$ holds if and only
  if the valuation
  of the global clock was integral at the previous step, $q\in Q$ is some
  state of $H$ and $h\in\{0, \ldots, \maxpriority - 1\}$ is the smallest
  priority seen
  since the last tick. We also use $h= \maxpriority - 1$ in case no priorities
  were seen, i.e., if a tick has occurred in the current step.

  We formally define the Mealy machine
  $\fmmealy=(\fmstates, \mathfrak{m}_\init, \update, \nextmove)$
  as follows. The state space is
  $\fmstates = \{\true, \false\}\times \dpastates\times \{0, \ldots, \maxpriority - 1\}$
  and the initial state is
  $\mathfrak{m}_\init = (\true, q_\init, p(q_\init))$.
  Prior to defining the update function
  $\update\colon \fmstates\times (L\times \regions)\to\fmstates$
  and next-move function
  $\nextmove\colon \fmstates\times S\to M_1$,
  we introduce some notation.  
  Let $\mathfrak{m} = (\integer, q, h)\in \fmstates$ and
  $s = (\ell, v) \in S$. We denote by $q' = \dpatransitions(q, [(\ell, v)])$
  the successor of $q$ in $H$ after reading $[s]$. We also
  let $\integer'$ hold if and only if $v(\globalclock)\in\IN$.
  The definition of $\update(\mathfrak{m}, [s])$ is:
  \[\update(\mathfrak{m}, [s]) = \begin{cases}
      (\integer', q' , \maxpriority - 1) &
      \text{if } \neg \integer \text{ and } \integer'\text{ hold}\\
      (\integer', q' , \min\{p(q'), h\}) &
      \text{otherwise}
    \end{cases}\]
  and the definition of $\nextmove(\mathfrak{m}, s)$ is:
    \[\nextmove(\mathfrak{m}, s) = \begin{cases}
      \sigma(s, q', \true, \false, \min\{h,p(q')\}) &
      \text{if } \neg \integer \text{ and } \integer'\text{ hold} \\
      \sigma(s, q', \false, \false, \min\{h, p(q')\}) &
      \text{otherwise.}
    \end{cases}\]

  The Mealy machine that $\fmstates$ encodes a finite-memory
  region strategy because $\sigma$ is a region strategy.
  We now briefly explain why $\fmmealy$ encodes a strategy in $\game$
  with the same behavior as $\sigma$. This implies that the encoded strategy
  is winning.

  In the product game $\widehat{\game}\times\widehat{H}$, updates
  of the DPA component are performed using the state we move into. Given that
  this is not possible in practice (we do not know in advance where we will
  be at the next step), the Mealy machine is always one step behind. This
  explains why in the evaluation of $\sigma$ used in the definition of 
  $\nextmove$, we use $q'$ rather than $q$ and the priority
  of $q'$ in the last argument of the function.
  By choice of $\sigma$, two ticks cannot occur consecutively, therefore
  at some step, $\tick$ holds if and only if $\integer$ did not hold previously
  and holds now. This justifies the two distinguished cases in the definitions
  of both $\update$ and $\nextmove$.

  Finally, it remains to explain that the priority fed to $\sigma$
  (i.e., last component in the evaluation of $\sigma$) is
  well-chosen. Whenever a tick is registered by $\widehat{H}$, the last
  component is reset to the priority of the successor state in the execution of
  $H$ following the current history.
  In our case, we cannot guess what will be this priority will be in advance.
  Instead, we set $\maxpriority - 1$ as the current lowest priority after a tick. This way,
  this greatest priority is disregarded by the $\min$ operator in the
  definition of $\nextmove$.

}
\end{proof}

\begin{remark}
  The assumption of a global clock $\globalclock$ is crucial for the
  existence of finite-memory winning strategies. In fact, if one removes
  this global clock, winning may require infinite memory and observing the
  moves~\cite{DBLP:conf/hybrid/ChatterjeeHP08}.
  Essentially, this infinite memory can be described as a simulation
  of the clock $\globalclock$.
\end{remark}

\end{document}